\documentclass[aps,pra,twocolumn,nofootinbib,superscriptaddress,10pt]{revtex4-2}

\usepackage{amssymb}        % includes amsfonts, so amsfonts is redundant
\usepackage{bm}             % bold for math italics/symbols
\usepackage{dsfont}         % alternative blackboard bold
\usepackage{mathrsfs}       % script font (\mathscr)
\usepackage{mathdots}       % ellipsis variants

\usepackage{amsthm}

\usepackage{graphicx}
\usepackage{csquotes}
\MakeOuterQuote{"}
\usepackage{mathtools,mleftright}
\mleftright
\usepackage{array}
\usepackage{tabularx}
\usepackage{booktabs}
\usepackage{cellspace}
\usepackage[colorlinks=true, allcolors=blue, breaklinks=true]{hyperref}
\hypersetup{pageanchor=false}

\newtheorem{theorem}{Theorem}
\newtheorem{lemma}{Lemma}

\newtheorem{proposition}{Proposition}

\DeclarePairedDelimiter{\ket}{\lvert}{\rangle}
\DeclarePairedDelimiter{\bra}{\langle}{\rvert}
\DeclarePairedDelimiterX{\braket}[2]{\langle}{\rangle}{#1\,\delimsize\vert\,#2}
\newcommand{\ketbra}[2]{\ket{#1}\mkern-2mu\bra{#2}}
\newcommand{\proj}[1]{\ketbra{#1}{#1}}

\DeclarePairedDelimiterX{\dket}[1]{\lvert}{\rangle}{%
  #1\delimsize\rangle\mkern-3.5mu}
\DeclarePairedDelimiterX{\dbra}[1]{\langle}{\rvert}{%
  \mkern-3.5mu\delimsize\langle #1}
\DeclarePairedDelimiterX{\dbraket}[2]{\langle}{\rangle}{%
  \mkern-3.5mu\delimsize\langle #1\,\delimsize\vert\,#2%
  \delimsize\rangle\mkern-3.5mu}
\DeclarePairedDelimiterX{\dketbra}[2]{\lvert}{\rvert}{%
  #1\delimsize\rangle\mkern-3.5mu\delimsize\rangle
  \mkern-1.5mu
  \delimsize\langle\mkern-3.5mu\delimsize\langle #2}
\newcommand{\dproj}[1]{\dketbra{#1}{#1}}

\DeclareMathOperator{\tr}{tr}

\DeclareMathOperator{\rk}{rank}

\DeclareMathOperator{\Tr}{Tr}
\newcommand{\Cl}{\mathrm{Cl}}

\newcommand{\Ha}{\mathrm{Haar}}

\newcommand{\Var}{\mathrm{Var}}

\newcommand{\urp}{\mathrm{URP}}

\newcommand{\mmod}{\!\mod} 

\newcommand{\equad}{\,\hphantom{=}\,}   % equals-width space
\newcommand{\lsp}{\mkern 2mu}           % small positive space

\newcommand{\bbC}{\mathbb{C}}
\newcommand{\bbE}{\mathbb{E}}

\newcommand{\bbZ}{\mathbb{Z}}

\newcommand{\bfI}{\mathbf{I}}

\newcommand{\bbone}{\mathds{1}}

\newcommand{\bmk}{\bm{k}}
\newcommand{\bml}{\bm{l}}

\newcommand{\caB}{\mathcal{B}}
\newcommand{\caD}{\mathcal{D}}
\newcommand{\caE}{\mathcal{E}}

\newcommand{\caH}{\mathcal{H}}

\newcommand{\caL}{\mathcal{L}}
\newcommand{\caM}{\mathcal{M}}

\newcommand{\caO}{\mathcal{O}}
\newcommand{\caP}{\mathcal{P}}

\newcommand{\caS}{\mathcal{S}}
\newcommand{\caT}{\mathcal{T}}
\newcommand{\caU}{\mathcal{U}}

\newcommand{\caW}{\mathcal{W}}

\newcommand{\rme}{\mathrm{e}}
\newcommand{\rmi}{\mathrm{i}}

\newcommand{\rmH}{\mathrm{H}}

\newcommand{\rmU}{\mathrm{U}}

\newcommand{\scrI}{\mathscr{I}}

\newcommand{\bQ}{\bar{Q}}
\newcommand{\bbfI}{\bar{\bfI}}

\newcommand{\bcaM}{\bar{\mathcal{M}}}

\newcommand{\bPhi}{\bar{\Phi}}

\newcommand{\tcaT}{\tilde{\mathcal{T}}}

\newcommand{\tpsi}{\tilde{\psi}}
\newcommand{\eref}[1]{Eq.~\textup{(\ref{#1})}}
\newcommand{\Eref}[1]{Equation~\textup{(\ref{#1})}}
\newcommand{\eqsref}[2]{Eqs.~(\ref{#1}) and (\ref{#2})}
\newcommand{\Eqsref}[2]{Equations~(\ref{#1}) and (\ref{#2})}

\newcommand{\fref}[1]{Fig.~\ref{#1}}
\newcommand{\Fref}[1]{Figure~\ref{#1}}

\newcommand{\lref}[1]{Lemma~\ref{#1}}

\newcommand{\lsref}[2]{Lemmas~\ref{#1} and \ref{#2}}

\newcommand{\pref}[1]{Proposition~\ref{#1}}
\newcommand{\Pref}[1]{Proposition~\ref{#1}}
\newcommand{\psref}[2]{Propositions~\ref{#1} and \ref{#2}}

\newcommand{\sref}[1]{Sec.~\ref{#1}}

\newcommand{\tref}[1]{Table~\ref{#1}}

\newcommand{\thref}[1]{Theorem~\ref{#1}}
\newcommand{\Thref}[1]{Theorem~\ref{#1}}
\newcommand{\thsref}[2]{Theorems~\ref{#1} and \ref{#2}}

\newcommand{\rcite}[1]{Ref.~\cite{#1}}
\newcommand{\rscite}[1]{Refs.~\cite{#1}}

\begin{document}
\title{Optimal Shadow Estimation with Minimal Measurement Settings}
\author{Zhiyao Yang}
\author{Datong Chen}
\affiliation{State Key Laboratory of Surface Physics, Department of Physics, and Center for Field Theory and Particle Physics, Fudan University, Shanghai 200433, China}

\affiliation{Institute for Nanoelectronic Devices and Quantum Computing, Fudan University, Shanghai 200433, China}

\affiliation{Shanghai Research Center for Quantum Sciences, Shanghai 201315, China}

\author{Huangjun Zhu}
\email{zhuhuangjun@fudan.edu.cn}	
\affiliation{State Key Laboratory of Surface Physics, Department of Physics, and Center for Field Theory and Particle Physics, Fudan University, Shanghai 200433, China}

\affiliation{Institute for Nanoelectronic Devices and Quantum Computing, Fudan University, Shanghai 200433, China}

\affiliation{Shanghai Research Center for Quantum Sciences, Shanghai 201315, China}

\affiliation{Hefei National Laboratory, Hefei 230088, China}

	\begin{abstract}
Shadow estimation is a powerful framework for predicting quantum properties from randomized measurements. While $3$-design protocols achieve optimal worst-case performance, the minimal number of measurement bases required for such optimality has remained open. Here we prove that $\Theta(d^2)$ measurement bases are both necessary and sufficient for worst-case optimal shadow estimation and construct an explicit basis family. In stark contrast, any state $2$-design already suffices for average-case optimality: the mean squared shadow norm of normalized observables is bounded by a universal constant, and we prove strong concentration for Haar-random states, yielding constant sample complexity for generic pure-state fidelity estimation. Easily implementable $2$-designs---from mutually unbiased bases, cyclic measurements, or shallow $\mathcal{O}(\log n)$-depth circuits---enable optimal average-case protocols with remarkably simple measurement strategies. Our results establish a fundamental complexity separation: worst-case estimation requires $\Theta(d^2)$ bases, whereas average-case performance requires only $\Theta(d)$ bases, with broad implications for quantum information theory and near-term experiments.		
	\end{abstract}
	
	\maketitle
	
	%=============================================
	% INTRODUCTION
	%=============================================
	
	%~\cite{Haah17,Cramer10}
	
\emph{Introduction}---Characterizing quantum systems efficiently is a central challenge in quantum science and technology. Full quantum state tomography demands resources that scale exponentially with system size, motivating the development of more targeted approaches. The classical shadow framework~\cite{HuangKP20} provides a powerful alternative: randomized measurements paired with classical post-processing can predict key properties of an unknown quantum state---including expectation values, fidelities, and entanglement witnesses---without full state reconstruction. Shadow estimation has since found broad applications in fidelity estimation~\cite{Elben20,YangCZ25,Eisert20,Kliesch21}, entanglement detection~\cite{Brydges19,ElbenAK20,Zhou20,Neven21,Liu23,YiLZ26}, and Hamiltonian learning~\cite{Bairey19,Anshu20,Hadfield22,Huang23}, with experimental demonstrations on photonic~\cite{Struchalin21, Zhang21}, superconducting~\cite{HuangB22,Hu25}, and trapped-ion~\cite{Brydges19,Stricker22} platforms. In practice, the number of distinct measurement bases directly determines the calibration and implementation cost of a shadow protocol, making the identification of minimal measurement schemes a pressing concern.

A shadow estimation protocol is specified by a measurement ensemble $\caE$ of weighted pure states forming a rank-one positive operator-valued measure (POVM)~\cite{HuangKP20,InnocentiLP23,Nguyen22}. Its efficiency is governed by the \emph{shadow norm} $\|O\|_\caE$ of the target observable~$O$, which characterizes the single-shot estimation variance: to estimate an expectation value within additive error~$\varepsilon$, $\caO(\|O\|_\caE^2/\varepsilon^2)$ samples suffice regardless of the unknown state. Protocols based on unitary 3-designs---such as the multiqubit Clifford group~\cite{Zhu17,Webb16}---achieve the optimal worst-case shadow norm $\|O\|_\caE^2\leq 3\|O\|_2^2$~\cite{HuangKP20}, but require large, highly structured ensembles, which are challenging to implement on near-term hardware. Simpler schemes based on state 2-designs---such as complete sets of mutually unbiased bases (MUBs)~\cite{DURT10,Ivanovic81,Wootters89} and cyclic measurements~\cite{Gonzalez25}---are far more practical, yet their worst-case squared shadow norm grows linearly in~$d$, rendering them suboptimal.

%Wang25,Wan23,

Despite rapid recent progress on shadow estimation based on various measurement ensembles~\cite{Elben23,HuangKP20,Hu23,Bu24,Hu22,Liu24,Zhang24,WangC24,Park25,Wu26,Ippoliti24,West26}, existing approaches either fail to achieve uniformly efficient estimation for general observables or rely on measurements with superpolynomially many outcomes. A fundamental question therefore remains open: \emph{What is the minimal measurement complexity required for optimal shadow estimation, and can simple measurements still be efficient in typical scenarios?}

Here we provide a complete answer, revealing a striking complexity separation. First, we prove that $\Theta(d^2)$ measurement bases are both necessary and sufficient for worst-case optimal shadow estimation, and construct an explicit basis family from phase 3-designs and MUBs. Second, we show that any state 2-design already achieves average-case
optimal performance with only $\Theta(d)$ bases: the mean squared shadow
norm over arbitrary target observables is bounded by $\left(2\sqrt{2}+1\right)\|O\|_2^2$
(\thref{thm:AvgShNormUB}). For normalized observables, this bound
evaluates to a constant independent of $d$, close to the 3-design optimum. The worst-case suboptimality of 2-designs thus vanishes in typical scenarios. Third, for fidelity estimation of Haar-random target states, we establish strong concentration of the shadow norm, yielding constant sample complexity for generic pure states even with simple 2-design measurements.

%Despite intensive recent study~\cite{HuangKP20,InnocentiLP23,Helsen23,chenZ24,Hu23,Bu24,Tran23,Li25,Nguyen22,Chen21,Elben23}

Our results reveal a fundamental $d^2$-versus-$d$ gap between worst-case and average-case measurement complexities, demonstrating that easily implementable ensembles---including complete sets of MUBs, cyclic measurements, and shallow $\caO(\log n)$-depth circuits \cite{Cleve16,Bertoni24,Hu25,Schuster25}---already realize optimal protocols for typical tasks. Meanwhile, our results highlight the foundational significance of MUBs and
symmetric informationally complete (SIC) measurements~\cite{Fuchs17,Zauner11,Renes04} in quantum learning.

	%=============================================
	% PRELIMINARIES
	%=============================================
	
\emph{Preliminaries}---Let $\caH$ be a $d$-dimensional complex Hilbert space ($d\geq 2$). Denote by $\caL(\caH)$ and $\caL^{\rmH}_0(\caH)$ the spaces of linear operators and traceless Hermitian operators on 
$\caH$, respectively;  denote by $\caD(\caH)$ the set of density operators, and by $\caP(\caH)\subset\caD(\caH)$ the subset of pure states. For a general operator $O\in \caL(\caH)$, we denote by $O_0=O-\tr(O)\bbone/d$ the traceless part of $O$, where $\bbone$ is the identity operator; for a pure state $\ket{\phi}\in\caH$, we write $\phi=\proj{\phi}$ and $\phi_0=\phi-\bbone/d$. The notation $\|\cdot\|_p$ is used for the Schatten $p$-norm and $\|\cdot\|$ for the operator norm.

Given a positive integer $t$, a state $t$-design is a weighted ensemble of pure states whose $t$-th moment reproduces the Haar average; it mimics Haar-random states up to $t$-th order statistics \cite{Zauner11,Renes04,Scott06,Ambainis07}. State 2-designs can be realized by SIC ensembles~\cite{Zauner11,Renes04,Fuchs17}, 
complete sets of MUBs~\cite{DURT10,Wootters89,Ivanovic81}, or cyclic constructions~\cite{Gonzalez25}; 
the set of multiqubit stabilizer states forms a state 
3-design~\cite{Kueng15,Zhu17,Webb16}. Unitary $t$-designs are defined analogously for ensembles of unitary operators.

A shadow estimation protocol is specified by a state ensemble $\caE=\{\ket{\phi_i},w_i\}_i$ with $w_i>0$ and ${\sum_i w_i \phi_i=\bbone/d}$, which ensures that $\{d\,w_i\phi_i\}_i$ forms a valid rank-one POVM on $\caH$. The associated measurement channel is
\begin{align}\label{eq:MeasCh}
	\caM_\caE(O) = d\sum_i w_i \tr(\phi_i O)\,\phi_i.
\end{align}
When $\caE$ is informationally complete (IC), i.e.,  $\{\phi_i\}_i$ spans $\caL(\caH)$, the channel $\caM_\caE$ is invertible. The inverse $\caM_\caE^{-1}$ acts as the reconstruction map: each snapshot $\hat{\rho}_i \coloneqq \caM_\caE^{-1}(\phi_i)$ satisfies $\bbE[\hat{\rho}_i] = \rho$, making it an unbiased estimator of $\rho$. Consequently, $\tr(O \hat{\rho}_i)$ is an unbiased estimator of $\tr(O\rho)$ for any observable $O$. 

The sample complexity of estimating $\tr(O\rho)$ to accuracy $\varepsilon$ with constant success probability scales as $\|O_0\|_\caE^2/\varepsilon^2$~\cite{HuangKP20}, where the \emph{squared shadow norm} is defined as follows. For a traceless observable $O_0\in\caL^{\rmH}_0(\caH)$ (the traceless part of $O$), the state-dependent variant is
\begin{align}\label{eq:normSD}
\|O_0\|_{\caE,\rho}^2 \coloneqq d\sum_i w_i \tr(\phi_i\rho)\bigl[\tr\bigl(O_0\,\hat{\rho}_i\bigr)\bigr]^2,
\end{align}
and the state-independent (worst-case) one is $\|O_0\|_\caE^2 \coloneqq \max_{\rho\in\caD(\caH)} \|O_0\|_{\caE,\rho}^2$. The worst-case norm provides a universal guarantee for all unknown states, while its $\rho$-dependent counterpart quantifies performance for a given target state and can be substantially smaller. Henceforth, observables are assumed traceless except in the context of fidelity estimation or otherwise stated.

A key quantity in our analysis is the \emph{normalized $t$-th frame potential}
\begin{equation}\label{eq:FPt}
	\bPhi_t(\caE) \coloneqq D_{[t]}\,\Phi_t(\caE),\;\;
	\Phi_t(\caE) \coloneqq \sum_{i,j} w_i w_j \bigl[\tr(\phi_i\phi_j)\bigr]^t,
\end{equation}
where $D_{[t]}=\binom{d+t-1}{t}$ is the dimension of the symmetric
subspace of $\caH^{\otimes t}$. It is well known that
$\bPhi_t(\caE)\geq 1$, with equality if and only if $\caE$ forms a
state $t$-design~\cite{Renes04,Scott06,Ambainis07}. In this work, the case $t=3$ plays
the central role: $\bPhi_3(\caE)$ quantifies how closely a 2-design
approximates a 3-design and governs the performance gap in shadow
estimation. For any state 2-design, its magnitude is controlled by:
\begin{proposition}\label{pro:FP3UB}
	If $\caE$ is a state 2-design in dimension~$d$, then
	\begin{align}\label{eq:FP3UB}
		\bPhi_3(\caE)
		\leq \frac{(d+2)(d^2+2d-1)}{6d^2}.
	\end{align}
	In particular, $\bPhi_3(\caE)=\caO(d)$ for any state 2-design.
\end{proposition}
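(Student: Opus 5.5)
The plan is to reduce the bound to a one-variable polynomial inequality, using only the first two moment identities of a state 2-design. Throughout, write $x_{ij}\coloneqq\tr(\phi_i\phi_j)\in[0,1]$, with $x_{ii}=1$.

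\textbf{Moment identities.} I will first record the identities available for a state 2-design.
\begin{itemize}
\item Taking the trace of $\sum_i w_i\phi_i=\bbone/d$ gives $\sum_i w_i=1$.
\item For every fixed $i$, $\sum_j w_j x_{ij}=\tr(\phi_i\bbone)/d=1/d$.
\item The 2-design property $\sum_j w_j\phi_j^{\otimes 2}=P_{\mathrm{sym}}/D_{[2]}$, where $P_{\mathrm{sym}}$ is the projector onto the symmetric subspace of $\caH^{\otimes 2}$, gives $\sum_j w_j x_{ij}^2=\tr(\phi_i^{\otimes2}P_{\mathrm{sym}})/D_{[2]}=2/(d(d+1))$ for every $i$.
\end{itemize}
These are the only inputs I need, since $\Phi_3(\caE)=\sum_i w_i\sum_j w_j x_{ij}^3$.

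\textbf{Polynomial majorant.} The idea is to dominate $x^3$ on $[0,1]$ by a quadratic $q(x)=\alpha x^2+\beta x+\gamma$ and then average using the identities above. The crude choice $x^3\le x^2$ only yields $\bPhi_3\le (d+2)/3$, which is too weak. Instead I take the quadratic that touches $x^3$ at the endpoint $x=1$, which carries the diagonal terms, and is tangent at an interior point $t\in[0,1]$. Since $(x-1)(x-t)^2\le 0$ on $[0,1]$, this choice gives
\begin{equation*}
x^3\le(2t+1)x^2-(t^2+2t)x+t^2 .
\end{equation*}

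\textbf{Averaging and optimizing.} Applying this inequality to each $x_{ij}$ and summing with weights $w_iw_j$ gives
\begin{equation*}
\Phi_3(\caE)\le \frac{2(2t+1)}{d(d+1)}-\frac{t^2+2t}{d}+t^2 .
\end{equation*}
Minimizing the right-hand side over $t$ leads to the choice $t=1/(d+1)$. A short computation then gives $\Phi_3(\caE)\le (d+3)/[d(d+1)^2]$. Multiplying by $D_{[3]}=d(d+1)(d+2)/6$ yields
\begin{equation*}
\bPhi_3(\caE)\le\frac{(d+2)(d+3)}{6(d+1)} .
\end{equation*}

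\textbf{Comparison with the stated bound.} It remains to check that this is at most the right-hand side of \eref{eq:FP3UB}. That reduces to
\begin{equation*}
\frac{2}{d+1}\le\frac{2d-1}{d^2},
\end{equation*}
which is equivalent to $2d^2\le 2d^2+d-1$ and therefore holds for $d\ge1$. The stated form is presumably what one obtains from the slightly suboptimal tangent point $t=1/d$; either choice suffices. The $\caO(d)$ claim then follows immediately.

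\textbf{Main obstacle.} The only real difficulty is finding the right majorant. A naive bound loses a factor of about $2$, so the tangency construction that makes equality possible at $x=1$ and near $x\approx 1/d$ is the crux. The remaining algebra is routine.
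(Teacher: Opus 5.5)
Your argument is correct and is essentially the paper's proof. The paper bounds $\tr(\phi_i\phi_j)\bigl[\tr(\phi_{i,0}\phi_{j,0})\bigr]^2\le\bigl[\tr(\phi_{i,0}\phi_{j,0})\bigr]^2$; since $\tr(\phi_{i,0}\phi_{j,0})=x_{ij}-1/d$, this is your majorant $(x-1)(x-t)^2\le 0$ at $t=1/d$, which gives exactly the stated bound, as you guessed. Your optimized choice $t=1/(d+1)$ is a genuine sharpening: the bound $\bPhi_3(\caE)\le (d+2)(d+3)/[6(d+1)]$ is attained by SICs, where $x_{ij}\in\{1,1/(d+1)\}$, and it matches the paper's value of $\bPhi_3(\caE_\mathrm{SIC})$, so it is the exact maximum over state 2-designs whenever a SIC exists.
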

\Pref{pro:FP3UB} and other results are proved in Supplemental Material (SM) \cite{supp}, which includes \rscite{Hoggar82,Zhu16}.

%Welch74,
	%=============================================
	% WORST-CASE OPTIMAL PROTOCOLS
	%=============================================
	%$_j,_k_j$
	
\emph{Minimal optimal protocols in the worst-case setting}---We first establish a fundamental lower bound on the worst-case shadow norm achievable with a given number of measurement outcomes.
\begin{proposition}\label{pro:ShNormLUB}
	Suppose the ensemble $\caE$ consists of $K$ pure states in $\caH$ and induces an IC-POVM. Then
	\begin{align}\label{eq:ShNormLUB}
		\max_{\psi\in\caP(\caH)} \|\psi_0\|_\caE^2 \geq\frac{(d^2-1)^2}{Kd}.
	\end{align}
\end{proposition}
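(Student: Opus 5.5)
The plan is to bound the maximum from below by evaluating the state-dependent norm at a cleverly chosen pure state, namely one of the ensemble states itself, with $\rho=\psi$. Fix $j$ and set $\psi=\phi_j$. Since $\|\psi_0\|_\caE^2\ge\|\psi_0\|_{\caE,\psi}^2$ and every term in \eref{eq:normSD} is nonnegative, keeping only the term $i=j$ gives
\begin{align}
\|\phi_{j,0}\|_\caE^2\ \ge\ d\,w_j\,\tr(\phi_j\phi_j)\bigl[\tr(\phi_{j,0}\,\caM_\caE^{-1}(\phi_j))\bigr]^2 = d\,w_j\,c_j^2,
\end{align}
where $c_j\coloneqq\tr(\phi_{j,0}\,\caM_\caE^{-1}(\phi_j))$.

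Next I simplify $c_j$. The channel satisfies $\caM_\caE(\bbone)=\bbone$, is self-adjoint with respect to the Hilbert--Schmidt inner product, and preserves the traceless subspace. Hence $\caM_\caE^{-1}(\phi_j)=\caM_\caE^{-1}(\phi_{j,0})+\bbone/d$. Since $\tr(\phi_{j,0})=0$, this gives $c_j=\tr(\phi_{j,0}\,\caM_\caE^{-1}(\phi_{j,0}))\ge 0$, because $\caM_\caE$ is positive definite.

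The key identity is that the restriction $\caM_0$ of $\caM_\caE$ to $\caL^{\rmH}_0(\caH)$ equals $d\sum_i w_i\,\dketbra{\phi_{i,0}}{\phi_{i,0}}$. This holds because $\tr(\phi_i O)=\tr(\phi_{i,0}O)$ for traceless $O$, and the $\bbone/d$ component of the output is annihilated when we project back onto the traceless subspace. Therefore
\begin{align}
\sum_j w_j c_j=\frac1d\Tr\bigl(\caM_0^{-1}\caM_0\bigr)=\frac{d^2-1}{d}.
\end{align}

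Finally I pass from a single $j$ to the average. The maximum over pure states dominates the $w_j$-weighted average over $j$ (note $\sum_j w_j=\tr(\bbone/d)=1$), so
\begin{align}
\max_{\psi}\|\psi_0\|_\caE^2\ \ge\ \sum_j w_j\, d\,w_j c_j^2\ \ge\ \frac{d}{K}\Bigl(\sum_j w_j c_j\Bigr)^2=\frac{(d^2-1)^2}{Kd}.
\end{align}
The second inequality is Cauchy--Schwarz over the $K$ indices, $(\sum_j w_jc_j)^2\le K\sum_j w_j^2c_j^2$.

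The only delicate step is the identity for $\sum_j w_jc_j$, which requires the restriction of $\caM_\caE$ to the traceless sector to be exactly the frame operator of the vectors $\phi_{i,0}$. This needs a careful check of how $\caM_\caE$ acts on $\bbone$ and on the traceless subspace. Invertibility of $\caM_0$ follows from the IC assumption.
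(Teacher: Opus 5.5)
Your proof is correct and follows the paper's argument. You evaluate at $\psi=\rho=\phi_j$, keep only the diagonal term, use the trace identity $\sum_j w_j c_j=\frac{1}{d}\Tr\bigl(\bcaM_\caE^{-1}\bcaM_\caE\bigr)=(d^2-1)/d$, and close with a convexity step. The only cosmetic difference is that the paper averages over $j$ uniformly (weights $1/K$) and then applies Jensen with weights $w_j$, whereas you average with weights $w_j$ and apply Cauchy--Schwarz over the $K$ indices; both give the same bound $(d^2-1)^2/(Kd)$.
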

This result encodes a fundamental trade-off: fewer measurement outcomes necessarily imply larger worst-case shadow norms. Achieving optimal worst-case performance---meaning $\|O\|_\caE^2\leq C\|O\|_2^2$ for a universal constant $C$---requires at least $\Omega(d^3)$ outcomes, or equivalently $\Omega(d^2)$ orthonormal bases, given that each basis corresponds to $d$ outcomes. Incidentally, when $\caE$ forms a state 2-design, the worst-case shadow norm satisfies $\|O\|_\caE^2\leq (d+1)\|O\|_2^2$~\cite{InnocentiLP23,Zhang24,WangC24}. For typical 2-designs consisting of $\caO(d^2)$ states---such as SICs and complete sets of MUBs---the lower and upper bounds match up to constant factors, demonstrating that standard 2-design measurements are suboptimal for worst-case performance. We now show that the $\Omega(d^2)$-basis lower bound is tight.
\begin{theorem}\label{thm:MMSoptWC}
	In dimension $d$, $\Theta(d^2)$ measurement bases are both necessary and sufficient for worst-case optimal shadow estimation.
\end{theorem}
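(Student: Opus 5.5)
Necessity follows from \pref{pro:ShNormLUB}. Suppose a protocol uses $M$ orthonormal bases, so that $\caE$ has $K\le Md$ outcomes, and suppose it is worst-case optimal, i.e., $\|O\|_\caE^2\le C\|O\|_2^2$ for a universal constant $C$. Apply this to $O=\psi_0$ for a pure state $\psi$, for which $\|\psi_0\|_2^2=1-1/d$. Combining with \eref{eq:ShNormLUB} gives
\begin{align}
C(1-1/d)\ \ge\ \frac{(d^2-1)^2}{Md^2}.
\end{align}
Hence $M\ge (d^2-1)(d+1)/(Cd)=\Omega(d^2)$. One subtlety must be handled: protocols may assign unequal weights to the bases. \pref{pro:ShNormLUB} counts distinct pure states, so only the count $K\le Md$ is needed, and that is fine. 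If $\caE$ is not IC, the shadow norm is not finite for some observable, so IC can be assumed without loss of generality.

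For sufficiency, I would build a basis family that forms a state 3-design, or an approximate 3-design with $\bPhi_3(\caE)=\caO(1)$, using $\caO(d^2)$ bases. The natural construction combines a complete set of $d+1$ MUBs with a "phase 3-design". Take a set of diagonal phase unitaries $D_\theta=\mathrm{diag}(\rme^{\rmi\theta_1},\dots,\rme^{\rmi\theta_d})$ whose third moments on the relevant index patterns reproduce those of uniformly random phases. For instance, choose the phases $\theta_k=2\pi f(k)/p$ with $f$ ranging over polynomials of degree at most $3$ (or over a suitable $3$-wise independent family) of size $\caO(d)$. Then form the bases $D_\theta U_m$, where $U_m$ maps the computational basis onto the $m$th MUB, uniformly weighted. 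This gives $(d+1)\times\caO(d)=\caO(d^2)$ bases. In prime-power dimensions, the known Clifford-orbit structure suggests that an exact 3-design is reachable. In general dimensions, I would instead aim for $\|O\|_\caE^2\le C\|O\|_2^2$ directly.

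The core computation bounds the worst-case shadow norm. For a 2-design, $\caM_\caE^{-1}(X)=(d+1)X-\tr(X)\bbone$ on the relevant span. The squared shadow norm is then
\begin{align}
(d+1)^2 d\sum_i w_i \tr(\phi_i\rho)\tr(\phi_i O_0)^2=(d+1)^2 d\,\tr\bigl[(\rho\otimes O_0\otimes O_0)\,Q_3(\caE)\bigr],
\end{align}
where $Q_3(\caE)=\sum_i w_i\phi_i^{\otimes3}$. Optimality therefore reduces to showing that $Q_3(\caE)$ is close to the Haar third moment $P_{\mathrm{sym}}^{(3)}/D_{[3]}$ in the operator-norm sense. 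Equality would give the factor $3$ up to lower-order terms; a constant-factor bound in the positive-semidefinite order suffices. Averaging over the phases kills every matrix element of $Q_3$ whose index multiset in the ket differs from that in the bra. Within each MUB, the surviving terms are then fixed by the unbiasedness of the basis vectors; it is the $d+1$ MUBs that provide enough averaging to match the Haar moment.

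I expect the main obstacle to be this last step: verifying that the phase-twirled MUB ensemble is an exact or constant-approximate state 3-design, including the diagonal "collision" terms such as $i=j\ne k$. If exactness fails, I would instead bound $\tr[(\rho\otimes O_0\otimes O_0)Q_3]$ by a constant times $\|O_0\|_2^2/d^3$. To do this I would decompose by index patterns and apply Cauchy--Schwarz to each pattern class separately.
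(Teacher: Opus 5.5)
Your necessity argument is correct and coincides with the paper's: \pref{pro:ShNormLUB} with $K\le Md$, applied to $O=\psi_0$, forces $M\ge (d^2-1)(d+1)/(Cd)$. The sufficiency half has a genuine gap, because the ensemble you build is not worst-case optimal.

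Every state in your ensemble has the form $D_\theta\ket{e}$, with $D_\theta$ diagonal in one fixed basis. For each of the $d$ MUBs other than the computational basis, $\ket{e}$ is unbiased, so $\{D_\theta\ket{e}\}_\theta$ is just a sample of the uniform-random-phase ensemble $\caT$. After the twirl, all these MUBs therefore produce the \emph{same} third moment $d^{-3}\sum_{[\bmk]}|[\bmk]|\proj{[\bmk]}$ (\lref{lem:URPAVG}), while the computational basis is untouched. The MUBs thus supply no extra averaging, and the claim that they match the Haar moment is false. For $d\ge 3$, the coefficients on distinct versus two-equal index multisets are in ratio $2:1$ instead of $1:1$. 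The coefficient of $\proj{kkk}$ is about $d/6$ times the Haar value.

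The concrete failure is as follows. Every phase state satisfies $\tr(\phi O)=0$ for traceless $O$ diagonal in the reference basis. Such $O$ are therefore seen only through the computational basis, to which your uniform weighting gives weight $1/(d+1)$. Your ensemble is a 2-design with $\caM_\caE^{-1}(O)=(d+1)O$. For $O=\proj{0}-\proj{1}$ and $\rho=\proj{0}$ one finds $\|O\|_\caE^2=d+1=\tfrac{d+1}{2}\|O\|_2^2$, which is the generic 2-design scaling.

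Two side issues also need attention. Complete sets of MUBs are known only in prime-power dimensions, but the theorem is for every $d$. Exact 3-designs from $\caO(d^2)$ bases are an open problem; multiqudit stabilizer states are not even a 3-design for odd $p$.

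The missing idea is that the traceless diagonal part of each reference basis must be probed with constant weight. The paper does this by taking phase 3-designs with respect to $N\ge 2$ \emph{different} mutually unbiased reference bases; two MUBs exist in every dimension. It keeps $N=\caO(1)$ so that the total stays at most $7Np^2=\Theta(d^2)$ bases. A phase 3-design relative to a single basis already needs $\Theta(d^3)$ states, so one such design per each of $d+1$ MUBs would be too costly.

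Unbiasedness makes $O_{\caB_j}$ visible to the other $N-1$ phase designs. Hence $\caM_{\caE_N}^{-1}$ acts as $d$ on $O_\perp$ and as $Nd/(N-1)$ on each $O_{\caB_j}$ (\pref{pro:caFA}). For $N<d+1$ the ensemble is not a 2-design, so your $(d+1)O$ reduction is replaced by this explicit inverse. The third-moment step is \lref{lem:NRF}, $d^3\tr[Q_3(\caE_N)(\rho\otimes O\otimes O)]\le(3+1/N)\|O\|_2^2$, which gives $\|O\|_{\caE_N}^2\le\frac{N(3N+1)}{(N-1)^2}\|O\|_2^2\le 14\|O\|_2^2$.

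A positive-semidefinite comparison $Q_3\le c\,P_{[3]}/D_{[3]}$ does not directly yield such a bound, because $\rho\otimes O\otimes O$ is not positive. This is why \lref{lem:PI3A} is needed to control the sign-indefinite $|[\bmk]|=3$ terms via $2\times 2$ principal minors of $\rho$.

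Your single-basis setup can in fact be repaired by reweighting alone, provided the twirled states form a genuine phase 3-design. Give the computational basis constant weight $w_0$. Then $\caM^{-1}$ acts as $1/w_0$ on diagonal and as $d/(1-w_0)$ on off-diagonal traceless parts. The same analysis (\lref{lem:PI3A} with $O_\caB=0$) gives $\|O\|_\caE^2\le\max\{1/w_0,\,3/(1-w_0)\}\|O\|_2^2$, which equals $4\|O\|_2^2$ at $w_0=1/4$. This bound is not obtained via closeness to a 3-design.
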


Our explicit construction achieving this scaling rests on two ingredients: phase 3-designs and MUBs. A pure state $\ket{\phi}\in \caH$ is a phase state with respect to the computational basis $\{\ket{k}\}_k$ if all its entries have the same magnitude. A \emph{phase $t$-design}  is a finite ensemble of phase states whose $t$-th moment matches that of the uniform random-phase (URP) ensemble
\begin{equation}\label{eq:URPmain}
	\caT = \left\{\frac{1}{\sqrt{d}}\sum_{k=0}^{d-1}\rme^{\rmi\varphi_k}\ket{k}\colon \rme^{\rmi\varphi_k}\sim \rmU(1)\right\},
\end{equation}
which geometrically forms a $d$-dimensional torus. Intuitively, a phase $t$-design captures the same statistical correlations as fully random phases up to order~$t$, using only finitely many states. Mathematically, it is equivalent to a projective toric $t$-design studied in \rcite{Iosue24}. 
As shown in the End Matter, a phase 2-design can be constructed from $3p$ bases and a phase 3-design from $7p^2$ bases, where $p$ is any prime satisfying $p \geq \max\{d,5\}$ and can be chosen such that $p \leq \max\{2d,5\}$. When $d$ is not divisible by~$3$, the number of bases required for a phase 3-design can be reduced to~$3p^2$.
Since any phase $t$-design (with $t\geq 1$) is also a state 1-design, it gives rise to a valid POVM.

The construction proceeds as follows. Consider $N\geq 2$ MUBs $\caB_1,\caB_2,\ldots,\caB_N$. For each basis $\caB_j$, we construct a phase 3-design $\caE_{\caB_j}$ with respect to $\caB_j$. The combined ensemble $\caE_N=\bigsqcup_{j=1}^N \caE_{\caB_j}/N$ assigns uniform probability across all sub-ensembles and comprises at most $7Np^2$ bases in total. MUBs ensure that the combined ensemble probes complementary degrees of freedom: while a single phase 3-design can only access coherence relative to its reference basis, MUBs guarantee that no observable direction is left unexplored.
\begin{theorem}\label{thm:MMSoptWCMUB}
Suppose $\caE_N$ is the ensemble constructed from $N$ phase 3-design ensembles based on MUBs
with $2\leq N\leq d+1$, and $O\in\caL^{\rmH}_0(\caH)$. Then
	\begin{align}\label{eq:MMSoptWCMUB}
		\|O\|_{\caE_N}^2 \leq \frac{3N^2+N}{(N-1)^2}\|O\|_2^2 \leq 14\|O\|_2^2.
	\end{align}
\end{theorem}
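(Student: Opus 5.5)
The plan is to compute the measurement channel $\caM_{\caE_N}$ explicitly, invert it, and then bound the third-moment expression that defines the shadow norm.

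\textbf{Third moment of one phase design.} For a single phase 3-design $\caE_\caB$ relative to a basis $\caB=\{\ket{k}\}$, its first three moments coincide with those of the URP ensemble $\caT$. So I would work with $\caT$ directly. For a random phase state $\phi$ with phases $\rme^{\rmi\varphi_k}$, the entries of $\phi$ are $\rme^{\rmi(\varphi_k-\varphi_l)}/d$. Moment computations then reduce to counting index pairings, since $\bbE[\rme^{\rmi\sum(\pm\varphi)}]$ is nonzero only when the phases cancel. For the first moment, let $\Delta_\caB$ be the dephasing map onto the diagonal in basis $\caB$. Then
\begin{align}
\caM_{\caE_\caB}(O)=\frac{1}{d}\bigl[O-\Delta_\caB(O)\bigr]+\frac{\tr O}{d}\,\bbone .
\end{align}
In words, off-diagonal (coherence) components are multiplied by $1/d$ and traceless diagonal components are killed. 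Set $P_j = \mathrm{id}-\Delta_{\caB_j}$, restricted to traceless operators.

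\textbf{Inverse channel via MUBs.} On traceless operators, $\caM_{\caE_N}=\frac{1}{Nd}\sum_{j=1}^N P_j$. For MUBs, the traceless diagonal subspaces $\caD_j$ of the $\caB_j$ are mutually orthogonal, each of dimension $d-1$. Hence $\sum_j P_j = N\,\mathrm{id}-\sum_j \Pi_{\caD_j}$, where $\Pi_{\caD_j}$ is the orthogonal projector onto $\caD_j$. This operator acts as $N-1$ on $\bigoplus_j\caD_j$ and as $N$ on the complement. It is invertible when $N\ge2$, which is where the hypothesis enters. Therefore
\begin{align}
\caM_{\caE_N}^{-1}(O_0)=Nd\Bigl[\tfrac{1}{N-1}\Pi O_0+\tfrac{1}{N}(\mathrm{id}-\Pi)O_0\Bigr],
\end{align}
with $\Pi=\sum_j\Pi_{\caD_j}$. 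Since $\caM_{\caE_N}^{-1}$ is self-adjoint, $\tr(O_0\hat\rho_i)=\tr(\phi_i\tilde O)$ with $\tilde O=\caM_{\caE_N}^{-1}(O_0)$. We also have $\|\tilde O\|_2\le \frac{Nd}{N-1}\|O_0\|_2$.

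\textbf{Bounding the variance.} The shadow norm becomes
\begin{align}
\|O\|^2_{\caE_N,\rho}=\frac{d}{N}\sum_j\bbE_{\phi\sim\caT_j}\bigl[\tr(\phi\rho)\tr(\phi\tilde O)^2\bigr].
\end{align}
This is linear in $\rho$, so it suffices to bound the operator norm of
\begin{align}
X_j=\bbE_{\phi}\bigl[\tr(\phi\tilde O)^2\,\phi\bigr].
\end{align}
I would compute $X_j$ via the phase third moment. For phase states, $\tr(\phi A)=\frac1d\sum_{k,l}\rme^{\rmi(\varphi_l-\varphi_k)}A_{kl}$, so only the off-diagonal part $A'=P_jA$ fluctuates, while the diagonal contributes the constant $\tr A/d$. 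Expanding gives a sum over phase-pairings. The resulting terms should be of the form $\frac{1}{d^3}\bigl(\|A'\|_2^2\bbone + 2A'^2$-type terms $+$ diagonal corrections$\bigr)$. These give an operator-norm bound of roughly $\frac{3}{d^3}\|P_j\tilde O\|_2^2$ plus terms involving $\tr(\tilde O)$ and the $\caB_j$-diagonal part of $\tilde O$.

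Summing over $j$ uses $\sum_j\|P_j\tilde O\|_2^2\le N\|\tilde O\|_2^2$ together with the decomposition from the MUB step. This should give a bound of the form $\frac{d}{N}\cdot\frac{3N+\text{const}}{d^3}\cdot\frac{N^2d^2}{(N-1)^2}\|O_0\|_2^2$, and hence the target $\frac{3N^2+N}{(N-1)^2}\|O\|_2^2$. The final step is elementary: $\frac{3N^2+N}{(N-1)^2}$ is decreasing in $N$, and at $N=2$ it equals $14$.

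\textbf{Main obstacle.} The main obstacle is the exact third-moment computation. Coincident indices (diagonal pairings where $k=l$) break the naive Wick-type counting. On top of that, the cross terms between $\Pi\tilde O$ and $(\mathrm{id}-\Pi)\tilde O$ have to be controlled so that the constant comes out exactly as $3N^2+N$. I would first do the computation for a single rank-one $\rho$ in the basis $\caB_j$, then bound the resulting diagonal corrections crudely by $\|\tilde O\|_2^2/d^3$. That crude bound is what produces the extra $N$ in $3N^2+N$.
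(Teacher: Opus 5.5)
Your route works, and it differs from the paper's at the key step. Your inverse-channel computation matches the paper's \pref{pro:caFA}. After that, the paper works with the full third moment operator. It writes $d^3Q_3(\caE_N)$ via \lref{lem:URPAVG} as $6P_{[3]}$ minus $\caB_j$-diagonal corrections. It then proves the observable-independent bound $d^3\tr[Q_3(\caE_N)(\rho\otimes A\otimes A)]\le(3+1/N)\|A\|_2^2$ for every traceless $A$ (\lref{lem:NRF}; the negative terms $\Pi_{\caB_j}$ are controlled by positivity of $2\times2$ principal minors of $\rho$ in \lref{lem:PI3A}). Only then does it insert $A=\caM_{\caE_N}^{-1}(O)$, using $\|A\|_2^2\le\frac{N^2d^2}{(N-1)^2}\|O\|_2^2$. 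The $+N$ in $3N^2+N$ is the term $\frac1N\sum_j\|A_{\caB_j}\|_2^2$ from that lemma. Your observation is that a $\caB_j$-phase state sees only $F_j:=P_j\tilde O$, because $\tr\tilde O=0$. This couples the two steps: sub-ensemble $j$ never sees the amplified component $\frac{Nd}{N-1}O_{\caB_j}$. The paper's argument is modular, with one bound for all $A$; yours is sharper.

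As written, though, the central estimate is only guessed ("should be", "roughly"), so you still have to carry it out. It is cleaner than you expect: there are no trace terms, diagonal corrections, or cross terms to control. Counting phase pairings, using that $F_j$ has zero $\caB_j$-diagonal, gives exactly
\begin{equation*}
d^3X_j=\|F_j\|_2^2\,\bbone+2\bigl[F_j^2-\Delta_{\caB_j}(F_j^2)\bigr].
\end{equation*}
This is also what the paper's unbounded expression $2\tr(\rho A^2)+\|A\|_2^2-5\tr(\rho A_{\caB_j}^2)-\Pi_{\caB_j}(\rho,A)$ reduces to, since all its $A_{\caB_j}$-dependence cancels. Because $\Delta_{\caB_j}(F_j^2)\ge0$, you get $\|X_j\|\le(\|F_j\|_2^2+2\|F_j\|^2)/d^3\le3\|F_j\|_2^2/d^3$.

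From there, your crude estimate $\sum_j\|F_j\|_2^2\le N\|\tilde O\|_2^2$ already gives $\frac{3N^2}{(N-1)^2}\|O\|_2^2$. The exact value $\sum_j\|F_j\|_2^2=Nd^2\|O_\perp\|_2^2+\frac{N^2d^2}{N-1}\sum_j\|O_{\caB_j}\|_2^2$ gives $\frac{3N}{N-1}\|O\|_2^2$, which is $6$ instead of $14$ at $N=2$. Either implies the theorem. So the $+N$ is an artifact of the paper's lossier estimate; your diagnosis that it comes from diagonal corrections you must bound is mistaken, and your argument does not need to reproduce it.

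One caution about working with a rank-one $\rho$ in the basis $\caB_j$. Do not evaluate $\tr(\rho X_j)$ only at basis states of $\caB_j$. Those entries all equal $\|F_j\|_2^2/d^3$ and miss the term $2[F_j^2-\Delta_{\caB_j}(F_j^2)]$, which is off-diagonal in $\caB_j$ and whose top eigenvector is generally not a $\caB_j$ basis vector. You need the largest eigenvalue of the full $X_j$.
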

Already for $N=2$ MUBs, we have $\|O\|_{\caE_2}^2 \leq 14\|O\|_2^2$, achieving constant worst-case shadow norm with $\Theta(d^2)$ bases---matching the lower bound from \pref{pro:ShNormLUB} up to a constant factor. The bound improves monotonically with $N$: for $N=d+1$ (a complete set of MUBs), the prefactor reads $3+7/d+4/d^2$, which is quite close to the optimal value of~$3$ achieved by state 3-designs.
In practice, $N=2$ or $3$ MUBs strike a favorable balance between measurement overhead and shadow norm, as adding MUBs rapidly decreases the prefactor at first but yields diminishing returns thereafter. The ensemble $\caE_N$ is not a state 2-design unless $N=d+1$; nevertheless, the measurement channel $\caM_{\caE_N}$ and reconstruction map $\caM_{\caE_N}^{-1}$ both admit simple closed forms (see the End Matter), enabling straightforward construction of unbiased estimators.

\begin{figure}[t]
\centering
\includegraphics{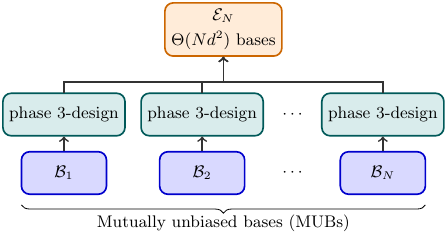}  
\vspace{-5pt} 
\caption{Construction of the optimal measurement ensemble $\caE_N$ for worst-case shadow estimation from mutually unbiased bases (MUBs) and phase 3-designs. Each phase 3-design is constructed from $\Theta(d^2)$ bases. The ensemble $\caE_N$ is composed of $\Theta(d^2)$ bases whenever $N=\caO(1)$ but can achieve optimal scaling in the worst-case shadow norm as long as $N\geq 2$. }  
\label{fig:concept}
\end{figure}

	%=============================================
	% AVERAGE-CASE PERFORMANCE
	%=============================================

    \emph{Minimal optimal protocols for average performance}---We now turn to average performance and show that far fewer measurement bases suffice. We call a protocol average-case optimal if the mean squared shadow norm over a unitary orbit---observables sharing the spectrum of $O$ but with arbitrary eigenbases---is bounded by $c\|O\|_2^2$ for a dimension-independent constant $c$.

For $O\in\caL(\caH)$ and a unitary ensemble $\caU$ on $\rmU(\caH)$, define the orbit
\begin{align}
	\Xi(O,\caU)\coloneqq \left\{UOU^\dagger \mid U\sim\caU\right\}.
\end{align}
Let $\|\Xi(O,\caU)\|_\caE^2$ denote the mean squared shadow norm averaged over $\caU$, and $\|\Xi(O,\caU)\|_{\caE,\rho}^2$ its state-dependent counterpart. Note that $\caU$ defines only the observable class over which performance is averaged; it plays no role in the measurement protocol. For a unitary 2-design $\caU$ and a state 2-design $\caE$~\cite{InnocentiLP23,chenZ24}, we have
\begin{align}\label{eq:SDopt}
	\|\Xi(O,\caU)\|_{\caE,\rho}^2 = \frac{d+1}{d}\|O\|_2^2,
\end{align}
so any state 2-design is already optimal in the state-dependent setting. However, this is of limited practical interest because $\rho$ is usually unknown. The state-independent norm $\|\Xi(O,\caU)\|_\caE^2$, providing a universal guarantee regardless of $\rho$, can be much larger; the following theorem bounds this gap.
\begin{theorem}\label{thm:AvgShNormUB}
	Suppose $\caE$ is a state $2$-design, $O\in\caL^{\rmH}_0(\caH)$, and $\caU$ is a unitary $4$-design. Then
	\begin{align}
		\|\Xi(O,\caU)\|_\caE^2 &\le \left(1 + \sqrt{\frac{24}{d}\big[\bPhi_3(\caE){-}1\big] + 4}\,\right)\|O\|_2^2 \nonumber\\
		&\le \left(2\sqrt{2}+1\right)\|O\|_2^2. \label{eq:AvgShNormUB}
	\end{align}
\end{theorem}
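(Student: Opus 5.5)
For a state 2-design the measurement channel is the depolarizing map $\caM_\caE(X)=\frac{X+\tr(X)\bbone}{d+1}$. On traceless operators its inverse is therefore $\caM_\caE^{-1}(O)=(d+1)O$, which gives $\tr(O\hat\rho_i)=(d+1)\tr(O\phi_i)$. Substituting into \eref{eq:normSD}, the state-dependent norm becomes
\[
\|O\|_{\caE,\rho}^2=\tr(\rho A_O),\qquad
A_O\coloneqq d(d+1)^2\sum_i w_i\,[\tr(O\phi_i)]^2\phi_i .
\]
The worst case over $\rho$ is $\|A_O\|$. The quantity to bound is thus $\bbE_U\|A_{UOU^\dagger}\|$.

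I will split $A_O$ into its mean and a fluctuation:
\[
A_O=\frac{\tr A_O}{d}\bbone+(A_O)_0,\qquad \|A_O\|\le \frac{\tr A_O}{d}+\|(A_O)_0\|\le \frac{\tr A_O}{d}+\|(A_O)_0\|_2 .
\]

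\emph{First term.} The 2-design property gives $\sum_i w_i\tr(O\phi_i)^2=\tr(O^2)/[d(d+1)]$ for traceless $O$. Hence $\tr A_O/d=\frac{d+1}{d}\|O\|_2^2$ exactly, independent of $U$. This reproduces \eref{eq:SDopt} and supplies the leading "$1$" term (up to $1/d$, which must be absorbed).

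\emph{Second term.} I will use Jensen's inequality,
\[
\bbE_U\|(A_{UOU^\dagger})_0\|_2\le \sqrt{\bbE_U\|(A_{UOU^\dagger})_0\|_2^2},
\]
and write $\|(A_O)_0\|_2^2=\tr(A_O^2)-(\tr A_O)^2/d$. Then
\[
\tr A_O^2=d^2(d+1)^4\sum_{i,j}w_iw_j\,\tr(O\phi_i)^2\tr(O\phi_j)^2\,\tr(\phi_i\phi_j).
\]
This expression is of degree 4 in $O$, so averaging over the unitary 4-design $\caU$ equals the Haar average. That average is
\[
\bbE_U\bigl[\tr(UOU^\dagger\phi_i)^2\tr(UOU^\dagger\phi_j)^2\bigr]=\tr\bigl[\bbE_U (UOU^\dagger)^{\otimes4}\,\phi_i^{\otimes2}\otimes\phi_j^{\otimes2}\bigr].
\]
For two pure states, the Weingarten sum reduces to a polynomial in $s=\tr(\phi_i\phi_j)$ of degree at most 2, with coefficients depending on $\tr O^2$, $\tr O^3$, $\tr O^4$ and $(\tr O^2)^2$. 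I would obtain this polynomial by twirling $O^{\otimes4}$ onto the permutation operators and evaluating them on $\phi_i^{\otimes2}\otimes\phi_j^{\otimes2}$.

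Multiplying by the extra factor $s$ gives a polynomial in $s$ of degree 3. Summing over $i,j$ with weights requires only $\sum w_iw_js^t$ for $t\le3$. The 2-design property fixes the cases $t\le2$ exactly, and the case $t=3$ is $\Phi_3(\caE)=\bPhi_3(\caE)/D_{[3]}$. So after subtracting $(\tr A_O)^2/d$ the result is an expression of the form
\[
\bbE_U\|(A_{UOU^\dagger})_0\|_2^2=\bigl(\alpha+\beta[\bPhi_3(\caE)-1]\bigr)\|O\|_2^4 .
\]
Here the coefficients have been bounded using $|\tr O^3|\le\|O\|_2^3$ and $\tr O^4\le\|O\|_2^4$. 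The target is $\alpha\le 4$ and $\beta\le 24/d$; the $+1/d$ from the first term should be absorbed into these slack constants. As a sanity check, when $\caE$ is a 3-design ($\bPhi_3=1$) the bound must be consistent with $3\|O\|_2^2$, which is reassuring.

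The second inequality in \eref{eq:AvgShNormUB} then follows from \pref{pro:FP3UB}. That result gives $\bPhi_3-1\le \frac{(d+2)(d^2+2d-1)}{6d^2}-1$, so $\frac{24}{d}(\bPhi_3-1)\le 4$ for all $d\ge2$, once the elementary cubic inequality $4(d+2)(d^2+2d-1)\le 4d^3+24d^2+\dots$ has been checked. Inserting this yields $1+\sqrt8=1+2\sqrt2$.

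The main obstacle is the degree-4 Weingarten computation and the careful bookkeeping of the coefficients in $s$. In particular, the dominant contribution $\propto\Phi_3$ carries the prefactor $d^2(d+1)^4/D_{[3]}$ times a Haar coefficient of order $d^{-4}$, and this product must collapse to $O(1/d)$. To organize it I would first treat rank-one-like test cases, $O=\phi-\bbone/d$ and $O=Z$-type, to fix the constants. I would also bound the $\tr O^3$ and $\tr O^4$ terms crudely, so as to land on the clean constants $4$ and $24/d$.
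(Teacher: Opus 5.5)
Your plan is correct and is essentially the paper's proof: write the shadow norm as $\|A_O\|$, split off the trace part $\frac{d+1}{d}\|O\|_2^2$, control the fluctuation through $\tr(A_O^2)$ with Jensen, evaluate the fourth moment over the 4-design so that only $\Phi_1,\Phi_2,\Phi_3$ survive (the $\tr O^3$ terms in fact vanish, since they always come multiplied by $\tr O=0$), and finish with \pref{pro:FP3UB}. The only difference is that the paper uses the sharper largest-eigenvalue bound $\|(A_O)_0\|\le\sqrt{(d-1)/d}\,\|(A_O)_0\|_2$, whereas you use $\|(A_O)_0\|\le\|(A_O)_0\|_2$; this cruder bound still suffices, because with $\|O\|_4^4\le\frac{d-1}{d}\|O\|_2^4$ the coefficients come out as $\alpha\le\frac{4(d+1)^2(d-2)}{d(d+2)^2}\le 4-\frac{8}{d}$ and $\beta\le\frac{24}{d}$, which leaves enough room to absorb the extra $1/d$.
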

The 4-design assumption on $\caU$ (satisfied, e.g., by the Haar measure) ensures sufficient moment control over the orbit.  The parameter $\bPhi_3$ quantifies deviation from a 3-design [$\bPhi_3 = \caO(d)$ for any state 2-design by \pref{pro:FP3UB}]: at $\bPhi_3=1$ the bound recovers the 3-design optimum $3\|O\|_2^2$, while even the worst 2-design incurs at most a constant-factor overhead.

State 2-designs can be constructed from $d^2+\caO\left(d^{1.525}\right)$ states~\cite{Jasper25} or $\caO(d)$ orthonormal bases~\cite{Roy07,Gross07}. Explicitly, mixing a fixed basis~$\caB$ [with probability $1/(d+1)$] with a phase 2-design over~$\caB$ [with probability $d/(d+1)$] yields a valid state 2-design. Meanwhile, a phase 2-design requires at most $6d$ bases (\pref{pro:Phase2design} in the End Matter). Conversely, at least $d^2$ states or $d+1$ bases are necessary to construct an IC ensemble and guarantee finite shadow norms for all observables in $\Xi(O,\caU)$, assuming that the orbit spans $\caL^{\rmH}_0(\caH)$.

\begin{theorem}\label{thm:MMSoptAC}
	In dimension $d$, $\Theta(d)$ measurement bases---equivalently $\Theta(d^2)$ measurement outcomes---are necessary and sufficient for average-case optimal shadow estimation.
\end{theorem}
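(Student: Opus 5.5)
The proof has two directions, sufficiency and necessity, and both rest on results stated earlier.

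For sufficiency, we exhibit a state 2-design built from $\caO(d)$ orthonormal bases and then invoke \thref{thm:AvgShNormUB}. Fix a basis $\caB$ and let $p$ be a prime with $\max\{d,5\}\le p\le \max\{2d,5\}$. \pref{pro:Phase2design} in the End Matter gives a phase 2-design over $\caB$ using at most $3p\le 6d$ bases. We mix $\caB$ with probability $1/(d+1)$ and this phase 2-design with probability $d/(d+1)$.

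To confirm that the mixture is a state 2-design, we compare second moments in the basis $\caB$. Write $\ket{\phi}=d^{-1/2}\sum_k \rme^{\rmi\varphi_k}\ket{k}$. The phase 2-design reproduces the URP second moment. Under URP phases, the only surviving entries of $\phi^{\otimes 2}$ are the $\ket{kk}\bra{kk}$ terms, each with weight $1/d^2$, and the $\ket{kl}\bra{kl}$ and $\ket{kl}\bra{lk}$ terms with $k\ne l$.

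Adding the basis part contributes $\sum_k \ket{kk}\bra{kk}/d$, weighted by $1/(d+1)$. The combined coefficients then become uniform and equal to $2/[d(d+1)]$ times the symmetric projector, which is the Haar value.

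With a state 2-design in hand, \thref{thm:AvgShNormUB} gives $\|\Xi(O,\caU)\|_\caE^2\le(2\sqrt2+1)\|O\|_2^2$ for every traceless $O$ and every unitary 4-design $\caU$. This is exactly average-case optimality, and it uses at most $6d+1=\caO(d)$ bases. Each basis has $d$ outcomes, so there are $\caO(d^2)$ outcomes in total.

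For necessity, we show that any ensemble achieving a finite average must be informationally complete, at least on the traceless sector. Suppose $\caE$ is not IC on $\caL^{\rmH}_0(\caH)$. Then some nonzero traceless Hermitian operator $X$ satisfies $\tr(\phi_iX)=0$ for all $i$.

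We assume, as in the text, that the orbit $\Xi(O,\caU)$ spans $\caL^{\rmH}_0(\caH)$. Then some orbit element $UOU^\dagger$ has a nonzero component along $X$. That observable cannot be estimated without bias, so its shadow norm is infinite. It occurs with positive weight in a finite design, or on a set of positive measure for Haar, and hence the average diverges.

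The rigorous step is to formalize "infinite shadow norm." We do this by defining the shadow norm through $\caM_\caE^{-1}$ on its range and assigning $+\infty$ whenever $O$ lies outside the span of $\{\phi_i\}$. Under that convention, IC on the traceless sector, together with the identity (coming from the POVM condition $\sum_i w_i\phi_i=\bbone/d$), forces $\operatorname{span}\{\phi_i\}=\caL(\caH)$. This gives $K\ge d^2$ outcomes, so $K/d\ge d$ bases.

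A sharper count uses orthonormal bases directly. Each basis spans at most $d-1$ traceless dimensions, since its projectors sum to $\bbone$, and $d^2-1$ such dimensions are needed. Hence at least $d+1$ bases are required.

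Combining both directions gives $\Theta(d)$ bases, equivalently $\Theta(d^2)$ outcomes. The main obstacle is the rigor of the necessity step: justifying that a finite average forces informational completeness. Sufficiency is a direct assembly of \thref{thm:AvgShNormUB}, \pref{pro:Phase2design}, and the mixing construction.
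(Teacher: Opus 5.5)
Your proposal is correct and takes essentially the same route as the paper. For sufficiency, you mix a fixed basis (weight $1/(d+1)$) with a phase 2-design over it and invoke \thref{thm:AvgShNormUB}; for necessity, you use the informational-completeness count ($d^2$ outcomes, hence $d+1$ bases), assuming the orbit $\Xi(O,\caU)$ spans $\caL^{\rmH}_0(\caH)$. One harmless slip: a phase 2-design only needs $p\ge\max\{d,3\}$ (the bound $p\ge 5$ is for phase 3-designs), and with your choice, $3p\le 6d$ fails at $d=2$, which does not affect the $\Theta(d)$ scaling.
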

The contrast is striking: average-case optimal shadow estimation requires only $\Theta(d)$ bases, versus $\Theta(d^2)$ in the worst case---a fundamental quadratic gap. Notably, ensembles as simple as complete sets of MUBs \cite{DURT10,Wootters89,Ivanovic81}, cyclic measurements~\cite{Gonzalez25}, or shallow $\caO(\log n)$-depth circuits~\cite{Cleve16} already form state 2-designs and thus achieve average-case optimal performance, enabling practical protocols with unexpectedly economical measurement strategies. Moreover, $d+1$ bases suffice to realize average-case optimal shadow estimation whenever $d$ is a prime power, since a complete set of MUBs exists in every prime-power dimension \cite{DURT10,Wootters89,Ivanovic81}. Furthermore, if a SIC exists in every dimension---as conjectured and supported by strong evidence \cite{Fuchs17,Zauner11,Renes04,Scott10}---then $d^2$ outcomes suffice to realize a minimal optimal protocol.

	%=============================================
	% FIDELITY ESTIMATION

\emph{Fidelity estimation}---Predicting the fidelity $\tr(\rho\psi)$
between an unknown state $\rho$ and a known target state $\psi$ is
among the most important applications of shadow estimation, central to
state verification, benchmarking, and
certification~\cite{Eisert20,Kliesch21}. Here the observable is the
projector~$\psi$, so sample complexity is governed by
$\|\psi_0\|_\caE^2$ with $\psi_0=\psi-\bbone/d$.
\Thref{thm:AvgShNormUB} immediately bounds the mean squared shadow
norm for Haar-random target states. The following proposition further
establishes a large-deviation guarantee, whose proof requires an
independent and considerably more involved argument (see SM \sref{sup:avg_norm_up}).
\begin{proposition}\label{pro:ShNormFidLD}
	Suppose $\caE$ is a state $2$-design, $\psi$ is a Haar-random pure
	state, and $k>0$. Then
	\begin{gather}
		\underset{\psi\sim\Ha}{\bbE}\left\|\psi_0\right\|_\caE^2
		\leq\left(2\sqrt{2}+1\right), \label{eq:ShNormFidUB} \\
		\Pr\left\{\|\psi_0\|_\caE^2
		\ge\sqrt{48[1+k\xi(\caE)]}-1\right\}
		\le\frac{1}{(d+6)^3k^2}, \label{eq:ShNormFidLD}
	\end{gather}
	where
	\begin{align}
		\xi(\caE)\coloneqq\sqrt{102[\bPhi_3(\caE)-1]+6}<\sqrt{17d}. \label{eq:Xi}
	\end{align}
\end{proposition}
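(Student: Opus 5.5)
The plan is to write the state-dependent squared shadow norm for a 2-design explicitly and then control its maximum over $\rho$ through a moment argument. For a state 2-design, $\caM_\caE(X)=(X+\tr(X)\bbone)/(d+1)$, so $\caM_\caE^{-1}(\phi_i)=(d+1)\phi_i-\bbone$. For traceless $O$ this gives $\tr(O\hat\rho_i)=(d+1)\tr(O\phi_i)$, and hence
\begin{align*}
\|O\|_{\caE,\rho}^2=(d+1)^2 d\sum_i w_i\tr(\phi_i\rho)\tr(\phi_iO)^2=\tr(\rho A_O),
\end{align*}
with $A_O\coloneqq d(d+1)^2\sum_i w_i\tr(\phi_iO)^2\phi_i$. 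The worst case is therefore $\|O\|_\caE^2=\|A_O\|$, because the maximum over $\rho$ is attained at the top eigenvector. Using the 2-design property, $\tr A_O=d(d+1)^2\sum_i w_i\tr(\phi_iO)^2=(d+1)\|O\|_2^2$.

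Specialize to $O=\psi_0$. Then $\tr(\phi_i\psi_0)=\tr(\phi_i\psi)-1/d$, and $A_{\psi_0}$ is a polynomial of degree $2$ in $\psi$. I will split it as $A=\bbE_\psi A+(A-\bbE_\psi A)$. By unitary invariance of the Haar measure, $\bbE_\psi A$ is proportional to $\bbone$, namely $\bbE_\psi A=\frac{d+1}{d}\|\psi_0\|_2^2\,\bbone=\frac{(d+1)(d-1)}{d^2}\bbone$, which has norm at most $1$. The fluctuation is controlled through $\|A\|\le\|\bbE A\|+\|A-\bbE A\|_2$.

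\textbf{Mean bound.} For the first inequality I can simply invoke \thref{thm:AvgShNormUB}, since Haar is a unitary 4-design and $\|\psi_0\|_2^2=(d-1)/d\le1$. Equivalently, apply Jensen to $\bbE\|A-\bbE A\|_2^2$.

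\textbf{Tail bound.} The Chebyshev-type tail needs the variance of a quantity like $\|A-\bbE A\|_2^2$, or of $\tr A^2$, which involves fourth moments of $\psi$. I would instead use the higher-degree bound $\|A\|^2\le\tr A^2$ and bound $X\coloneqq\tr(A^2)$, a degree-4 polynomial in $\psi$. Compute $\bbE X$ and $\Var X$ via Haar moments $\bbE\psi^{\otimes t}=P_{\mathrm{sym}}^{(t)}/D_{[t]}$ for $t\le 8$. Expanding $\tr A^2=d^2(d+1)^4\sum_{i,j}w_iw_j\tr(\phi_i\phi_j)\tr(\phi_i\psi_0)^2\tr(\phi_j\psi_0)^2$ produces overlaps $\tr(\phi_i\phi_j)$, and after Haar averaging, sums such as $\sum w_iw_j\tr(\phi_i\phi_j)^3$, which is exactly $\Phi_3$. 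I expect $\bbE X\lesssim 48\cdot(\text{const})$, with the $-1$ shift and square root coming from writing $\|A\|\le\sqrt{\tr A^2}$ after subtracting the identity part. The standard deviation of $X$ should scale like $\xi(\caE)/(d+6)^{3/2}$ times a constant, with $\bPhi_3-1$ entering through the highest-overlap terms. Chebyshev then gives $\Pr\{X\ge \bbE X+k\,\xi\cdot\text{const}\}\le 1/((d+6)^3k^2)$, which converts into the stated threshold $\sqrt{48(1+k\xi)}-1$. The final inequality $\xi<\sqrt{17d}$ follows from \pref{pro:FP3UB}, since $102(\bPhi_3-1)+6\le 17d$ after simplifying $\frac{(d+2)(d^2+2d-1)}{6d^2}-1$.

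The main obstacle is the variance computation. It requires degree-8 Haar moments of $\psi$ contracted against a double sum over the ensemble, and many terms involve frame-potential-like quantities of order up to 4 or mixed products. I would bound all of these by $\Phi_3$ using $\tr(\phi_i\phi_j)\le1$ and the 2-design identities (notably $\sum_i w_i\phi_i^{\otimes2}=P_{\mathrm{sym}}^{(2)}/D_{[2]}$). I would organize the computation by permutations in $S_8$ and keep only leading-order bounds to reach the constant $102$ and the factor $(d+6)^3$.
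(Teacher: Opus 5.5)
Your mean bound (via \thref{thm:AvgShNormUB} with $\|\psi_0\|_2^2=(d-1)/d$) and your derivation of $\xi(\caE)<\sqrt{17d}$ from \pref{pro:FP3UB} are fine and match the paper. The tail bound, however, has a genuine gap.

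\emph{The threshold does not come out.} Since $A\ge0$ and $\tr A=(d^2-1)/d$ is fixed, you have $\tr A^2\ge(\tr A)^2/d\approx d$. So $\|A\|\le\sqrt{\tr A^2}$ only gives an $\caO(\sqrt d)$ threshold, and your expectation $\bbE X\lesssim 48\cdot\text{const}$ is off by a factor of order $d$. The centered estimate $\|A\|\le\|\bbE A\|+\|A-\bbE A\|_2$ is what underlies the \emph{mean} bound (\lref{lem:MaxEigUB}). Chebyshev applied to $Y=\|A-\bbE A\|_2^2$, though, gives a threshold of the form $1-\frac{1}{d^2}+\sqrt{\bbE Y+k\sqrt{\Var(Y)}}$. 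Nothing in your plan produces $\sqrt{48[1+k\xi(\caE)]}-1$; the constants $48$, $102$ and $(d+6)^3$ are asserted, not derived.

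\emph{The variance step is not workable as described.} $\bbE X^2$ contracts $\bbE\psi^{\otimes 8}$ against $\phi_i^{\otimes2}\otimes\phi_j^{\otimes2}\otimes\phi_k^{\otimes2}\otimes\phi_l^{\otimes2}$. The permutations in $S_8$ then produce fourfold ensemble sums of cyclic invariants such as $\tr(\phi_i\phi_j\phi_k\phi_l)$, which are not frame potentials. Moreover, $\Var(X)=\Var(Y)$ would have to come out as $\caO(\xi^2/d^3)$, while $\bbE X^2$ and $(\bbE X)^2$ are each of order $d^2$. Bounding terms individually with $\tr(\phi_i\phi_j)\le1$ destroys exactly the cancellations you need.

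\emph{The missing idea} is to reduce to a scalar function of $\psi$ whose Haar variance involves only pairwise overlaps. Using $\sum_iw_i\phi_i=\bbone/d$ and $\sum_iw_i\tr(\phi_i\psi)\phi_i=(\bbone+\psi)/[d(d+1)]$, one has $A=J-\frac{2(d+1)}{d}\psi-\bigl(1-\frac{1}{d^2}\bigr)\bbone$ with $J=d(d+1)^2\sum_iw_i[\tr(\phi_i\psi)]^2\phi_i$. Since $J-\frac{2(d+1)}{d}\psi\ge0$, this gives $\|\psi_0\|_\caE^2\le\|J\|-1+1/d^2$; this is the origin of the $-1$. Next, apply Cauchy--Schwarz over the ensemble index,
$\tr(\rho J)\le d(d+1)^2\bigl[\sum_iw_i\tr(\phi_i\rho)^2\bigr]^{1/2}\bigl[\sum_iw_i\tr(\phi_i\psi)^4\bigr]^{1/2}$,
together with $\sum_iw_i\tr(\phi_i\rho)^2=(1+\tr\rho^2)/[d(d+1)]\le 2/[d(d+1)]$. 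This removes the $\rho$-dependence and yields
\[
\|\psi_0\|_\caE^2\le\sqrt{48\,\bPhi_4(\caE,\psi)}-1,\qquad \bPhi_4(\caE,\psi)=D_{[4]}\sum_iw_i[\tr(\phi_i\psi)]^4 .
\]
Here $48$ collects the $2$ from the bound above and the $24$ in $D_{[4]}$. The leftover $1/d^2$ is absorbed using $(d+1)/\sqrt{(d+2)(d+3)}<1$ and $\bPhi_4(\caE,\psi)\ge 1/3$.

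Now $\bPhi_4(\caE,\psi)$ has Haar mean exactly $1$. Its second moment needs only $\tr[P_{[8]}(\phi_i^{\otimes4}\otimes\phi_j^{\otimes4})]=(1+16x+36x^2+16x^3+x^4)/70$ with $x=\tr(\phi_i\phi_j)$, so it is an exact combination of $\Phi_1,\dots,\Phi_4$. Inserting the 2-design values and $\bPhi_4(\caE)\le(d+3)\bPhi_3(\caE)/4$ gives $\Var[\bPhi_4(\caE,\psi)]<\xi(\caE)^2/(d+6)^3$, which is where $102$ arises. Finally, the event $\sqrt{48\bPhi_4(\caE,\psi)}-1\ge\sqrt{48[1+k\xi(\caE)]}-1$ is exactly $\bPhi_4(\caE,\psi)-1\ge k\xi(\caE)$, and Chebyshev gives the stated probability.
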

For SIC ensembles and complete sets of MUBs, the normalized third frame potentials read
\begin{align}\label{eq:bPhi3SICMUB}
	\bPhi_3(\caE_\mathrm{SIC})=\frac{(d+2)(d+3)}{6(d+1)},\;\,
	\bPhi_3(\caE_\mathrm{MUB})=\frac{(d+1)(d+2)}{6d},
\end{align}
each scaling as $d/6$ in the large-$d$ limit and attaining near-maximal values among 2-designs (cf.\ \pref{pro:FP3UB}). Nevertheless, \fref{fig:AShNormFid} demonstrates that these canonical 2-designs already closely match the 3-design benchmark for average-case performance.

\begin{figure}[tb]
	\centering
	\includegraphics[width=0.45\textwidth]{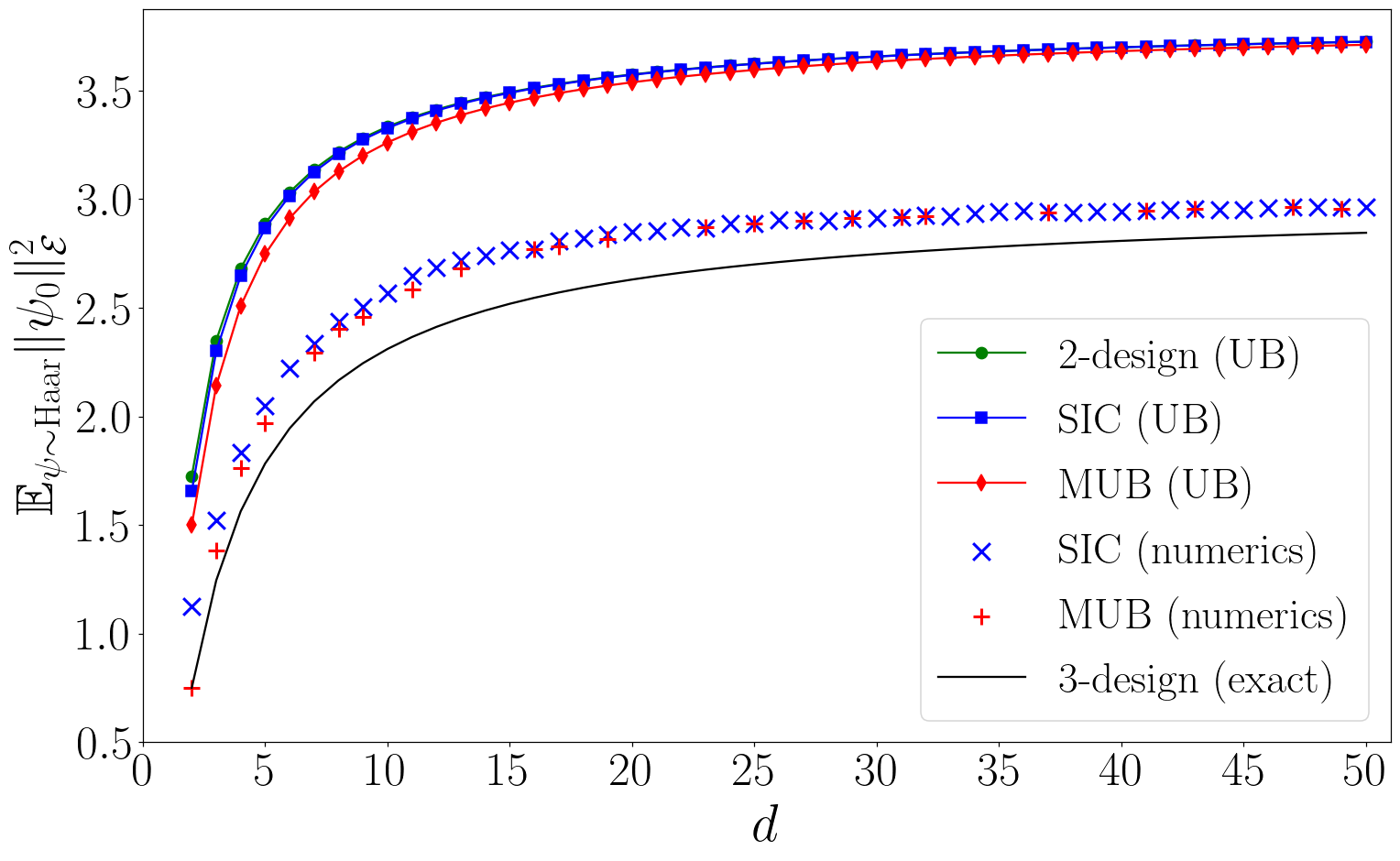}
    \vspace{-6pt} 
	\caption{Mean squared shadow norm for fidelity estimation of Haar-random pure states versus dimension $d$. SIC and MUB results are compared with upper bounds from \eref{eq:ShNormFidUB}, the exact 3-design result, and the worst-case 2-design bound [based on \pref{pro:FP3UB} and \eref{eq:ShNormFidUB}]. Notably, the average performance across all 2-designs, including SICs and MUBs, nearly matches the 3-design benchmark.  }
	\label{fig:AShNormFid}
\end{figure}

Since $\bPhi_3(\caE)=\caO(d)$ for any 2-design (\pref{pro:FP3UB}), choosing $k=\caO\left(1/\sqrt{d}\lsp\right)$ in \eref{eq:ShNormFidLD} yields $\|\psi_0\|_\caE^2=\caO(1)$ except with probability $\caO(1/d^2)$. Hence, the shadow norm concentrates tightly around its mean for Haar-random~$\psi$, with a universally fast rate for all state 2-designs; dependence on $\bPhi_3$ only marginally slows convergence for larger~$\bPhi_3$. States with anomalously large shadow norms thus form a vanishingly small subset of the state space: generic pure-state fidelity estimation achieves constant sample complexity with any 2-design measurement, constructible from only $\caO(d)$ bases. This stands in stark contrast to the $\Theta(d^2)$ bases required for worst-case optimality.

For stabilizer measurements, by means of involved analysis, a universal upper bound on the shadow norm was established in \rscite{Mao25,ZhuMC24}, scaling with local dimension but not qudit number. However, that bound cannot imply constant sample complexity for generic fidelity estimation. Our concentration result is stronger in this regard: the shadow norm remains $\caO(1)$ with high probability for any state 2-design, independent of both the system dimension and local dimension. This prediction
is corroborated by numerical calculation on $n$-qudit stabilizer measurements, as illustrated 
in \fref{fig:AShNormLDstab} in the End Matter.

Beyond Haar-random states, the mean squared shadow norm also admits constant upper bounds for other physically motivated ensembles of target states: URP ensembles, relevant to Hamiltonian simulation, and Clifford orbits, relevant to randomized benchmarking and quantum error correction (\pref{pro:AShNormFidUB} in the End Matter).

	%=============================================
	% SUMMARY AND OUTLOOK
	%=============================================
\emph{Summary and outlook}---We have determined the fundamental measurement complexity of shadow estimation. In the worst case, $\Theta(d^2)$ bases are both necessary and sufficient for optimal performance; our explicit construction from phase 3-designs and MUBs achieves this without requiring exact state 3-design structure. In the average case, only $\Theta(d)$ bases suffice: any state 2-design yields a constant mean squared shadow norm governed by $\bPhi_3$. For fidelity estimation, the shadow norm concentrates tightly for Haar-random targets, implying constant sample complexity for generic instances. Together, these results reveal a sharp quadratic gap between worst-case and average-case complexities: $d^2$ versus $d$ bases. For a $10$-qubit system, this reduces the required number of bases from ${\sim}10^6$ to ${\sim}10^3$, demonstrating that the practical cost can be far lower than the worst-case bound suggests.

Looking forward, extending these results to multipartite settings and structured observable classes---such as local Hamiltonians, where problem-specific structure may yield further reductions---remains an important open direction. On the practical side, easily implementable 2-designs---realized via complete sets of MUBs \cite{Zhang24,WangC24}, cyclic measurements~\cite{Gonzalez25}, or shallow $\caO(\log n)$-depth circuits~\cite{Cleve16,Bertoni24,Hu25,Schuster25}---already achieve optimal average-case performance. This is especially relevant for decoherence-limited devices, where measurement overhead is a primary bottleneck. Finally, connections with Hamiltonian-driven shadow protocols~\cite{Hu22,Liu24} offer a promising route: the system's intrinsic dynamics could naturally generate structured measurements, enabling optimal shadow estimation without external randomization.

	%=============================================
	% ACKNOWLEDGMENTS
	%=============================================
    \emph{Acknowledgments}---We thank Changhao Yi, Xiaodi Li, and Xinyang Shu for inspiring discussions. This work is supported by Shanghai Science and Technology Innovation Action Plan (Grant No.\ 24LZ1400200), National Natural Science Foundation of China (Grant No.\ 92576101), 
    Quantum Science and Technology-National Science and Technology Major Project (Grant No.\ 2024ZD0300101), National Key Research and Development Program of China (Grant No.\ 2022YFA1404204), and Shanghai Municipal Science and Technology Major Project (Grant No.\ 2019SHZDZX01).

\let\oldaddcontentsline\addcontentsline
\renewcommand{\addcontentsline}[3]{}
\bibliography{ref}
\let\addcontentsline\oldaddcontentsline
	
\clearpage
\newpage
	%=============================================
	% END MATTER
	%=============================================
	\onecolumngrid
	\vspace{0.8cm}
	\begin{center}
		\rule{0.5\textwidth}{0.4pt}\\[6pt]
		{\large\textbf{End Matter}}\\[6pt]
		\rule{0.5\textwidth}{0.4pt}
	\end{center}
	\twocolumngrid

%=============================================
% APPENDIX A
%=============================================

\emph{Appendix A: Reconstruction map for the combined phase-design ensemble $\caE_N$}---Consider $N\geq 2$ MUBs $\caB_1,\caB_2,\ldots,\caB_N$ and the ensemble $\caE_N$ used in \thref{thm:MMSoptWCMUB}. Any operator $O\in\caL(\caH)$ can be decomposed as
\begin{align}
	O = \frac{\tr(O)\bbone}{d} + O_\perp + \sum_{j=1}^{N} O_{\caB_j},
\end{align}
where $O_{\caB_j}$ denotes the traceless component diagonal in basis~$\caB_j$, and $O_\perp$ is the component orthogonal to all $\caB_j$-diagonal operators, which vanishes when $N=d+1$. With respect to this decomposition, the reconstruction map takes the simple form
\begin{align}\label{eq:ReconMap}
	\caM_{\caE_N}^{-1}(O) = \frac{\tr(O)\bbone}{d} + dO_\perp + \sum_{j=1}^{N}\frac{Nd}{N-1}O_{\caB_j}.
\end{align}
For a complete set of MUBs ($N=d+1$), the ensemble $\caE_N$ forms a 2-design, and the reconstruction map reduces to the familiar form $\caM_{\caE_{d+1}}^{-1}(O) = (d+1)O - \tr(O)\bbone$ \cite{HuangKP20}.

\emph{Appendix B: Construction of nearly tight phase 2- and 3-designs from orthonormal bases}---To construct a phase 2-design (3-design), at least $\Theta(d^2)$ ($\Theta(d^3)$) states are required. Here we provide nearly tight constructions from orthonormal bases. In the case of phase 2-designs, our construction is a reformulation of the construction in \rcite{Gross07}, but in a much simpler language. This reformulation is crucial to generalization to phase 3-designs.

Given any positive integer $t$, define the function 
\begin{equation}
	f_t(x)\coloneqq \left\lfloor \frac{tx}{d}\right\rfloor,\quad x\in \scrI:= \{0,1,\ldots,d-1\}.
\end{equation}
Let $p$ be any prime that satisfies $p\geq\max\{d,3\}$. Then a phase 2-design can be constructed as follows:
\begin{equation}\label{eq:caT2}
	\begin{gathered}
		\caT_2\coloneqq \left\{\ket{\psi_2(a,b,c)} \colon a\in \bbZ_d,\, b\in \bbZ_3,\, c\in \bbZ_p\right\},\\
		\ket{\psi_2(a,b,c)}\coloneqq \frac{1}{\sqrt{d}} \sum_{x\in \scrI} \omega_d^{ax}\omega_3^{bf_2(x)}\omega_p^{cx^2}\ket{x},
	\end{gathered}
\end{equation}
where $\omega_k=\rme^{2\pi\rmi/k}$ for a positive integer $k$ is a primitive $k$-th root of unity;
note that $\caT_2$ is composed of $3p$ bases. When $d$ is odd, we have an alternative construction using only $2p$ bases:
\begin{equation}\label{eq:tcaT2}
	\begin{gathered}
		\tcaT_2\coloneqq\left\{\ket{\tpsi_2(a,b,c)} \colon a\in \bbZ_d,\, b\in \bbZ_2,\, c\in \bbZ_p\right\},\\
		\ket{\tpsi_2(a,b,c)}\coloneqq\frac{1}{\sqrt{d}} \sum_{x\in \scrI} \omega_d^{ax}\omega_2^{bx}\omega_p^{cx^2}\ket{x}.
	\end{gathered}
\end{equation}
\begin{proposition}\label{pro:Phase2design}
	The ensembles $\caT_2$ and $\tcaT_2$ defined in \eqsref{eq:caT2}{eq:tcaT2} form phase 2-designs.
\end{proposition}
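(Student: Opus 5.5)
We first reduce the phase 2-design condition to a counting condition on index tuples. For a phase state $\ket{\psi}=d^{-1/2}\sum_x \rme^{\rmi\theta_x}\ket{x}$, the second moment $\psi^{\otimes 2}$ has entries $d^{-2}\rme^{\rmi(\theta_{x_1}+\theta_{x_2}-\theta_{y_1}-\theta_{y_2})}$ in the basis $\ket{x_1x_2}\bra{y_1y_2}$. For the URP ensemble $\caT$, the average of such an entry is $1$ if the multisets $\{x_1,x_2\}$ and $\{y_1,y_2\}$ coincide, and $0$ otherwise. Consider the uniform average over $\caT_2$, namely over $a\in\bbZ_d$, $b\in\bbZ_3$, $c\in\bbZ_p$. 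We must show that the phase average vanishes whenever $\{x_1,x_2\}\neq\{y_1,y_2\}$; when the multisets agree the phase is trivially $1$. The phase factorizes as $\omega_d^{a(x_1+x_2-y_1-y_2)}\,\omega_3^{b[f_2(x_1)+f_2(x_2)-f_2(y_1)-f_2(y_2)]}\,\omega_p^{c(x_1^2+x_2^2-y_1^2-y_2^2)}$. Averaging over $a$, $b$, $c$ therefore enforces three conditions, and the average is nonzero only if all of them hold:
\begin{align*}
x_1+x_2&\equiv y_1+y_2 \pmod d,\\
f_2(x_1)+f_2(x_2)&\equiv f_2(y_1)+f_2(y_2)\pmod 3,\\
x_1^2+x_2^2&\equiv y_1^2+y_2^2\pmod p.
\end{align*}

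The key step is to upgrade the first congruence from mod $d$ to an equality over the integers, using the carry information encoded by $f_2$. Since $x_i\in\{0,\dots,d-1\}$, the integer sum $s=x_1+x_2$ lies in $[0,2d-2]$. The congruence mod $d$ therefore gives $s-s'\in\{-d,0,d\}$, where $s'=y_1+y_2$. We also have $f_2(x)=\lfloor 2x/d\rfloor\in\{0,1\}$, which records whether $x\ge d/2$.

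We will argue that $s=s'\pm d$ forces a mismatch in $F=f_2(x_1)+f_2(x_2)\in\{0,1,2\}$. Suppose $s=s'+d$. Then $s\ge d$, so at least one $x_i\ge d/2$ and $F\ge1$; also $s'\le d-2$, so $F'\le1$. We must rule out $F\equiv F' \pmod 3$, i.e.\ $F=F'$, since the difference lies in $\{-2,\dots,2\}$ and $\pm3$ is impossible.

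This case analysis is the main obstacle. If $F=F'=1$, write $x_1\ge d/2>x_2$ and $y_1\ge d/2>y_2$. Then $s<d/2+d$... this bound alone is not decisive. A cleaner route uses the identity $\lfloor 2x/d\rfloor=\lfloor x/(d/2)\rfloor$ to get $F=\lfloor 2s/d\rfloor-\epsilon$ with $\epsilon\in\{0,1\}$. Then $s-s'=d$ gives $\lfloor 2s/d\rfloor-\lfloor 2s'/d\rfloor=2$, hence $F-F'\in\{1,2,3\}$; the value $3$ is excluded because $F\le2$ and $F'\ge0$. So $F-F'\in\{1,2\}$, which is nonzero mod $3$. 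This is exactly why a modulus of $3$, rather than $2$, is needed.

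Once $x_1+x_2=y_1+y_2$ holds in $\bbZ$, it also holds mod $p$. Combined with the squares condition mod $p$, it gives $2x_1x_2\equiv2y_1y_2$, using $p\ge3$ so that $2$ is invertible. Hence $\{x_1,x_2\}$ and $\{y_1,y_2\}$ are the root multisets of the same quadratic over $\bbZ_p$. Because $p\ge d$, distinct elements of $\scrI$ remain distinct mod $p$, so the multisets coincide.

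For $\tcaT_2$ with $d$ odd, the carry is detected instead by the parity factor $\omega_2^{b x}$: $s=s'\pm d$ with $d$ odd changes the parity of the sum. The same argument then applies with the parity condition replacing the $f_2$ condition.

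Finally, each fixed $(b,c)$ together with the $d$ values of $a$ forms an orthonormal (Fourier-twisted) basis, which confirms the basis count.
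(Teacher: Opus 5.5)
Your proof is correct and follows essentially the same route as the paper, which obtains this proposition from the $t=2$ analogue of \lref{lem:DesignPermutation}. Averaging over $(a,b,c)$ reduces the moment condition to the three congruences. The $f_2$-condition mod $3$ (or, for $\tcaT_2$ with $d$ odd, the parity condition combined via the Chinese remainder theorem) lifts $x_1+x_2\equiv y_1+y_2 \pmod d$ to an integer equality. The sum and square-sum mod $p$, with $p\ge\max\{d,3\}$, then determine the multiset. Your floor-of-sum contrapositive ($s-s'=\pm d$ forces $F-F'\in\{\pm1,\pm2\}$) is just the reverse reading of the paper's step that equal $f$-sums imply $|s-s'|<d$; in a write-up, drop the abandoned $s<d/2+d$ fragment.
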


Next, we turn to phase 3-designs. Let $p$ be any prime that satisfies $p\geq\max\{d,5\}$. Then a phase 3-design can be constructed as follows:
\begin{equation}\label{eq:caT3}
	\begin{gathered}
		\caT_3\coloneqq\left\{\ket{\psi_3(a,b,c)} \colon a\in \bbZ_d,\, b\in \bbZ_7,\, c_2, c_3\in \bbZ_p\right\},\\
		\ket{\psi_3(a,b,c)}\coloneqq\frac{1}{\sqrt{d}} \sum_{x\in \scrI} \omega_d^{ax}\omega_7^{bf_3(x)}\omega_p^{c_2x^2+c_3x^3}\ket{x};
	\end{gathered}
\end{equation}
note that $\caT_3$ is composed of $7p^2$ bases. When $d$ is not divisible by~$3$, we have an alternative construction using only $3p^2$ bases:
\begin{equation}\label{eq:tcaT3}
	\begin{gathered}
		\tcaT_3\coloneqq\left\{\ket{\tpsi_3(a,b,c)} \colon a\in \bbZ_d,\, b\in \bbZ_3,\, c_2, c_3\in \bbZ_p\right\},\\
		\ket{\tpsi_3(a,b,c)}\coloneqq\frac{1}{\sqrt{d}} \sum_{x\in \scrI} \omega_d^{ax}\omega_3^{bx}\omega_p^{c_2x^2+c_3x^3}\ket{x}.
	\end{gathered}
\end{equation}
\begin{proposition}\label{pro:Phase3design}
	The ensembles $\caT_3$ and $\tcaT_3$ defined in \eqsref{eq:caT3}{eq:tcaT3} form phase 3-designs.
\end{proposition}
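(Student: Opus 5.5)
The plan is to reduce the moment condition to a combinatorial statement about multisets of indices and verify it using the exponential sums built into the construction. Both constructions give the same reduction, so I treat them in parallel.

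\emph{Step 1: reduce the design condition to a multiset condition.} For a phase state $\ket{\phi}=d^{-1/2}\sum_x \rme^{\rmi g(x)}\ket{x}$, the matrix elements of the third moment are
\begin{align*}
\bra{x_1x_2x_3}\phi^{\otimes 3}\ket{y_1y_2y_3}=d^{-3}\exp\Bigl\{\rmi\Bigl[\sum_{j=1}^3 g(x_j)-\sum_{j=1}^3 g(y_j)\Bigr]\Bigr\}.
\end{align*}
For the URP ensemble $\caT$ in \eref{eq:URPmain}, averaging over independent uniform phases gives $d^{-3}$ when the multisets $\{x_j\}$ and $\{y_j\}$ coincide, and $0$ otherwise. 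When the two multisets coincide, the phase factor is identically $1$ for every phase state. So it suffices to prove that the uniform average over the finite ensemble of $\exp\{\rmi[\sum_j g(x_j)-\sum_j g(y_j)]\}$ vanishes whenever the multisets differ.

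For $\caT_3$ this average factorizes into a product of independent character sums over $a\in\bbZ_d$, $b\in\bbZ_7$, $c_2\in\bbZ_p$ and $c_3\in\bbZ_p$. The product is nonzero only if all four of the following hold:
\begin{align*}
\textstyle\sum x_j&\equiv \textstyle\sum y_j \pmod d, &
\textstyle\sum f_3(x_j)&\equiv \textstyle\sum f_3(y_j) \pmod 7,\\
\textstyle\sum x_j^2&\equiv \textstyle\sum y_j^2 \pmod p, &
\textstyle\sum x_j^3&\equiv \textstyle\sum y_j^3 \pmod p.
\end{align*}
It therefore remains to show that these four congruences force the two multisets to be equal.

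\emph{Step 2: recover the exact integer sums.} This is the step I expect to be the crux, and it is where the floor function enters. Since $f_3(x)\in\{0,1,2\}$, each sum $\sum_j f_3(x_j)$ lies in $\{0,\ldots,6\}$. The congruence modulo $7$ therefore gives exact equality $\sum f_3(x_j)=\sum f_3(y_j)$.

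The sums $S_x=\sum x_j$ and $S_y=\sum y_j$ lie in $[0,3d-3]$, and by the first congruence $S_x-S_y=kd$ for some integer $k$. The bounds $3x/d-1<f_3(x)\le 3x/d$ show that $\sum f_3(x_j)$ lies in the half-open interval $(3S_x/d-3,\,3S_x/d]$, and similarly for $y$. These two intervals have length $3$ and are shifted relative to each other by $3k$. If $k\neq0$ they are disjoint, which contradicts the equality of the $f_3$-sums. Hence $S_x=S_y$ as integers.

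For $\tcaT_3$ the factor $\omega_3^{bx}$ gives $S_x\equiv S_y\pmod 3$ instead. Combined with the congruence modulo $d$ and the assumption $\gcd(3,d)=1$, this gives $S_x\equiv S_y\pmod{3d}$. Since $|S_x-S_y|\le 3d-3$, again $S_x=S_y$.

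\emph{Step 3: conclude with Newton's identities.} The power sums $p_1,p_2,p_3$ of $(x_j)$ and $(y_j)$ now agree modulo $p$. Because $p\ge5$, the integers $2$ and $3$ are invertible in $\mathbb{F}_p$. Newton's identities then show that the elementary symmetric polynomials $e_1,e_2,e_3$ also agree modulo $p$. Consequently
\begin{align*}
\prod_{j=1}^3 (T-x_j)=\prod_{j=1}^3 (T-y_j)\quad\text{in } \mathbb{F}_p[T].
\end{align*}
By unique factorization in $\mathbb{F}_p[T]$, the two multisets agree modulo $p$. All entries lie in $\{0,\ldots,d-1\}$ and $d\le p$, so distinct entries have distinct residues modulo $p$. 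Therefore the multisets agree as multisets of integers.

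This completes the argument for both ensembles. Normalization needs no separate check, because the constructions use uniform weights.
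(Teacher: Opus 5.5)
Your proposal is correct and follows the paper's route: the paper reduces the proposition to \lref{lem:DesignPermutation} (the four congruences force $y_1,y_2,y_3$ to be a permutation of $x_1,x_2,x_3$) and proves that lemma as you do. The mod-$7$ condition fixes the $f_3$-sums and hence gives $|S_x-S_y|<d$; the mod-$d$ condition (combined with mod $3$ to give mod $3d$ when $d$ is not divisible by $3$) then gives $S_x=S_y$; Newton's identities modulo $p$ finish the argument. Your write-up also spells out several points the paper leaves implicit: the character-sum factorization, the floor-function interval bound behind $|S_x-S_y|<d$, the role of $p\geq 5$ in making $2$ and $3$ invertible, and the role of $p\geq d$ in lifting residues back to integers.
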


\Pref{pro:Phase3design} is a simple corollary of \lref{lem:DesignPermutation} in SM \sref{sup:tDesignPhaseDesign}; \pref{pro:Phase2design} follows from similar but simpler reasoning. It is well known that, for any positive integer~$d$, there exists a prime $p$ with $p\leq 2d$~\cite{Ha08}. Therefore, for any dimension~$d$, a phase 2-design can be constructed from no more than $6d$ bases, while a phase 3-design can be constructed from no more than $28d^2$ bases. In the large-$d$ limit, it is possible to tighten these bounds.

\emph{Appendix C: Numerical evidence for concentration in stabilizer shadow estimation}---\Fref{fig:AShNormLDstab} plots the probability upper bound from \pref{pro:ShNormFidLD} for the event $\|\psi_0\|_{\caE}^2 \geq 9$. Here, $\psi$ denotes a Haar-random pure state of $n$ qudits with local dimension~$p$, and fidelity estimation is implemented via stabilizer measurements. These measurements form a Clifford orbit which, for odd prime~$p$, is a state 2-design but not a 3-design~\cite{Zhu17,Webb16,Kueng15}. The worst-case shadow norm satisfies $\|\psi_0\|_\caE^2\leq 2p-1$~\cite{Mao25}, with the bound approximately saturated by stabilizer states when $n\geq 2$. This worst-case scaling with~$p$ might suggest suboptimality for fidelity estimation. However, \fref{fig:AShNormLDstab} demonstrates that the large-deviation probability vanishes rapidly with increasing~$n$ for various choices of~$p$. The normalized third frame potential $\bPhi_3(\caE)=(p+1)(d+2)/[3(d+p)]$~\cite{ZhuMC24} approaches $(p+1)/3$ for large~$n$; together with \pref{pro:ShNormFidLD}, this yields an $\caO\left(1/d^2\right)$ tail bound that explains the observed concentration and confirms constant sample complexity for typical fidelity estimation. Related numerical observations were reported in \rcite{Mao25}; our \thref{thm:AvgShNormUB} and \pref{pro:ShNormFidLD} provide the rigorous theoretical underpinning that has been missing in the prior work.

\begin{figure}[tb]
	\centering
	\includegraphics[width=0.45\textwidth]{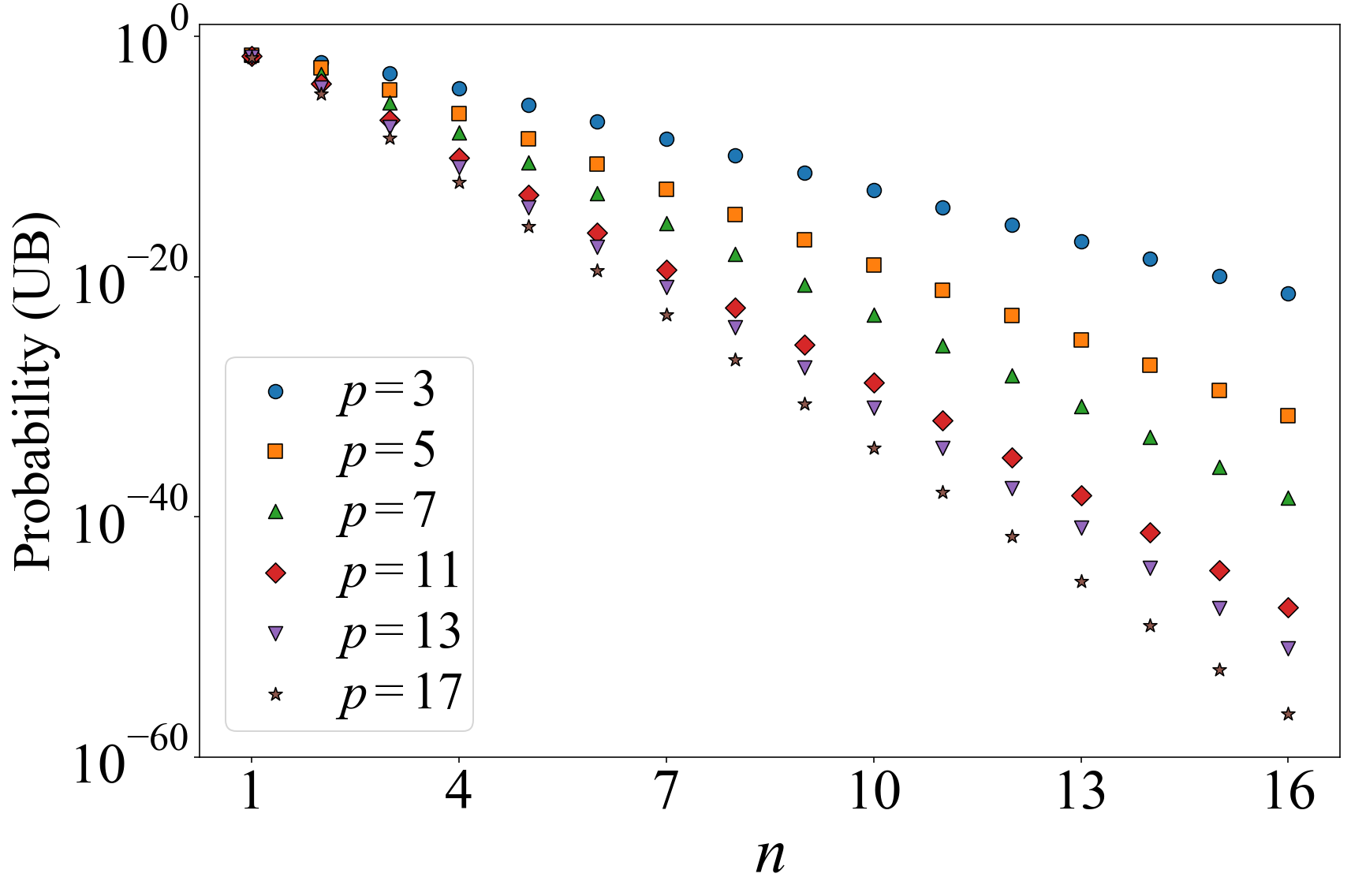}
	\caption{Probability that $\|\psi_0\|_\caE^2\geq 9$ for a Haar-random pure state~$\psi$ in fidelity estimation based on $n$-qudit stabilizer measurements with local dimension~$p$. The rapid decay confirms constant sample complexity for generic target states, in agreement with \thref{thm:AvgShNormUB} and \pref{pro:ShNormFidLD}. }
	\label{fig:AShNormLDstab}
\end{figure}

\emph{Appendix D: Fidelity estimation beyond Haar-random pure states}---The main text establishes concentration for Haar-random target states. Here we show that constant mean squared shadow norms also hold for other physically motivated ensembles.
\begin{proposition}\label{pro:AShNormFidUB}
Suppose $\caE$ is a state 2-design. Then:
\emph{(i)} For the uniform random-phase (URP) ensemble with respect to any orthonormal basis,
\begin{align}
\underset{\psi\sim\urp}{\bbE}\left\|\psi_0\right\|_\caE^2 \le\sqrt{\frac{48(d+1)(d+2)(d+3)}{d^3}}-1. \label{eq:AShNormRP}
\end{align}
\emph{(ii)} For $d=2^n$ and the $n$-qubit Clifford group $\Cl(n)$,
\begin{align}
\underset{U\sim\Cl(n)}{\bbE} \left\|U\psi_0 U^\dagger\right\|_\caE^2 \le 2\sqrt{15}-1 \quad \forall\,\psi\in\caP(\caH). \label{eq:AShNormFidUBCli}
\end{align}
\end{proposition}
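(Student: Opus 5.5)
The plan is to reduce both parts to a second-moment computation over the target ensemble, taking the squared Hilbert--Schmidt norm of a positive operator as a proxy for its operator norm.

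\emph{Reduction to an operator norm.} Since $\caE$ is a state 2-design, $\caM_\caE^{-1}(X)=(d+1)X-\tr(X)\bbone$. Hence $\tr(\psi_0\hat\rho_i)=(d+1)\tr(\psi_0\phi_i)$, and
\begin{align*}
\|\psi_0\|_{\caE}^2=\max_\rho\tr(\rho A_\psi)=\|A_\psi\|,\qquad
A_\psi\coloneqq d(d+1)^2\sum_i w_i\bigl[\tr(\psi_0\phi_i)\bigr]^2\phi_i\ge 0 .
\end{align*}
The ``$-1$'' in both bounds should come from the identity $\|A_\psi\|=\|A_\psi+\bbone\|-1$, valid because $A_\psi\ge0$. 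Combining it with $\|A_\psi+\bbone\|\le\|A_\psi+\bbone\|_2$ and Jensen's inequality gives
\begin{align*}
\bbE\|\psi_0\|_\caE^2\le\sqrt{\bbE\tr\bigl[(A_\psi+\bbone)^2\bigr]}-1,\qquad
\tr\bigl[(A_\psi+\bbone)^2\bigr]=\tr A_\psi^2+2\tr A_\psi+d .
\end{align*}
The linear term is already under control: $\tr A_\psi=d(d+1)^2\sum_i w_i\tr(\psi_0\phi_i)^2=(d+1)\|\psi_0\|_2^2=(d+1)(d-1)/d$, using the 2-design property. So it remains to bound $\bbE\tr A_\psi^2$.

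\emph{The quadratic term.} We have
\begin{align*}
\tr A_\psi^2=d^2(d+1)^4\sum_{i,j}w_iw_j\,\tr(\psi_0\phi_i)^2\tr(\psi_0\phi_j)^2\,\tr(\phi_i\phi_j)^2
=d^2(d+1)^4\sum_{i,j}w_iw_j\tr(\phi_i\phi_j)^2\,\tr\bigl[\psi_0^{\otimes4}(\phi_i^{\otimes2}\otimes\phi_j^{\otimes2})\bigr].
\end{align*}
I would first average over $\psi$, which needs only the fourth moment $\bbE\,\psi^{\otimes4}$ of the target ensemble.
\begin{itemize}
\item For URP, $\bbE\,\psi^{\otimes4}$ has an explicit form: a sum over permutations weighted by index coincidences, with only the pairings that match phases surviving.
\item For the Clifford orbit, the fourth moment is given by the Gross--Nezami--Walter / Zhu et al.\ formula: the symmetric projector plus a correction involving the stabilizer-code projector $Q$ on $\caH^{\otimes4}$, with coefficients depending on $\tr(\psi^{\otimes4}Q)\le1$.
\end{itemize}
After expanding $\psi_0=\psi-\bbone/d$, each term reduces to traces of products of $\phi_i$, $\phi_j$ and of diagonal, dephasing-type maps. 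I would then bound $\tr(\phi_i\phi_j)^2\cdot(\text{stuff})$ by quantities of the form $\sum_{i,j}w_iw_j\tr(\phi_i\phi_j)^k$ with $k\le2$, or by $\tr(\phi_i\phi_j)\le1$. These are fixed by the 2-design property, which is why no $\bPhi_3$ appears in the final bounds.

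\emph{Target constants.} For URP the aim is $\bbE\tr A_\psi^2+2\tr A_\psi+d\le48(d+1)(d+2)(d+3)/d^3$; the factor $(d+1)(d+2)(d+3)$ is the normalization of the fourth moment. For Clifford the aim is the constant $60$, giving $2\sqrt{15}-1$. The main obstacle is this fourth-moment bookkeeping: the off-diagonal permutation terms must be organized so that every contribution is either bounded crudely by Cauchy--Schwarz (giving the factor $48$, respectively $60$) or computed exactly via the 2-design identity $\sum_iw_i\phi_i^{\otimes2}=P_{\mathrm{sym}}/D_{[2]}$.
\begin{itemize}
\item For URP, the first thing I would try is to drop the negative parts and bound the diagonal-coincidence terms by the full permutation sum.
\item For Clifford, I would bound the $Q$-term by $\tr[Q(\phi_i^{\otimes2}\otimes\phi_j^{\otimes2})]\le\tr(\phi_i\phi_j)^2$ or by $1$, uniformly in $\psi$.
\end{itemize}
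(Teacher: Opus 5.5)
\textbf{The proposal has a fatal gap at the step $\|A_\psi+\bbone\|\le\|A_\psi+\bbone\|_2$.} Your reduction $\|\psi_0\|_\caE^2=\|A_\psi\|$ and your value $\tr A_\psi=(d^2-1)/d$ are correct. But no fourth-moment bookkeeping can rescue the Hilbert--Schmidt step. For any Hermitian $X$ on $\caH$ one has $\|X\|_2^2\ge(\tr X)^2/d$. So, deterministically for \emph{every} $\psi$,
\[
\tr\bigl[(A_\psi+\bbone)^2\bigr]\ \ge\ \frac{(\tr A_\psi+d)^2}{d}=\frac{(2d^2-1)^2}{d^3}\approx 4d .
\]
Your targets $\bbE\tr[(A_\psi+\bbone)^2]\le 48(d+1)(d+2)(d+3)/d^3$ and $\le 60$ are therefore false already for $d\ge 17$ and $n\ge 4$, respectively. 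The best your chain can deliver is a bound of order $2\sqrt d$.

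The problem is structural, not an artifact of the identity shift. $A_\psi$ carries trace $\approx d$ spread over all $d$ dimensions, while its top eigenvalue is $O(1)$. Any Frobenius-norm proxy therefore overestimates the operator norm by a factor $\sim\sqrt d$.

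A smaller error: $\tr A_\psi^2$ contains $\tr(\phi_i\phi_j)$ to the first power, not squared. After the fourth-moment contraction this double sum brings in $\Phi_3(\caE)$, so your claim that only 2-design data enter also fails on this route. Compare \pref{pro:AvgShNormTUB}: even the refined eigenvalue bound of \lref{lem:MaxEigUB}, which removes the trace contribution, produces $\bPhi_3$ there.

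\textbf{What you need is to bound the top eigenvalue variationally.}

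\emph{Where the $-1$ comes from.} Expanding $\psi_0=\psi-\bbone/d$ and using the 2-design identities gives
\[
A_\psi=J(\caE,\psi)-\tfrac{2(d+1)}{d}\psi-\bigl(1-\tfrac{1}{d^2}\bigr)\bbone,\qquad J(\caE,\psi)=d(d+1)^2\sum_iw_i[\tr(\phi_i\psi)]^2\phi_i .
\]
Since $J-\frac{2(d+1)}{d}\psi=A_\psi+(1-\frac1{d^2})\bbone\ge0$, this yields $\|\psi_0\|_\caE^2\le\|J\|-1+\frac1{d^2}$.

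\emph{Cauchy--Schwarz over the index $i$.} Apply it inside $\max_\rho$:
\[
\|J\|=\max_{\rho}d(d+1)^2\sum_iw_i\tr(\phi_i\rho)\bigl[\tr(\phi_i\psi)\bigr]^2\le d(d+1)^2\sqrt{\max_\rho\Phi_2(\caE,\rho)}\,\sqrt{\Phi_4(\caE,\psi)} .
\]
For a 2-design, $\max_\rho\Phi_2(\caE,\rho)=2/[d(d+1)]$. After absorbing the $1/d^2$ term (using $\bPhi_4\ge1/3$), this gives $\|\psi_0\|_\caE^2\le\sqrt{48\bPhi_4(\caE,\psi)}-1$.

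\emph{Averaging.} By Jensen, you then only need
\[
\bbE\,\bPhi_4(\caE,\psi)=D_{[4]}\sum_iw_i\tr\bigl[\phi_i^{\otimes4}\,\bbE\,\psi^{\otimes4}\bigr].
\]
This is a \emph{single} sum over $i$, which is exactly why no $\Phi_3$ ever appears.
\begin{itemize}
\item For URP, $\bbE\,\psi^{\otimes4}\le24P_{[4]}/d^4$ (\lref{lem:URPAVG}) gives $\bbE\,\bPhi_4\le(d+1)(d+2)(d+3)/d^3$, and hence (i).
\item For the Clifford orbit, the twirl in \lref{lem:4th_poten_Cliff} gives $\bbE\,\Phi_4\le5(d+3)/[4(d+4)D_{[4]}]$. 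Hence the bound is $\sqrt{60(d+1)^2/[(d+2)(d+4)]}-1+1/d^2\le2\sqrt{15}-1$, which is (ii).
\end{itemize}
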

\emph{URP ensembles}---The URP ensemble models long-time outputs of Hamiltonian evolution with random energy eigenvalues under a no-resonance condition \cite{Mark24}, a standard setting in quantum thermalization. The bound~\eref{eq:AShNormRP} holds for the URP ensemble with respect to any orthonormal basis and converges to $\sqrt{48}-1\approx 5.93$ as $d\to\infty$, guaranteeing efficient fidelity estimation with any 2-design measurement regardless of system size.

\emph{Clifford orbits}---The bound~\eref{eq:AShNormFidUBCli} is uniform over all initial states~$\psi$ and equals $2\sqrt{15}-1\approx 6.75$. It applies to any Clifford orbit, including orbits of stabilizer states, central to quantum error correction, and magic states, relevant to magic state distillation. Since Clifford orbits arise naturally in verification and benchmarking, efficient fidelity estimation for these targets is particularly valuable in practice.

\emph{Comparison}---Both bounds are slightly larger than the Haar-random mean of $2\sqrt{2}+1\approx 3.83$, reflecting less randomness in these ensembles. Nevertheless, all three bounds are $\caO(1)$, confirming that constant sample complexity for fidelity estimation is a robust structural feature of 2-design measurements, not an artifact of Haar randomness.

\clearpage
\newpage

\setcounter{equation}{0}
\setcounter{figure}{0}
\setcounter{table}{0}
\setcounter{theorem}{0}
\setcounter{proposition}{0}
\setcounter{lemma}{0}
\setcounter{section}{0}
\setcounter{page}{1}

\renewcommand{\theequation}{S\arabic{equation}}
\renewcommand{\thefigure}{S\arabic{figure}}
\renewcommand{\thetable}{S\arabic{table}}
\renewcommand{\thetheorem}{S\arabic{theorem}}
\renewcommand{\thelemma}{S\arabic{lemma}}
\renewcommand{\theproposition}{S\arabic{proposition}}
\renewcommand{\thecorollary}{S\arabic{corollary}}
\renewcommand{\thesection}{S\arabic{section}}
\renewcommand{\theHfigure}{S\arabic{figure}}
\renewcommand{\theHtable}{S\arabic{table}}
\renewcommand{\theHequation}{S\arabic{equation}}
\renewcommand{\theHproposition}{S\arabic{proposition}}

\onecolumngrid	
\begin{center}
	\textbf{\large Optimal Shadow Estimation with Minimal Measurement Settings: Supplemental Material}
\end{center}

\tableofcontents

\bigskip

In this Supplemental Material (SM), we prove the results presented in the main text and End Matter, including Propositions~\ref{pro:FP3UB}--\ref{pro:AShNormFidUB} as well as  \thsref{thm:MMSoptWCMUB}{thm:AvgShNormUB}. In addition, we provide some auxiliary results on frame potentials, the combined phase-design ensemble based on MUBs (optimal for worst-case shadow estimation), and shadow norms.

As in the main text, let $\caH$ be a complex Hilbert space of dimension $d\ge2$. Denote by $\caL(\caH)$, $\caL^{\rmH}(\caH)$, $\caL_0(\caH)$, and $\caL^{\rmH}_0(\caH)$ the spaces of linear operators, Hermitian operators, traceless operators, and traceless Hermitian operators on $\caH$, respectively. Denote by $\caD(\caH)$ the set of density operators on $\caH$ and by $\caP(\caH)$ the subset of pure state projectors. For any operator $O \in \caL(\caH)$, we write $O_0=O-\tr(O)\bbone/d$ for the traceless part of $O$; for any pure state $\ket{\phi} \in \caH$, we write $\phi = \proj{\phi}$ for the corresponding projector and $\phi_0=\phi-\bbone/d$. Given a positive integer $t$, denote by $P_{[t]}$ the projector onto the $t$-partite symmetric subspace of $\caH^{\otimes t}$ and by  $D_{[t]}=\tr(P_{[t]})$ the dimension of the symmetric subspace.

We also introduce the vectorization map between operators on $\caH$ and vectors in $\caH \otimes \caH$. For any $O \in \caL(\caH)$, its vectorization is defined in the computational basis $\{\ket{k}\}_{k=0}^{d-1}$ as
\begin{align}
\dket{O} \coloneqq \sum_{k,l} O_{kl}\ket{k}\otimes \ket{l},
\end{align}
where $O_{kl}=\bra{k}O\ket{l}$.

\section{\texorpdfstring{State $t$-designs and phase $t$-designs}{State t-designs and phase t-designs}} \label{sup:tDesignPhaseDesign}

Here we discuss state $t$-designs and phase $t$-designs in more detail and derive auxiliary results relevant to this work.

Let $\caE = \{\ket{\phi_i}, w_i\}_{i}$ be an ensemble of pure states in $\caH$, where  $w_i>0$ and $\sum_i w_i=1$. The $t$-th moment operator and normalized moment operator of $\caE$ are defined as
\begin{equation}
Q_t(\caE)\coloneqq\sum_i  w_i\phi_i^{\otimes t}, \quad \bQ_t(\caE)\coloneqq D_{[t]}Q_t(\caE).
\end{equation}
The $t$-th frame potential and normalized $t$-th frame potential of $\caE$ are defined as [see \eref{eq:FPt}]
\begin{equation}\label{eq:FPtSupp}
\Phi_t(\caE) \coloneqq \sum_{i,j} w_i w_j \bigl[\tr(\phi_i\phi_j)\bigr]^t,\quad \bPhi_t(\caE) \coloneqq D_{[t]}\,\Phi_t(\caE),
\end{equation}
which can also be expressed as
\begin{equation}
\Phi_t(\caE)=\tr\left\{[Q_t(\caE)]^2\right\},\quad \bPhi_t(\caE)=D_{[t]}\tr\left\{[Q_t(\caE)]^2\right\}=\frac{\tr\left\{[\bQ_t(\caE)]^2\right\}}{D_{[t]}}.
\end{equation}
By definition, we have
\begin{equation} \label{eq:Phittp1}
   \Phi_{t+1}(\caE) \leq \Phi_t(\caE),\quad \bPhi_{t+1}(\caE)\leq \frac{d+t}{t+1} \bPhi_t(\caE),
\end{equation}
given that $D_{[t+1]}=(d+t)D_{[t]}/(t+1)$.

\subsection{\texorpdfstring{Haar-random states and state $t$-designs}{Haar-random states and state t-designs}}

State $t$-designs, also known as complex projective $t$-designs, are configurations of pure states whose $t$-th moments match those of Haar-random states \cite{Zauner11,Renes04,Scott06,Ambainis07}. They can be viewed as the complex analog of spherical designs on the real unit sphere and have found applications in areas such as approximation theory, combinatorics, and quantum information. The ensemble $\caE$ is a state $t$-design if $\bQ_t(\caE)=P_{[t]}$. It is known that $\bPhi_t(\caE)\geq 1$, and the lower bound is saturated if and only if $\caE$ is a $t$-design. In addition, any state $t$-design in dimension $d$ has at least
\begin{align}
\binom{d+\lceil t/2\rceil-1}{\lceil t/2\rceil} \binom{d+\lfloor t/2\rfloor-1}{\lfloor t/2\rfloor}
\end{align}
elements \cite{Hoggar82,Scott06}; the lower bound simplifies to $d^2$ and $d^2(d+1)/2$ for $t=2$ and $t=3$, respectively. When $t=2$, the lower bound is saturated if and only if the ensemble $\caE$ induces a SIC-POVM. It is conjectured and supported by strong evidence that a SIC exists in every finite dimension \cite{Fuchs17,Zauner11,Renes04,Scott10}.
In sharp contrast, it remains open whether a 3-design can always be constructed using only $\caO(d^3)$ states. As a prominent example, the set of $n$-qubit stabilizer states forms a 3-design~\cite{Kueng15,Zhu17,Webb16}, yet the number of such states grows superpolynomially with dimension~$d$.

\subsection{\texorpdfstring{Uniform random phase states and phase $t$-designs}{Uniform random phase states and phase t-designs}}
\label{app:RPtd}
Given a positive integer $t$, two sequences $\bmk,\bml \in \{0,1,\dots,d-1\}^t$ are said to be equivalent if one can be obtained from the other by a permutation of entries. The equivalence class of $\bmk$ is denoted by $[\bmk]$, and its cardinality by $|[\bmk]|$. Let $\caB = \{\ket{k}_\caB\}_{k=0}^{d-1}$ be an orthonormal basis of $\caH$. For each equivalence class $[\bmk]$, we define the corresponding symmetric ket in $\caH^{\otimes t}$ by
\begin{align}
\ket{[\bmk]}_\caB \coloneqq \frac{1}{\sqrt{|[\bmk]|}} \sum_{\bml \in [\bmk]} \ket{l_1}_\caB \otimes \cdots \otimes \ket{l_t}_\caB.
\end{align}
Recall that the uniform random phase (URP) ensemble $\caT_\caB$ with respect to $\caB=\{\ket{k}_\caB\}$ has the form
\begin{equation}
    \caT_\caB=\left\{\frac{1}{\sqrt{d}} \sum_{k=0}^{d-1} \rme^{\rmi \varphi_k} \ket{k}_\caB\colon \rme^{\rmi\varphi_k}\sim \rmU(1)\right\}, \label{eq:URPsupp}
\end{equation}
where $\varphi = (\varphi_0, \varphi_1, \dots, \varphi_{d-1})$ is uniformly distributed over $[0,2\pi)^d$. When $\caB$ is the computational basis, we omit the subscript $\caB$ and simply write $\caT$.

% When no confusion arises, we omit the subscript $\caB$ and simply write $\caT$.

\begin{lemma}\label{lem:URPAVG}
For the URP ensemble $\caT_\caB$, we have
\begin{align}
Q_t(\caT_\caB) = \frac{1}{d^t} \sum_{[\bm{k}]} |[\bm{k}]| \proj{[\bm{k}]}_\caB\le \frac{t!}{d^t}P_{[t]},\quad t=1,2,\ldots \label{eq:URP_AVG}
\end{align}
\end{lemma}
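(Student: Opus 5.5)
We prove the lemma by writing out the $t$-th moment of a uniform random-phase state explicitly and averaging over the phases term by term. Write $\ket{\psi_\varphi}=d^{-1/2}\sum_k \rme^{\rmi\varphi_k}\ket{k}_\caB$. Then
\begin{align}
\psi_\varphi^{\otimes t}=\frac{1}{d^t}\sum_{\bmk,\bml}\rme^{\rmi\sum_{s}(\varphi_{k_s}-\varphi_{l_s})}\ket{k_1\cdots k_t}_\caB\bra{l_1\cdots l_t}.
\end{align}

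The first step is the phase average. The phases $\varphi_0,\ldots,\varphi_{d-1}$ are independent and uniform, so the expectation of $\rme^{\rmi\sum_s(\varphi_{k_s}-\varphi_{l_s})}$ is $1$ when, for every $m$, the value $m$ occurs equally often in $\bmk$ and in $\bml$. Otherwise the expectation is $0$. The nonvanishing condition is exactly that $\bml$ is a permutation of $\bmk$, i.e.\ $\bml\in[\bmk]$. Hence
\begin{align}
Q_t(\caT_\caB)=\frac{1}{d^t}\sum_{[\bmk]}\sum_{\bmk',\bml\in[\bmk]}\ket{\bmk'}_\caB\bra{\bml}.
\end{align}

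The second step is to recognize the inner double sum. By the definition of the symmetric ket, $\sum_{\bmk',\bml\in[\bmk]}\ket{\bmk'}\bra{\bml}=|[\bmk]|\,\proj{[\bmk]}_\caB$. This gives the equality in \eqref{eq:URP_AVG}.

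The third step is the operator inequality. The kets $\ket{[\bmk]}_\caB$, with $[\bmk]$ ranging over all equivalence classes, form an orthonormal basis of the symmetric subspace. Therefore $P_{[t]}=\sum_{[\bmk]}\proj{[\bmk]}_\caB$. Both operators are diagonal in this common basis, so the inequality reduces to the coefficient bound $|[\bmk]|\le t!$. This holds because the orbit of a $t$-tuple under $S_t$ has at most $t!$ elements.

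None of the steps is a real obstacle. The only point needing care is the bookkeeping showing that the phase average enforces equal multiplicities, which is exactly the permutation-equivalence condition.
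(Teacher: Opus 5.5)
Your proposal is correct and follows essentially the same route as the paper. You expand $\psi_\varphi^{\otimes t}$, use the phase average to keep only terms with $\bml\in[\bmk]$, identify each resulting block as $|[\bmk]|\proj{[\bmk]}_\caB$, and obtain the inequality from the orthonormality of the symmetric kets together with $|[\bmk]|\le t!$ (your equal-multiplicity justification of the phase average just spells out a step the paper states without elaboration).
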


We now define phase designs. Let $\caE_\caB$ be an ensemble of phase states (equal-modulus states) of the form
\begin{align}\label{eq:PhaseState}
\ket{\phi} = \frac{1}{\sqrt d} \sum_{k=0}^{d-1}\rme^{\rmi\varphi_k}\ket{k}_\caB.
\end{align}
The ensemble $\caE_\caB$ is called a phase $t$-design with respect to $\caB$ if
\begin{align}
Q_t(\caE_\caB) = Q_t(\caT_\caB).
\end{align}
By definition, any phase $t$-design with $t\geq 1$ is also a state 1-design and thus defines a POVM. In addition, a phase $t$-design determines a projective toric $t$-design \cite{Iosue24}, and vice versa.

\begin{proof}[Proof of \lref{lem:URPAVG}]
By definition, a state drawn from the URP ensemble has the form in \eref{eq:PhaseState} with $\rme^{\rmi\varphi_k}\sim \rmU(1)$.
The $t$-th tensor power of the corresponding projector reads
\begin{align}
\phi^{\otimes t} = \left(\proj{\phi}\right)^{\otimes t} = \frac{1}{d^t} \sum_{k_1,\dots,k_t} \sum_{l_1,\dots,l_t} \rme^{\rmi(\varphi_{k_1}+\dots+\varphi_{k_t}-\varphi_{l_1}-\dots-\varphi_{l_t})} \ket{k_1\cdots k_t}\bra{l_1\cdots l_t}_\caB.
\end{align}
The expectation of the phase factor over $\caT_\caB$ reads
\begin{align}
\underset{\phi\sim\caT_\caB}{\bbE} \left[\rme^{\rmi(\sum_{s=1}^t \varphi_{k_s} - \sum_{s=1}^t \varphi_{l_s})}\right] = \begin{cases} 1 & \text{if } [\bmk]=[\bml], \\ 0 & \text{otherwise}. \end{cases}
\end{align}
The above two equations together imply the equality in \eref{eq:URP_AVG}. The inequality holds because $\left\{\ket{[\bmk]}_\caB\right\}_{[\bmk]}$ forms an orthonormal basis of the symmetric subspace of $\caH^{\otimes t}$
and $|[\bmk]|\leq t!$.
\end{proof}

\subsection{Proof of \pref{pro:FP3UB}}
\begin{proof}[Proof of \pref{pro:FP3UB}]
Suppose the state ensemble $\caE$ has the form $\caE=\{\ket{\phi_i},w_i\}_i$ and let $\phi_{i,0}=\phi_i-\bbone/{d}$ for each index $i$. Since $0\leq \tr(\phi_i\phi_j)\leq 1$ for all $i,j$, we have
\begin{align}
\sum_{i,j}w_iw_j\tr(\phi_i\phi_j)\left[\tr\left(\phi_{i,0}\phi_{j,0}\right)\right]^2\le \sum_{i,j}w_iw_j\left[\tr\left(\phi_{i,0}\phi_{j,0}\right)\right]^2. \label{eq:phi_3_eq1}
\end{align}
Direct calculation yields
\begin{align}
\sum_{i,j}w_iw_j\tr(\phi_i\phi_j)\left[\tr\left(\phi_{i,0}\phi_{j,0}\right)\right]^2&=\Phi_3(\caE)-\frac{2}{d}\Phi_2(\caE)+\frac{1}{d^2}\Phi_1(\caE)=\Phi_3(\caE)-\frac{3d-1}{d^3(d+1)},\\
\sum_{i,j}w_iw_j\left[\tr\left(\phi_{i,0}\phi_{j,0}\right)\right]^2&=\Phi_2(\caE)-\frac{2}{d}\Phi_1(\caE)+\frac{1}{d^2}=\frac{d-1}{d^2(d+1)},
\end{align}
where the second equality in each line holds because $\Phi_1(\caE)=1/d$ and $\Phi_2(\caE)=2/[d(d+1)]$, given that $\caE$ is a 2-design. Combining the above three equations yields 
\begin{equation}
  \Phi_3(\caE)\leq  \frac{d^2+2d-1}{d^3(d+1)},\quad \bPhi_3(\caE)
		\leq \frac{(d+2)(d^2+2d-1)}{6d^2}. 
\end{equation}
which confirms
\pref{pro:FP3UB}.
\end{proof}

\subsection{Proofs of \psref{pro:Phase2design}{pro:Phase3design}}
\Pref{pro:Phase3design} is a simple corollary of the following lemma, which guarantees that the third moment operators of $\caT_3$ and $\tcaT_3$ have the form in \lref{lem:URPAVG}. \Pref{pro:Phase2design} follows from similar but simpler reasoning.

\begin{lemma}\label{lem:DesignPermutation}
Suppose $x_1, x_2, x_3, y_1, y_2, y_3\in\{0,1,2,\ldots, d-1\}$. Then the sequence $y_1, y_2, y_3$ is a permutation of $x_1, x_2, x_3$ if and only if
\begin{equation}\label{eq:ModularCondition}
\begin{aligned}
     x_1+x_2+x_3&=y_1+y_2+y_3\mmod d,\\
f_3(x_1)+f_3(x_2)+f_3(x_3)&=f_3(y_1)+f_3(y_2)+f_3(y_3)\mmod 7,\\
x_1^2+x_2^2+x_3^2&=y_1^2+y_2^2+y_3^2\mmod p,\\
x_1^3+x_2^3+x_3^3&=y_1^3+y_2^3+y_3^3\mmod p.
\end{aligned}
\end{equation}
If $d$ is not divisible by 3, then the sequence $y_1, y_2, y_3$ is a permutation of $x_1, x_2, x_3$ if and only if
\begin{equation}\label{eq:ModularCondition2}
\begin{aligned}
     x_1+x_2+x_3&=y_1+y_2+y_3\mmod d,\\
x_1+x_2+x_3&=y_1+y_2+y_3\mmod 3,\\
x_1^2+x_2^2+x_3^2&=y_1^2+y_2^2+y_3^2\mmod p,\\
x_1^3+x_2^3+x_3^3&=y_1^3+y_2^3+y_3^3\mmod p.
\end{aligned}
\end{equation}
\end{lemma}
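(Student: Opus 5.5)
The "only if" direction is immediate, since every condition in \eref{eq:ModularCondition} and \eref{eq:ModularCondition2} is a symmetric function of the triple. For the converse, I will first use the two power-sum congruences modulo $p$. The aim is to show that $\{x_s\}$ and $\{y_s\}$ agree as multisets after fixing the linear sum $e_1=x_1+x_2+x_3$ as an integer, not merely modulo $d$.

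\textbf{Step 1: pin down the integer sum.} All entries lie in $\{0,\dots,d-1\}$, so the integer sums $S_x=\sum x_s$ and $S_y=\sum y_s$ lie in $[0,3d-3]$. The first condition gives $S_x-S_y\in\{0,\pm d,\pm 2d\}$. I will use the second condition to rule out the nonzero cases.
\begin{itemize}
\item Write $x=f_3(x)\,d/3+r$ in spirit, i.e., $f_3(x)=\lfloor 3x/d\rfloor\in\{0,1,2\}$ records which third of $[0,d)$ the entry $x$ lies in.
\item Put $F_x=\sum f_3(x_s)\in[0,6]$. Since these values lie in a range of length at most $7$, the congruence modulo $7$ forces $F_x=F_y$ as integers.
\item From $3x_s/d-1<f_3(x_s)\le 3x_s/d$, summing gives $3S_x/d-3<F_x\le 3S_x/d$, so $S_x\in[dF_x/3,\,d(F_x+3)/3)$. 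The same holds for $S_y$.
\item Hence $|S_x-S_y|<d$, and combined with the first condition this yields $S_x=S_y$.
\end{itemize}
In the alternative setting with $3\nmid d$, the mod-$d$ and mod-$3$ conditions combine by CRT into $S_x\equiv S_y\pmod{3d}$. Since $|S_x-S_y|\le 3d-3<3d$, again $S_x=S_y$.

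\textbf{Step 2: recover the triple modulo $p$.} With $e_1$ fixed in $\mathbb{Z}$, Newton's identities over $\mathbb{F}_p$ give $e_2=(p_1^2-p_2)/2$ and $e_3=(p_3-3e_1p_2/2+e_1^3/2)/3$. This requires $2$ and $3$ to be invertible, which holds because $p\ge5$. So the elementary symmetric polynomials of the two triples agree modulo $p$. Consequently the polynomials $\prod_s(z-x_s)$ and $\prod_s(z-y_s)$ coincide in $\mathbb{F}_p[z]$, and by unique factorization the multisets $\{x_s \bmod p\}$ and $\{y_s \bmod p\}$ agree.

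\textbf{Step 3: lift back to the integers.} Since $p\ge d$, reduction modulo $p$ is injective on $\{0,\dots,d-1\}$, so the multisets agree as integer multisets. That is, $y$ is a permutation of $x$.

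I expect the main obstacle to be Step 1, in particular getting the interval estimate sharp enough that the window has width strictly less than $d$; this is precisely why the modulus $7$ and three bins appear. The rest is routine Newton-identity algebra over a field of characteristic $\ge5$.
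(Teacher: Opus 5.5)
Your proof is correct and follows essentially the same route as the paper. In both, the mod-$7$ condition on $\sum_s f_3(x_s)$ (or, when $d$ is not divisible by $3$, the combination of the mod-$d$ and mod-$3$ conditions into a congruence modulo $3d$) pins down the integer sum, and the power-sum congruences modulo $p$ then fix the elementary symmetric polynomials and hence the multiset. You also make explicit details the paper leaves implicit: the interval estimate giving $|S_x-S_y|<d$, the Newton identities that need $p\geq 5$ so that $2$ and $3$ are invertible, and the injectivity of reduction modulo $p$ on $\{0,\dots,d-1\}$, which uses $p\geq d$.
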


\begin{proof}[Proof of \lref{lem:DesignPermutation}]
If the sequence $y_1, y_2, y_3$ is a permutation of $x_1, x_2, x_3$, then \eref{eq:ModularCondition} holds automatically. Conversely, suppose \eref{eq:ModularCondition} holds. Then the second line implies that 
\begin{equation}
f_3(x_1)+f_3(x_2)+f_3(x_3)=f_3(y_1)+f_3(y_2)+f_3(y_3),\quad |y_1+y_2+y_3-x_1-x_2-x_3|<d,
\end{equation}
given that $0\leq f_3(x)\leq 2$ for $x\in \{0,1,2,\ldots, d-1\}$. The first two lines in \eref{eq:ModularCondition} together yield $x_1+x_2+x_3=y_1+y_2+y_3$ and
\begin{equation}\label{eq:ElementaryPoly1}
x_1+x_2+x_3=y_1+y_2+y_3\mmod p.
\end{equation}
In conjunction with the last two lines in \eref{eq:ModularCondition}, we can deduce that
\begin{equation}\label{eq:ElementaryPoly23}
\begin{aligned}
x_1 x_2 +x_2 x_3+x_3x_1 &= y_1 y_2 +y_2 y_3+y_3y_1 \mmod p, \\
x_1 x_2 x_3 &= y_1 y_2 y_3 \mmod p.
\end{aligned}
\end{equation}
\Eqsref{eq:ElementaryPoly1}{eq:ElementaryPoly23} together imply that the sequence $y_1, y_2, y_3$ is a permutation of $x_1, x_2, x_3$.

Next, suppose $d$ is not divisible by 3; then $d$ and 3 are coprime. If the sequence $y_1, y_2, y_3$ is a permutation of $x_1, x_2, x_3$, then \eref{eq:ModularCondition2} holds automatically. Conversely, suppose \eref{eq:ModularCondition2} holds. The first two lines together give $x_1+x_2+x_3=y_1+y_2+y_3 \mmod 3d$, which in turn implies that $x_1+x_2+x_3=y_1+y_2+y_3$. So \eqsref{eq:ElementaryPoly1}{eq:ElementaryPoly23} hold as before, and the sequence $y_1, y_2, y_3$ is a permutation of $x_1, x_2, x_3$.
\end{proof}

\subsection{Auxiliary results on frame potentials}
\label{app:potential}
Here we introduce auxiliary results on (normalized) frame potentials and their variants, which are useful for studying shadow norms in shadow estimation.

Suppose $\caE=\{\ket{\phi_{i}}, w_i\}_{i}$ is a pure state ensemble on $\caH$ and $\psi \in \caP(\caH)$. We define the $t$-th frame potential of $\caE$ relative to $\psi$ and its normalized version as
\begin{align}  \Phi_t(\caE,\psi)\coloneqq\sum_iw_i\left[\tr(\psi \phi_i)\right]^t,\quad \bPhi_t(\caE,\psi)\coloneqq D_{[t]}\Phi_t(\caE,\psi).
\end{align}
where $D_{[t]}=\binom{d + t - 1}{t}$ is the dimension of the symmetric subspace  of $\caH^{\otimes t}$. These definitions are also applicable when $t=0$, in which case 
$\bPhi_t(\caE,\psi)=\Phi_t(\caE,\psi)=\bPhi_t(\caE)=\Phi_t(\caE)=D_{[t]}=1$. 
By definition, $\Phi_t(\caE)=\sum_j w_j\Phi_t(\caE,\phi_j)$, which implies
\begin{equation}
\min_{\psi\in \caP(\caH)} \Phi_t(\caE,\psi) \le \Phi_t(\caE) \le \max_{\psi\in \caP(\caH)} \Phi_t(\caE,\psi).
\end{equation}

% For the convenient of the following discussion we also define 

\begin{lemma}\label{lem:FPrelation}
Suppose $\caE$ is a state ensemble on $\caH$, $\psi\in\caP(\caH)$, and $t$ is a positive integer. Then
\begin{align}
   \Phi_t^2(\caE)&\leq \Phi_{t-1}(\caE) \Phi_{t+1}(\caE), & \Phi_t^2(\caE,\psi)&\leq \Phi_{t-1}(\caE,\psi) \Phi_{t+1}(\caE,\psi), \label{eq:FPrelation}\\
     \frac{\bPhi_t^2(\caE)}{D_{[t]}^2}&\leq \frac{\bPhi_{t-1}(\caE) \bPhi_{t+1}(\caE)}{D_{[t-1]}D_{[t+1]}},& \frac{\bPhi_t^2(\caE,\psi)}{D_{[t]}^2}&\leq \frac{\bPhi_{t-1}(\caE,\psi) \bPhi_{t+1}(\caE,\psi)}{D_{[t-1]}D_{[t+1]}}.
     \label{eq:NFPrelation}
\end{align}
If $\caE$ forms a state 2-design, then
\begin{align}
     \bPhi_3(\caE,\psi)\ge\dfrac{2(d+2)}{3(d+1)}> \dfrac{2}{3}, \quad \bPhi_4(\caE,\psi)\ge\dfrac{(d+2)(d+3)}{3(d+1)^2}> \dfrac{1}{3}, \label{eq:NFP34LB}
\end{align}
and the same bounds hold with $\bPhi_t(\caE,\psi)$ replaced by $\bPhi_t(\caE)$ for $t=3,4$.
\end{lemma}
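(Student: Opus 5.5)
The plan is to obtain the log-convexity inequalities in \eref{eq:FPrelation} from Cauchy--Schwarz, to rescale them using $D_{[t]}$ to get \eref{eq:NFPrelation}, and then to specialize to $t=3$ and $t=2$ with the 2-design values of the lower frame potentials inserted.

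\textbf{Relative frame potential.} Write $x_i=\tr(\psi\phi_i)\in[0,1]$. Then
\[
\Phi_t(\caE,\psi)=\sum_i w_i x_i^{t}=\sum_i \bigl(\sqrt{w_i}\,x_i^{(t-1)/2}\bigr)\bigl(\sqrt{w_i}\,x_i^{(t+1)/2}\bigr).
\]
Cauchy--Schwarz gives $\Phi_t^2(\caE,\psi)\le \Phi_{t-1}(\caE,\psi)\Phi_{t+1}(\caE,\psi)$. For $t=1$ we use $\Phi_0=1$, so this covers all positive $t$.

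\textbf{Ordinary frame potential.} Apply the same argument to the double sum, with weights $w_iw_j$ and variables $y_{ij}=\tr(\phi_i\phi_j)\ge0$. Substituting $\Phi_t=\bPhi_t/D_{[t]}$ into both inequalities is exactly \eref{eq:NFPrelation}.

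\textbf{Case $t=3$.} Use \eref{eq:FPrelation} with $t=2$: $\Phi_3\ge \Phi_2^2/\Phi_1$. For a 2-design we have $\Phi_1(\caE,\psi)=1/d$ and $\Phi_2(\caE,\psi)=\tr(Q_2\,\psi^{\otimes2})=2/[d(d+1)]$. The same values hold for the ordinary potentials $\Phi_1(\caE)$ and $\Phi_2(\caE)$, as used in the proof of \pref{pro:FP3UB}. Hence
\[
\Phi_3\ge \frac{4}{d(d+1)^2}.
\]
Multiplying by $D_{[3]}=d(d+1)(d+2)/6$ gives $\bPhi_3\ge 2(d+2)/[3(d+1)]$, which exceeds $2/3$.

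\textbf{Case $t=4$.} Use \eref{eq:FPrelation} with $t=3$: $\Phi_4\ge\Phi_3^2/\Phi_2$. Insert the lower bound on $\Phi_3$ just obtained, which is legitimate since both sides are nonnegative. This gives
\[
\Phi_4\ge \frac{16}{d^2(d+1)^4}\cdot\frac{d(d+1)}{2}=\frac{8}{d(d+1)^3}.
\]
Multiplying by $D_{[4]}=d(d+1)(d+2)(d+3)/24$ yields $\bPhi_4\ge (d+2)(d+3)/[3(d+1)^2]$. This exceeds $1/3$ because $(d+2)(d+3)>(d+1)^2$.

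Since the 2-design values of $\Phi_1$ and $\Phi_2$ are identical in the relative and non-relative versions, every step applies verbatim to $\bPhi_t(\caE)$. There is no real obstacle here. The only points needing care are that the recursion for $t=4$ chains the $t=3$ bound, and that $\Phi_0=1$ is used for $t=1$.
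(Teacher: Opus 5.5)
Your proposal is correct and follows essentially the same route as the paper: Cauchy--Schwarz for the log-convexity inequalities, rescaling by $D_{[t]}$, then chaining the $t=2$ and $t=3$ instances using the 2-design values $\Phi_1=1/d$ and $\Phi_2=2/[d(d+1)]$. The paper phrases these values in normalized form as $\bPhi_1=\bPhi_2=1$, and your arithmetic for both lower bounds checks out.
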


\begin{lemma}\label{lem:8th_moment}
Suppose $\psi,\phi\in\caP(\caH)$. Then
\begin{align}
   \tr\left[P_{[8]}\left(\psi^{\otimes 4}\otimes \phi^{\otimes 4}\right)\right]=\frac{1+16\tr(\psi\phi)+36\left[\tr(\psi\phi)\right]^2+16\left[\tr(\psi\phi)\right]^3+\left[\tr(\psi\phi)\right]^4}{70}.
\end{align}
\end{lemma}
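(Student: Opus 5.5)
The plan is to avoid the permutation-sum expansion of $P_{[8]}$ and instead write the quantity as the squared norm of a symmetrized product vector, then count overlaps combinatorially. Since $\psi^{\otimes 4}\otimes\phi^{\otimes 4}=\proj{\Psi}$ with $\ket{\Psi}=\ket{\psi}^{\otimes 4}\otimes\ket{\phi}^{\otimes 4}$, and $P_{[8]}$ is an orthogonal projector, we have
\begin{align}
\tr\bigl[P_{[8]}\bigl(\psi^{\otimes 4}\otimes\phi^{\otimes 4}\bigr)\bigr]=\bigl\|P_{[8]}\ket{\Psi}\bigr\|^2 .
\end{align}

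\emph{Step 1: compute the symmetrized vector.} Write $P_{[8]}=\frac{1}{8!}\sum_{\pi\in S_8}V_\pi$, where $V_\pi$ permutes the tensor factors. For each $4$-element subset $S\subseteq\{1,\ldots,8\}$, let $\ket{\Psi_S}$ be the product vector with $\ket{\psi}$ in the positions of $S$ and $\ket{\phi}$ elsewhere. Every $V_\pi\ket{\Psi}$ equals some $\ket{\Psi_S}$. By symmetry, each of the $\binom{8}{4}=70$ subsets arises from exactly $4!\,4!$ permutations. Hence
\begin{align}
P_{[8]}\ket{\Psi}=\frac{1}{70}\sum_{|S|=4}\ket{\Psi_S}.
\end{align}
This holds even if $\psi=\phi$, because it is an identity at the level of the sum.

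\emph{Step 2: evaluate inner products.} For subsets $S,T$ of size $4$, compute $\braket{\Psi_S}{\Psi_T}$ factor by factor. Positions in $S\cap T$ and positions in the complement of $S\cup T$ each contribute $1$. Positions in $S\setminus T$ contribute $\braket{\psi}{\phi}$, and positions in $T\setminus S$ contribute $\braket{\phi}{\psi}$. Since $|S\setminus T|=|T\setminus S|=j$, we get $\braket{\Psi_S}{\Psi_T}=\lvert\braket{\psi}{\phi}\rvert^{2j}=[\tr(\psi\phi)]^j$.

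\emph{Step 3: count and sum.} Fix $S$. The number of $T$ with $|S\setminus T|=j$ is $\binom{4}{j}\binom{4}{4-j}=\binom{4}{j}^2$, which gives the multiplicities $1,16,36,16,1$ for $j=0,\ldots,4$. Summing over $S$ and $T$ yields
\begin{align}
\bigl\|P_{[8]}\ket{\Psi}\bigr\|^2=\frac{70}{70^2}\sum_{j=0}^{4}\binom{4}{j}^2[\tr(\psi\phi)]^j,
\end{align}
which is exactly the stated formula.

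There is no real obstacle here. The only point needing care is the multiplicity count in Step 1, which is what guarantees the uniform weight $1/70$ on each $\ket{\Psi_S}$. As a check, setting $\psi=\phi$ gives $70/70=1$, as expected.
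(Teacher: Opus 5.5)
Your proof is correct and is essentially the paper's argument. The paper also expands $P_{[8]}=\frac{1}{8!}\sum_{\sigma\in S_8}R(\sigma)$ and evaluates each term as $[\tr(\psi\phi)]^{\gamma(\sigma)}$, where $\gamma(\sigma)$ counts the $\psi$-slots sent to $\phi$-slots; its counts $576\binom{4}{m}^2$ are your $\binom{4}{j}^2$ times $4!\,4!$, and your grouping into the $70$ subsets $S$ merely repackages them (the double sum via $\|P_{[8]}\ket{\Psi}\|^2$ could also be shortened to $\bra{\Psi}P_{[8]}\ket{\Psi}=\frac{1}{70}\sum_{S}\braket{\Psi}{\Psi_S}$).
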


\begin{lemma}\label{lem:4th_poten}
Suppose $\caE$ is a state ensemble on $\caH$ and $\psi \in \caP(\caH)$ is a Haar-random pure state. Then
\begin{align}
     \underset{\psi\sim \Ha}{\bbE} \bPhi_4(\caE,\psi)&=1, \label{eq:4th_moment}\\
   \underset{\psi\sim \Ha}{\bbE} \left[\bPhi_4(\caE,\psi)\right]^2&=\frac{D_1}{D_2}\left(1+\dfrac{16\bPhi_1(\caE)}{D_{[1]}}+\dfrac{36\bPhi_2(\caE)}{D_{[2]}}+\dfrac{16\bPhi_3(\caE)}{D_{[3]}}+\dfrac{\bPhi_4(\caE)}{D_{[4]}}\right), \label{eq:4th_squared}
\end{align}
where $D_1=d(d+1)(d+2)(d+3)$ and $D_2=(d+4)(d+5)(d+6)(d+7)$. If $\caE$ forms a state 2-design, then
\begin{align}
\Var(\bPhi_4(\caE,\psi))&\le\frac{6(17d+51)[\bPhi_3(\caE)-1]+6(d-1)}{D_2}<\frac{[\xi(\caE)]^2}{(d+6)^3}, \label{eq:4th_mo_var}
\end{align}
where $\xi(\caE)=\sqrt{102[\bPhi_3(\caE)-1]+6}$, as defined in \eref{eq:Xi}.
\end{lemma}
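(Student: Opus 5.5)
The plan is to write $\bPhi_4(\caE,\psi)$ as a linear functional of $\psi^{\otimes 4}$ and then use Haar moments. By definition,
\[
\bPhi_4(\caE,\psi)=D_{[4]}\sum_i w_i\tr\bigl(\psi^{\otimes 4}\phi_i^{\otimes 4}\bigr).
\]

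\emph{First moment.} The Haar average is $\bbE\,\psi^{\otimes 4}=P_{[4]}/D_{[4]}$. Hence
\[
\bbE\,\bPhi_4(\caE,\psi)=\sum_i w_i\tr\bigl(P_{[4]}\phi_i^{\otimes4}\bigr)=\sum_i w_i=1,
\]
since $\phi_i^{\otimes4}$ lies in the symmetric subspace. This gives \eref{eq:4th_moment}.

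\emph{Second moment.} We have $[\bPhi_4(\caE,\psi)]^2=D_{[4]}^2\sum_{i,j}w_iw_j\tr\bigl[\psi^{\otimes 8}(\phi_i^{\otimes4}\otimes\phi_j^{\otimes4})\bigr]$. Apply $\bbE\,\psi^{\otimes8}=P_{[8]}/D_{[8]}$ and then \lref{lem:8th_moment} with $x_{ij}=\tr(\phi_i\phi_j)$. Summing over $i,j$ with weights $w_iw_j$ turns $x_{ij}^t$ into $\Phi_t(\caE)=\bPhi_t(\caE)/D_{[t]}$, with $\Phi_0=1$. The prefactor is
\[
\frac{D_{[4]}^2}{70D_{[8]}}=\frac{[d(d+1)(d+2)(d+3)]^2\,8!}{24^2\cdot 70\, d(d+1)\cdots(d+7)}=\frac{D_1}{D_2},
\]
using $8!/70=576=24^2$. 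This yields \eref{eq:4th_squared}.

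\emph{Variance bound.} For a 2-design, substitute $\Phi_1=1/d$ and $\Phi_2=2/[d(d+1)]$. Eliminate $\bPhi_4$ with \eref{eq:Phittp1}, which gives $\bPhi_4\le\frac{d+3}{4}\bPhi_3$. Multiplying the bracket in \eref{eq:4th_squared} by $D_1$ gives
\[
D_1+16(d+1)(d+2)(d+3)+72(d+2)(d+3)+96(d+3)\bPhi_3+24\bPhi_4,
\]
and the last two terms are at most $102(d+3)\bPhi_3$. Then
\[
\Var\bigl(\bPhi_4(\caE,\psi)\bigr)=\bbE\,[\bPhi_4(\caE,\psi)]^2-1\le\frac{N(d)+102(d+3)\bPhi_3-D_2}{D_2},
\]
where $N(d)$ denotes the polynomial part above. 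Writing $102(d+3)\bPhi_3=6(17d+51)[\bPhi_3-1]+102(d+3)$, the remaining task is the polynomial identity
\[
N(d)+102(d+3)-D_2=6(d-1).
\]
Both sides are explicit polynomials in $d$. I expect this bookkeeping, i.e.\ getting all coefficients to cancel exactly, to be the only fiddly step; I would verify it by expanding both quartics.

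\emph{Final strict inequality.} Since $\bPhi_3\ge1$, the numerator is at most $(d+3)\bigl(102[\bPhi_3-1]+6\bigr)=(d+3)[\xi(\caE)]^2$. Moreover, because $(d+5)(d+7)=(d+6)^2-1$,
\[
D_2=(d+4)(d+6)\bigl[(d+6)^2-1\bigr].
\]
So $D_2>(d+3)(d+6)^3$ reduces to $(d+6)^2>d+4$, which is obvious. Dividing the numerator bound by this gives \eref{eq:4th_mo_var}.
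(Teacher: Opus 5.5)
Your proposal is correct and follows the paper's proof essentially step for step: the Haar moments $\bbE\,\psi^{\otimes t}=P_{[t]}/D_{[t]}$ for $t=4,8$, then \lref{lem:8th_moment} with $8!/70=24^2$, then the bound $\bPhi_4(\caE)\le\frac{d+3}{4}\bPhi_3(\caE)$ from \eref{eq:Phittp1}, and finally $(d+3)(d+6)^3<D_2$. The polynomial identity you deferred does hold: $N(d)=d^4+22d^3+179d^2+542d+528$ and $D_2=d^4+22d^3+179d^2+638d+840$ give $N(d)-D_2=-96d-312$, which matches the paper's intermediate form $\Var=24[(4d+12)\bPhi_3+\bPhi_4-4d-13]/D_2$ and yields your $6(d-1)$ after adding $102(d+3)$.
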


\begin{lemma}\label{lem:RP_URP}
Suppose $\caE$ is a state ensemble on $\caH$ and $\caB$ is an orthonormal basis of $\caH$. Then
\begin{align}
     \underset{\psi\sim \caT_\caB}{\bbE}\bPhi_t(\caE, \psi)\le\frac{t!D_{[t]}}{d^t},\label{eq:RP_URP}
\end{align}
where $\caT_\caB$ is the uniform random phase ensemble with respect to $\caB$, as defined in \eref{eq:URPsupp}.
\end{lemma}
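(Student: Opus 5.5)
The statement to prove is \lref{lem:RP_URP}. By linearity it reduces to a single state of $\caE$, and then to \lref{lem:URPAVG}.

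First I rewrite the relative frame potential as a trace of a tensor power. Since $[\tr(\psi\phi_i)]^t=\tr(\psi^{\otimes t}\phi_i^{\otimes t})$, we have
\[
\bPhi_t(\caE,\psi)=D_{[t]}\sum_i w_i\,\tr\bigl(\psi^{\otimes t}\phi_i^{\otimes t}\bigr).
\]
Averaging over $\psi\sim\caT_\caB$ and using linearity, the expectation $\bbE_{\psi\sim\caT_\caB}\psi^{\otimes t}$ is exactly the moment operator $Q_t(\caT_\caB)$. This gives
\[
\underset{\psi\sim\caT_\caB}{\bbE}\bPhi_t(\caE,\psi)=D_{[t]}\sum_i w_i\,\tr\bigl[Q_t(\caT_\caB)\,\phi_i^{\otimes t}\bigr].
\]

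Next I invoke \lref{lem:URPAVG}, which states the operator inequality $Q_t(\caT_\caB)\le \frac{t!}{d^t}P_{[t]}$. Each $\phi_i^{\otimes t}$ is positive semidefinite, so the trace against it is monotone: if $A\le B$ then $\tr(A\phi_i^{\otimes t})\le\tr(B\phi_i^{\otimes t})$. Hence
\[
\tr\bigl[Q_t(\caT_\caB)\phi_i^{\otimes t}\bigr]\le \frac{t!}{d^t}\tr\bigl(P_{[t]}\phi_i^{\otimes t}\bigr)=\frac{t!}{d^t},
\]
where the last step uses that $\phi_i^{\otimes t}$ is a pure state supported on the symmetric subspace.

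Finally, summing over $i$ with weights $w_i$ that sum to $1$ gives the bound $t!\,D_{[t]}/d^t$. No step is a real obstacle. The only point to check is that the positivity of $\phi_i^{\otimes t}$ justifies carrying the operator inequality through the trace. The lemma also needs no design assumption on $\caE$, which is consistent with its statement.
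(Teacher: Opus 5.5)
Your proposal is correct and follows the paper's proof essentially step for step. You write $[\tr(\phi_i\psi)]^t=\tr(\phi_i^{\otimes t}\psi^{\otimes t})$, average $\psi^{\otimes t}$ over $\caT_\caB$ to obtain $Q_t(\caT_\caB)$, apply the operator bound $Q_t(\caT_\caB)\le t!\,P_{[t]}/d^t$ from \lref{lem:URPAVG}, and finish with $\tr(P_{[t]}\phi_i^{\otimes t})=1$ and $\sum_i w_i=1$. Your remark that positivity of $\phi_i^{\otimes t}$ is what carries the operator inequality through the trace is exactly the justification the paper leaves implicit in its single inequality step.
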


\subsection{Proofs of auxiliary results \lsref{lem:FPrelation}{lem:RP_URP}}

\begin{proof}[Proof of \lref{lem:FPrelation}]
\Eref{eq:FPrelation} follows from the Cauchy--Schwarz inequality applied to the definitions of frame potentials and relative frame potentials. \Eref{eq:NFPrelation} is a simple corollary of \eref{eq:FPrelation} given that $\bPhi_t(\caE)=D_{[t]}\Phi_t(\caE)$ and $\bPhi_t(\caE,\psi)=D_{[t]}\Phi_t(\caE,\psi)$ for any positive integer $t$.

Next, suppose $\caE$ is a 2-design; then $\bPhi_1(\caE,\psi)=\bPhi_1(\caE)=1$ and $\bPhi_2(\caE,\psi)=\bPhi_2(\caE)=1$. By virtue of \eref{eq:NFPrelation} with $t=2,3$, we deduce that
\begin{align}
\bPhi_3(\caE,\psi)&\ge \frac{D_{[1]}D_{[3]}\bPhi_2^2(\caE,\psi)}{D_{[2]}^2 \bPhi_1(\caE,\psi)}=\frac{D_{[1]}D_{[3]}}{D_{[2]}^2}=\dfrac{2(d+2)}{3(d+1)}> \dfrac{2}{3},\\
\bPhi_4(\caE,\psi)&\ge \frac{D_{[2]}D_{[4]}\bPhi_3^2(\caE,\psi)}{D_{[3]}^2 \bPhi_2(\caE,\psi)} =\frac{3(d+3)\bPhi_3^2(\caE,\psi)}{4(d+2)} \ge\dfrac{(d+2)(d+3)}{3(d+1)^2}> \dfrac{1}{3},
\end{align}
which confirm \eref{eq:NFP34LB}. The same reasoning applies with $\bPhi_t(\caE,\psi)$ replaced by $\bPhi_t(\caE)$ for $t=1,2,3,4$. This completes the proof of \lref{lem:FPrelation}.
\end{proof}

\begin{proof}[Proof of \lref{lem:8th_moment}]
Note that $P_{[8]}=\sum_{\sigma \in S_8} R(\sigma)/8!$, where $S_t$ for a positive integer $t$ denotes the symmetric group of $t$ numbers, and $R(\sigma)$ denotes the unitary representation of the permutation $\sigma$. Accordingly,
\begin{align}
\tr\left[P_{[8]}\left(\psi^{\otimes 4}\otimes \phi^{\otimes 4}\right)\right] = \frac{1}{8!} \sum_{\sigma \in S_8} \tr\left[R(\sigma)\left(\psi^{\otimes 4}\otimes \phi^{\otimes 4} \right)\right]= \frac{1}{8!} \sum_{\sigma \in S_8} [\tr(\psi\phi)]^{\gamma(\sigma)},
\end{align}
where $\gamma(\sigma)=|\sigma(\{1,2,3,4\})\cap\{5,6,7,8\}|$. A straightforward counting argument shows that
\begin{align}
|\{\sigma \in S_8 \mid \gamma(\sigma)=m \}|=\begin{cases} 576 & m=0,\, 4,\\ 9216 & m=1,\, 3,\\ 20736 & m=2. \end{cases}
\end{align}
Combining the above two equations, we obtain
\begin{align}
\tr\left[P_{[8]}\left(\psi^{\otimes 4}\otimes \phi^{\otimes 4}\right)\right]&= \frac{576+9216\tr(\psi\phi)+20736\left[\tr(\psi\phi)\right]^2+9216\left[\tr(\psi\phi)\right]^3+576\left[\tr(\psi\phi)\right]^4}{8!}\nonumber\\
&=\frac{1+16\tr(\psi\phi)+36\left[\tr(\psi\phi)\right]^2+16\left[\tr(\psi\phi)\right]^3+\left[\tr(\psi\phi)\right]^4}{70}, \label{eq:8th_moment}
\end{align}
which completes the proof of \lref{lem:8th_moment}.
\end{proof}

\begin{proof}[Proof of \lref{lem:4th_poten}]
For any positive integer $t$, we have
\begin{align}
    \underset{\psi\sim \Ha}{\bbE}\psi^{\otimes t}=\dfrac{P_{[t]}}{D_{[t]}},
\end{align}
where $P_{[t]}$ denotes the projector onto the $t$-partite symmetric subspace of $\caH^{\otimes t}$ and $D_{[t]}=\tr(P_{[t]})=\binom{d + t - 1}{t}$. Based on this observation, \eref{eq:4th_moment} can be proved as follows:
\begin{align}
\underset{\psi\sim \Ha}{\bbE} \bPhi_4(\caE,\psi)=D_{[4]}\underset{\psi\sim \Ha}{\bbE}\sum_iw_i\left[\tr(\psi\phi_i)\right]^4=D_{[4]}\sum_iw_i\tr\left[\left(\underset{\psi\sim \Ha}{\bbE} \psi^{\otimes 4}\right)\phi_i^{\otimes 4}\right]=\sum_iw_i\tr\left(P_{[4]}\phi_i^{\otimes 4}\right)=1.
\end{align}
In conjunction with \lref{lem:8th_moment}, \eref{eq:4th_squared} can be verified as follows:
\begin{align}
    &\underset{\psi\sim \Ha}{\bbE} \left[\bPhi_4(\caE,\psi)\right]^2=D_{[4]}^2\sum_{i,j}w_iw_j\underset{\psi\sim \Ha}{\bbE}\left[\left[\tr(\psi \phi_{i})\right]^4\left[\tr(\psi \phi_{j})\right]^4\right] = \dfrac{D_{[4]}^2}{D_{[8]}}\sum_{i,j}w_iw_j\tr\left[P_{[8]}\left(\phi_{i}^{\otimes 4}\otimes \phi_{j}^{\otimes 4}\right)\right]\nonumber\\
    &=\dfrac{D_{[4]}^2}{D_{[8]}}\sum_{i,j}\frac{w_iw_j\left\{1+16\tr(\phi_i\phi_j)+36\left[\tr(\phi_i\phi_j)\right]^2+16\left[\tr(\phi_i\phi_j)\right]^3+\left[\tr(\phi_i\phi_j)\right]^4\right\}}{70}\nonumber\\
&=\frac{D_{[4]}^2}{70D_{[8]}}\left[1+16\Phi_1(\caE)+36\Phi_2(\caE)+16\Phi_3(\caE)+\Phi_4(\caE)\right]=\frac{D_1}{D_2}\left(1+\dfrac{16\bPhi_1(\caE)}{D_{[1]}}+\dfrac{36\bPhi_2(\caE)}{D_{[2]}}+\dfrac{16\bPhi_3(\caE)}{D_{[3]}}+\dfrac{\bPhi_4(\caE)}{D_{[4]}}\right),
\end{align}
where $D_1=d(d+1)(d+2)(d+3)$ and $D_2=(d+4)(d+5)(d+6)(d+7)$.

Next, suppose $\caE$ is a state 2-design; then $\bPhi_2(\caE)=\bPhi_1(\caE)=1$. The variance of $\bPhi_4(\caE,\psi)$ can be upper bounded as follows:
\begin{align}
\Var(\bPhi_4(\caE,\psi)) &= \underset{\psi\sim \Ha}{\bbE} \left[\bPhi_4(\caE,\psi)\right]^2-\left[\underset{\psi\sim \Ha}{\bbE}\bPhi_4(\caE,\psi)\right]^2=\frac{D_1}{D_2}\left(1+\dfrac{16\bPhi_1(\caE)}{D_{[1]}}+\dfrac{36\bPhi_2(\caE)}{D_{[2]}}+\dfrac{16\bPhi_3(\caE)}{D_{[3]}}+\dfrac{\bPhi_4(\caE)}{D_{[4]}}\right)-1\nonumber\\
&=\frac{24[(4d+12)\bPhi_3(\caE)+\bPhi_4(\caE)-4d-13]}{D_2}\le \frac{6(17d+51)[\bPhi_3(\caE)-1]+6(d-1)}{D_2}\nonumber\\
&<\frac{(d+3)\{102[\bPhi_3(\caE)-1]+6\}}{D_2} <\frac{[\xi(\caE)]^2}{(d+6)^3},
\end{align}
which confirms \eref{eq:4th_mo_var}. Here the first inequality holds because $\bPhi_4(\caE)\le {(d+3)\bPhi_3(\caE)}/{4}$ by \eref{eq:Phittp1}, the second inequality holds because
\begin{equation}
\bPhi_3(\caE)\ge 1,\quad \frac{(17d+51)[\bPhi_3(\caE)-1]+d-1}{d+3}<17[\bPhi_3(\caE)-1]+1,
\end{equation}
and the last inequality holds because $(d+3)/D_2<1/(d+6)^3$. This completes the proof of \lref{lem:4th_poten}.
\end{proof}

\begin{proof}[Proof of \lref{lem:RP_URP}]
Suppose the ensemble $\caE$ has the form $\caE=\{\ket{\phi_i},w_i\}_i$. Then \eref{eq:RP_URP} can be proved as follows:
\begin{align}
     \underset{\psi\sim \caT_\caB}{\bbE}\bPhi_t(\caE, \psi)&=D_{[t]}\sum_iw_i\underset{\psi\sim \caT_\caB}{\bbE}\left[\tr(\phi_i\psi)\right]^t=D_{[t]}\sum_iw_i\tr\left[\phi_i^{\otimes t}\left( \underset{\psi\sim \caT_\caB}{\bbE}\psi^{\otimes t}\right)\right]\nonumber\\&\le \frac{t! D_{[t]}}{d^t}\sum_iw_i\tr\left(\lsp\phi_i^{\otimes t}P_{[t]}\right)=\frac{t! D_{[t]}}{d^t},
\end{align}
where the inequality follows from \lref{lem:URPAVG}.
\end{proof}

\section{Shadow estimation within the framework of rank-1 IC-POVMs}
\label{sup:shadow_esti}

In this section we briefly introduce shadow estimation within the framework of rank-1 IC POVMs, following the presentation in the main text.
Let $\caE = \{\ket{\phi_i}, w_i\}_i$ be a state ensemble that forms a $1$-design, which induces the POVM $\{ d w_i \phi_i \}_i$. If $\caE$ is informationally complete (IC), any quantum state $\rho \in \caD(\caH)$ can be reconstructed as $\rho = d \sum_i w_i \tr(\phi_i \rho)\, \hat{\rho}_i$, where $\hat{\rho}_i$ is the estimator associated with outcome $i$. When $\caE$ is informationally overcomplete, the choice of $\hat{\rho}_i$ is not unique. In the shadow-estimation framework, one usually employs the canonical reconstruction, which is the standard choice in the absence of prior information about $\rho$.

The canonical measurement channel can be represented in two equivalent ways. The first directly defines the channel action:
\begin{align}
\caM_\caE(\rho) &= d \sum_i w_i \tr(\phi_i \rho)\, \phi_i. \label{eq:FD1}
\end{align}
The second uses the vectorization (double-ket) notation. For any operator $A \in \caL^{\rmH}(\caH)$, we define its vectorization $\dket{A} \in \caH \otimes \caH$ by expanding $\dket{A} = \sum_{k,l} A_{kl}\, \ket{k}\otimes\ket{l}$ in a fixed orthonormal basis, so that the Hilbert--Schmidt inner product becomes $\dbraket{A}{B}=\tr\left(A^\dagger B\right)$. The outer product $\dproj{A} = \dket{A}\dbra{A}$ then acts as a linear operator on this space. With this notation the measurement channel becomes
\begin{align}
\caM_\caE = d \sum_i w_i \dproj{\phi_i}, \label{eq:FD2}
\end{align}
where $\caM_\caE$ is viewed as an operator acting on vectorized operators via $\caM_\caE\dket{A} = \dket{\caM_\caE(A)}$. This representation makes the positive semidefiniteness of $\caM_\caE$ manifest and facilitates spectral analysis and inversion of the channel.
By definition, $\caM_\caE(\bbone) = \bbone$, so $\bbone$ is an eigenoperator of $\caM_\caE$ with eigenvalue $1$. With a slight abuse of notation, we use $\caM_\caE$ to denote both representations of the measurement channel when no confusion arises. 
Let $\bbfI$ denote the projector onto the space of traceless Hermitian operators. The restricted channel is
\begin{align}
    \bcaM_\caE \coloneqq \bbfI\, \caM_\caE\, \bbfI = d \sum_i w_i \dproj{\phi_{i,0}},
\end{align}
where $\phi_{i,0} = \phi_i - \bbone/d$ is the traceless part of $\phi_i$. Because $\caE$ forms a $1$-design, this simplifies to
\begin{align}
    \bcaM_\caE = \caM_\caE - \frac{1}{d}\dproj{\bbone}.
\end{align}

For any observable $O \in \caL^{\rmH}(\caH)$ and an IC-POVM $\caM = \{E_i\}_i$, the expectation value $\tr(\rho O)$ can be estimated as
\begin{align}
    \tr(\rho O) = \sum_i \tr(E_i \rho)\, \hat{o}_i,
\end{align}
where $\hat{o}_i = \tr(\hat{\rho}_i O)$ and $\hat{\rho}_i$ is the canonical reconstruction operator for outcome $i$. The single-shot estimation variance reads
\begin{align}
    \operatorname{Var}(\hat{o}) = \sum_i \tr(E_i \rho)\, \hat{o}_i^2 - [\tr(\rho O)]^2 = \sum_i \tr(E_i \rho)\, \hat{o}_{0,i}^2 - [\tr(\rho O_0)]^2,
\end{align}
where $O_0 = O - \tr(O)\bbone/d $ is the traceless part of $O$ and $\hat{o}_{0,i} = \tr(\hat{\rho}_i O_0)$. The second equality shows that the variance depends solely on $O_0$. The dominant term defines the \emph{state-dependent squared shadow norm}:
\begin{align}
    \|O_0\|_{\caE,\rho}^2 \coloneqq \sum_i \tr(E_i \rho)\, \hat{o}_{0,i}^2 = d \sum_i w_i \tr(\rho \phi_i) \left\{ \tr\left[ \phi_i\, \caM_\caE^{-1}(O_0) \right] \right\}^2=d\tr\left\{Q_3(\caE) \left[\rho \otimes \caM_{\caE}^{-1}(O_0)\otimes \caM_{\caE}^{-1}(O_0)\right]\right\},  \label{eq:ShadowNormSD}
\end{align}
where $\caM_\caE^{-1}$ denotes the inverse of the measurement channel. The (state-independent) squared shadow norm is obtained by maximizing $\|O\|_{\caE,\rho}^2$ over all states:
\begin{align}
    \|O_0\|_\caE^2 \coloneqq \max_{\rho\in\caD(\caH)} \|O_0\|_{\caE,\rho}^2 = \left\| d \sum_i w_i \phi_i \left\{ \tr\left[ \phi_i\, \caM_\caE^{-1}(O_0) \right] \right\}^2 \right\|=\left\|d\tr_{2,3}\left\{Q_3(\caE) \left[\bbone \otimes \caM_{\caE}^{-1}(O_0)\otimes \caM_{\caE}^{-1}(O_0)\right]\right\}\right\|, \label{eq:norm2}
\end{align}
which applies to any traceless operator $O_0 \in \caL^{\rmH}_0(\caH)$.

If $\caE$ forms a state $2$-design, the measurement channel takes the simple form
\begin{align}
    \caM_\caE = \frac{\bfI + \dproj{\bbone}}{d+1},
\end{align}
where $\bfI$ is the identity superoperator on $\caL^{\rmH}(\caH)$. Restricted to traceless operators, its inverse gives
\begin{align}
    \caM_\caE^{-1}(O_0) = \bcaM_\caE^{-1}(O_0) = (d+1)\, O_0.
\end{align}
Consequently, the squared shadow norm reduces to
\begin{align}
    \|O_0\|_\caE^2 = \left\| d(d+1)^2 \sum_i w_i \phi_i \left[ \tr(\phi_i O_0) \right]^2 \right\|=\left\|d(d+1)^2\tr_{2,3}\left[Q_3(\caE) \left(\bbone \otimes O_0\otimes O_0\right)\right]\right\|.
\end{align}
If $\caE$ further forms a state $3$-design, then $Q_3(\caE)=6P_{[3]}/[d(d+1)(d+2)]$, and  we obtain the closed-form expression
\begin{align}
    \|O_0\|_\caE^2 = \frac{d+1}{d+2}\left( 2\|O_0\|^2 + \|O_0\|_2^2 \right) \le 3\|O_0\|_2^2.
\end{align}
For fidelity estimation, where the observable is a pure-state projector $\psi$ with traceless part $\psi_0 = \psi - \bbone/d $, the shadow norm evaluates to
\begin{align}
    \|\psi_0\|_\caE^2 = \frac{(3d-2)(d+1)(d-1)}{d^2(d+2)}. \label{eq:3design_pure}
\end{align}

\section{Proofs of results on worst-case optimal shadow estimation}

\subsection{Proof of \pref{pro:ShNormLUB}}
\label{sec:pro:ShNormLUBProof}
\begin{proof}
Suppose the ensemble has the form $\caE=\{\ket{\phi_i},w_i\}_{i=1}^{K}$. For any collection of states $\{\rho_j\}_{j=1}^K \subset \caD(\caH)$, any collection of rank-1 projectors $\{\psi_j\}_{j=1}^K \subset \caP(\caH)$, and any probability distribution $\{v_j\}_{j=1}^K$ (i.e., $v_j \ge 0$ and $\sum_j v_j = 1$), we have
\begin{align}
\max_{\psi\in \caP(\caH)} \|\psi_0\|_\caE^2
&\coloneqq \max_{\rho\in \caD(\caH),\,\psi\in \caP(\caH)} d\sum_i w_i\tr(\rho \phi_i) \left\{\tr\left[\phi_i\caM_\caE^{-1}(\psi_0) \right]\right\}^2 \nonumber\\
&\,\ge d\sum_{i,j} w_i v_j{\tr(\rho_j \phi_i) \left\{\tr\left[\phi_i\caM_\caE^{-1}(\psi_{j,0}) \right]\right\}^2}, \label{eq:LBK}
\end{align}
where the inequality follows from the fact that the maximum is lower bounded by any convex combination. To establish the inequality in \pref{pro:ShNormLUB}, we now choose
\begin{align}
\rho_j=\phi_j, \qquad \psi_{j}=\phi_j, \qquad v_j=\frac{1}{K}.
\end{align}
Then we obtain
\begin{align}
\max_{\psi\in \caP(\caH)} \|\psi_0\|_\caE^2
&\ge d\sum_{i,j} w_i v_j \tr(\rho_j \phi_i) \left\{\tr\left[\phi_i\caM_\caE^{-1}(\phi_{j,0}) \right]\right\}^2 \ge \frac{d}{K}\sum_i w_i \tr(\rho_i \phi_i) \left\{\tr\left[\phi_i\caM_\caE^{-1}(\phi_{i,0}) \right]\right\}^2 \nonumber\\
&= \frac{d}{K}\sum_i w_i \left\{\tr\left[\phi_{i,0}\bcaM_\caE^{-1}(\phi_{i,0}) \right]\right\}^2.\label{eq:LBIC}
\end{align}
The second inequality holds because $\tr(\rho_j\phi_i)$ and $\left\{\tr\left[\phi_i\caM_\caE^{-1}(\phi_{j,0}) \right]\right\}^2$ are nonnegative for all $i,j$; the equality follows from the facts that $\phi_{i,0}$ is traceless and that $\caM_\caE$ restricted to $\caL^{\rmH}_0(\caH)$ coincides with $\bcaM_\caE$.

Since $f(x)=x^t$ is convex for $t>1$, Jensen's inequality implies that
\begin{align}
\sum_i w_i \left\{\tr\left[\phi_{i,0}\bcaM_\caE^{-1}(\phi_{i,0}) \right]\right\}^2 \ge \left\{\sum_i w_i \tr\left[\phi_{i,0}\bcaM_\caE^{-1}(\phi_{i,0}) \right]\right\}^2.\label{eq:Jensen}
\end{align}
To evaluate the right-hand side, note that
\begin{align}
\sum_i w_i \tr\left[\phi_{i,0}\bcaM_\caE^{-1}(\phi_{i,0}) \right] &= \Tr\left[\sum_i w_i \dproj{\phi_{i,0}}\bcaM_\caE^{-1}\right] = \frac{1}{d}\Tr\left(\bcaM_\caE \bcaM_\caE^{-1}\right) = \frac{1}{d}\Tr(\bbfI)=\frac{d^2-1}{d}, \label{eq:TrBId}
\end{align}
where $\Tr$ denotes the trace on $\caL^{\rmH}(\caH)$ (viewing $\caM_\caE$ as a linear operator on the operator space), and $\bbfI$ denotes the identity on $\caL^{\rmH}_0(\caH)$. Combining Eqs.~\eqref{eq:LBIC}--\eqref{eq:TrBId} yields
\begin{align}
\max_{\psi\in \caP(\caH)} \|\psi_0\|_\caE^2 \ge \frac{d}{K}\left(\frac{d^2-1}{d}\right)^2 = \frac{(d^2-1)^2}{Kd},
\end{align}
which establishes the desired inequality in \pref{pro:ShNormLUB}.
\end{proof}

\subsection{Auxiliary results on the combined phase-design ensemble based on MUBs}
In this section, we clarify the basic properties of the combined phase 3-design ensemble $\caE_N$ with $2\le N\le d+1$, as introduced in the main text, and provide a proof of \thref{thm:MMSoptWCMUB}. This ensemble is a uniform mixture of $N$  phase 3-design ensembles $\caE_{\caB_j}$ associated with $N$ mutually unbiased bases (MUBs), denoted by $\caB_j=\{\ket{k}_{\caB_j}\}_{k=0}^{d-1}$ for $j=1,2,\ldots,N$. More precisely, $\caE_N$ can be expressed as follows:
\begin{align}
    \caE_N = \bigsqcup_{j=1}^{N} \frac{1}{N}\caE_{\caB_j}.
\end{align}

\begin{proposition}
\label{pro:AMoment}
The second and third moment operators of $\caE_N$ read
\begin{align}
    Q_2(\caE_N) &= \frac{2}{d^2}P_{[2]} -\frac{1}{Nd^2}\sum_{j=1}^{N}\sum_{k=0}^{d-1}(\proj{kk})_{\caB_j}
\label{eq:2ndMA}\\
    Q_3(\caE_N) &= \frac{6}{d^3}P_{[3]} -\frac{5}{Nd^3}\sum_{j=1}^{N}\sum_{k=0}^{d-1}(\proj{kkk})_{\caB_j} -\frac{3}{Nd^3}\sum_{j=1}^{N}\sum_{[\bmk]\colon |[\bmk]|=3} (\proj{[\bmk]})_{\caB_j}. \label{eq:3ndMA}
\end{align}
\end{proposition}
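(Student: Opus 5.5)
The plan is to compute the moment operators of each sub-ensemble $\caE_{\caB_j}$ and then average over $j$. Since $\caE_{\caB_j}$ is a phase 3-design (hence also a phase 2-design) with respect to $\caB_j$, its $t$-th moment for $t=2,3$ coincides with that of the URP ensemble $\caT_{\caB_j}$. By \lref{lem:URPAVG},
\begin{align}
Q_t(\caE_{\caB_j})=\frac{1}{d^t}\sum_{[\bmk]}|[\bmk]|\,(\proj{[\bmk]})_{\caB_j}.
\end{align}
Because $Q_t(\caE_N)=\frac1N\sum_j Q_t(\caE_{\caB_j})$, it suffices to rewrite each single-basis expression in terms of $P_{[t]}$ plus correction terms. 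Mutual unbiasedness plays no role in this step; it only matters later, in the proof of \thref{thm:MMSoptWCMUB}.

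For $t=2$, the classes $[\bmk]$ either have a repeated entry, $(k,k)$ with $|[\bmk]|=1$, or two distinct entries with $|[\bmk]|=2$. The vectors $\ket{[\bmk]}_{\caB_j}$ form an orthonormal basis of the symmetric subspace, so $P_{[2]}=\sum_{[\bmk]}(\proj{[\bmk]})_{\caB_j}$. This gives
\begin{align}
\sum_{[\bmk]}|[\bmk]|\proj{[\bmk]}=2P_{[2]}-\sum_k\proj{kk}.
\end{align}
Dividing by $d^2$ and averaging over $j$ yields \eref{eq:2ndMA}.

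For $t=3$, the class types are $(k,k,k)$ with $|[\bmk]|=1$, $(k,k,l)$ with $k\neq l$ and $|[\bmk]|=3$, and three distinct entries with $|[\bmk]|=6$. Writing $6P_{[3]}=6\sum_{[\bmk]}\proj{[\bmk]}$, the weighted sum equals $6P_{[3]}$ minus $5$ times the $(kkk)$ projectors minus $3$ times the projectors for classes with $|[\bmk]|=3$. Averaging over $j$ then gives \eref{eq:3ndMA}.

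There is no real obstacle here. The only care needed is in confirming that the phase 3-design property supplies the second moment as well. This follows because partial tracing $Q_3$ over one factor gives $Q_2$ for ensembles of normalized states, and the same holds for the URP ensemble. The rest is keeping the multiplicity bookkeeping consistent with the normalization of $\ket{[\bmk]}_{\caB}$.
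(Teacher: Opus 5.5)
Your proposal is correct and matches the paper, which likewise obtains this as a direct corollary of \lref{lem:URPAVG}: apply it to each phase-design sub-ensemble, rewrite $\sum_{[\bmk]}|[\bmk]|\proj{[\bmk]}$ as $t!\,P_{[t]}$ minus the multiplicity corrections, and average over $j$. Your partial-trace remark, which shows that the phase 3-design property also fixes the second moment, fills in a small step the paper leaves implicit.
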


\begin{proposition}\label{pro:AN2design}
The second frame potential of $\caE_N$ reads
\begin{equation}
    \Phi_2(\caE_N)=\frac{2d^2-2d+1}{d^4}+\frac{d-1}{Nd^4}.
\end{equation}
The ensemble $\caE_N$ is a state $2$-design if and only if $N=d+1$.
\end{proposition}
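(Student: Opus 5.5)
The statement to prove is Proposition~\ref{pro:AN2design}. The plan is to compute $\Phi_2(\caE_N)=\tr\{[Q_2(\caE_N)]^2\}$ directly from the closed form \eref{eq:2ndMA} in Proposition~\ref{pro:AMoment}. We then compare the result with the 2-design value $2/[d(d+1)]$, using the fact that $\bPhi_2(\caE)\ge1$ with equality if and only if $\caE$ is a 2-design.

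Write $Q_2(\caE_N)=\frac{2}{d^2}P_{[2]}-\frac{1}{Nd^2}R$, where $R=\sum_{j=1}^N\Pi_j$ and $\Pi_j=\sum_k(\proj{kk})_{\caB_j}$. Squaring and tracing requires four ingredients.

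\begin{itemize}
\item $\tr P_{[2]}=d(d+1)/2$.
\item Since each $\ket{kk}$ lies in the symmetric subspace, $P_{[2]}\Pi_j=\Pi_j$. Hence $\tr(P_{[2]}R)=Nd$.
\item Each $\Pi_j$ is a projector of rank $d$, so $\tr\Pi_j^2=d$.
\item For $j\neq l$, mutual unbiasedness gives $\tr(\Pi_j\Pi_l)=\sum_{k,m}|\braket{k_j}{m_l}|^4=d^2\cdot d^{-2}=1$.
\end{itemize}

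Consequently $\tr R^2=Nd+N(N-1)$. Assembling the pieces,
\[
\Phi_2=\frac{4}{d^4}\cdot\frac{d(d+1)}{2}-\frac{4}{Nd^4}\,Nd+\frac{Nd+N^2-N}{N^2d^4}
=\frac{2d^2-2d+1}{d^4}+\frac{d-1}{Nd^4}.
\]
The last equality is a short algebra check: the $N$-independent part is $(2d^2+2d-4d+1)/d^4$, and the remainder is $(d-1)/(Nd^4)$.

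For the 2-design criterion, set this expression equal to $2/[d(d+1)]$. Multiplying by $d^4(d+1)$ gives
\[
(2d^2-2d+1)(d+1)+\frac{(d-1)(d+1)}{N}=2d^3 .
\]
The left-hand side expands to $2d^3-d+1+(d^2-1)/N$. The condition therefore becomes $(d^2-1)/N=d-1$, that is, $N=d+1$.

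Since $\Phi_2$ is strictly decreasing in $N$, it exceeds the 2-design value for every $N<d+1$. By the equality characterization of the frame potential, $\caE_N$ is a 2-design exactly when $N=d+1$.

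The only delicate step is the cross term $\tr(\Pi_j\Pi_l)$, where mutual unbiasedness enters. Everything else is routine once Proposition~\ref{pro:AMoment} is assumed.
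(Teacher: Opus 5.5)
Your proof is correct and follows the paper's own route. The paper obtains $\Phi_2(\caE_N)$ by squaring and tracing the closed form of $Q_2(\caE_N)$ from \pref{pro:AMoment}, with mutual unbiasedness entering only through the cross terms $\tr(\Pi_j\Pi_l)=1$, and reads off the 2-design criterion from $\bPhi_2\geq 1$ with equality exactly for 2-designs; your algebra, including the reduction to $(d^2-1)/N=d-1$, checks out.
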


\Pref{pro:AMoment} is a direct corollary of \lref{lem:URPAVG}, and \pref{pro:AN2design} follows 
from \pref{pro:AMoment},
 given that the bases $\caB_1, \caB_2, \ldots, \caB_N$ are mutually unbiased. When $N = d+1$, the $N$ bases form a complete set of MUBs and thus constitute a 2-design, which implies that
\begin{equation}
\frac{1}{Nd}\sum_{j=1}^{N}\sum_{k=0}^{d-1}(\proj{kk})_{\caB_j} = \frac{2P_{[2]}}{d(d+1)}.
\end{equation}
As a corollary, \pref{pro:AMoment} yields $Q_2(\caE_N) = 2P_{[2]}/[d(d+1)]$, consistent with $\caE_N$ forming a 2-design.

Next, we decompose any operator $O \in \caL^{\rmH}(\caH)$ into orthogonal parts as
\begin{align}
O =\frac{\tr(O)}{d}\bbone+O_{\perp}+ \sum_{j=1}^{N} O_{\caB_j} ,
\end{align}
where $O_{\caB_j}$ denotes the traceless diagonal part of $O$ with respect to basis $\caB_j$, and $O_{\perp}$ is the remaining orthogonal component. When $N=d+1$, we have $O_{\perp}=0$.
\begin{proposition}\label{pro:caFA}
Suppose $O\in \caL^{\rmH}(\caH)$. Then
\begin{align}
    \caM_{\caE_N}(O) &= \frac{\tr(O)}{d}\bbone+\frac{1}{d}O_{\perp} + 
    \sum_{j=1}^{N}\frac{N-1}{Nd}O_{\caB_j}, \label{eq:caMcaE_N}\\
\caM^{-1}_{\caE_N}(O) &= \frac{\tr(O)}{d}\bbone+dO_{\perp} +\sum_{j=1}^{N}\frac{Nd}{N-1}O_{\caB_j}, \label{eq:caMcaE_Ninv}
\end{align}
where $\caM_{\caE_N}$ is defined in \eqsref{eq:FD1}{eq:FD2}.
\end{proposition}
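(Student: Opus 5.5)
The measurement channel is linear, so I would compute $\caM_{\caE_N}$ separately on each piece of the orthogonal decomposition of $O$, and then invert it piece by piece. First write $\caM_{\caE_N}=\frac1N\sum_{j=1}^N\caM_{\caE_{\caB_j}}$. Since each $\caE_{\caB_j}$ is a phase 3-design, it is in particular a phase 2-design. By \lref{lem:URPAVG}, its second moment therefore coincides with that of the URP ensemble $\caT_{\caB_j}$:
\[
Q_2(\caE_{\caB_j})=\frac{1}{d^2}\Bigl[2P_{[2]}-\sum_k(\proj{kk})_{\caB_j}\Bigr].
\]
Using the vectorized form $\caM_\caE=d\sum_iw_i\dproj{\phi_i}$, together with the partial-transpose relation between $\dproj{\phi}$ and $\phi\otimes\phi^{T}$, I will instead compute the action directly. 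I expect
\[
\caM_{\caE_{\caB_j}}(O)=d\,\bbE_{\phi}\,\tr(\phi O)\phi .
\]
For a phase state we have $\phi_{kl}=\rme^{\rmi(\varphi_k-\varphi_l)}/d$. Averaging over the phases up to second order gives
\[
\bbE\,\phi_{kl}\phi_{mn}=\frac{1}{d^2}\bigl(\delta_{kl}\delta_{mn}+\delta_{kn}\delta_{lm}-\delta_{kl}\delta_{mn}\delta_{km}\bigr).
\]
Substituting this in, I expect
\[
\caM_{\caE_{\caB_j}}(O)=\frac1d\Bigl[\tr(O)\frac{\bbone}{1}\cdot\frac1d\cdot d+O-\mathrm{diag}_{\caB_j}(O)\Bigr].
\]
Concretely, $\caM_{\caE_{\caB_j}}(O)=\frac{1}{d}\bigl(\tr(O)\bbone/1\cdot\tfrac{1}{1}\bigr)$ needs care, so the target form I will verify is
\[
\caM_{\caE_{\caB_j}}(O)=\frac{\tr(O)}{d}\bbone+\frac1d\bigl(O-\Delta_j(O)\bigr),
\]
where $\Delta_j$ denotes the pinching (dephasing) in basis $\caB_j$. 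The normalization fixes the coefficient via $\caM(\bbone)=\bbone$. In words, the map acts as $1/d$ on off-diagonal parts, kills the traceless diagonal part, and fixes $\bbone$.

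Next I evaluate this on each component. For $O_\perp$, which is orthogonal to all $\caB_j$-diagonal operators, $\Delta_j(O_\perp)=0$ for every $j$, so each term gives $O_\perp/d$. For $O_{\caB_j}$, the term with the same index gives $0$. For $i\neq j$, mutual unbiasedness implies that a traceless operator diagonal in $\caB_j$ has zero diagonal in $\caB_i$, i.e. $\Delta_i(O_{\caB_j})=0$, so that term gives $O_{\caB_j}/d$. Averaging over the $N$ bases yields the factor $(N-1)/(Nd)$. The identity part is fixed. This proves \eref{eq:caMcaE_N}.

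Since the decomposition is orthogonal and each summand is an eigenoperator, the inverse follows by inverting the eigenvalues $1$, $1/d$ and $(N-1)/(Nd)$, the last being nonzero because $N\ge2$. This gives \eref{eq:caMcaE_Ninv}.

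The main obstacle is the correct derivation and normalization of the single-basis channel from the second moment in \lref{lem:URPAVG}. The subtle point is the subtraction of the doubly counted diagonal terms $\proj{kk}$, which is exactly what removes the diagonal part. A secondary point is checking that the $O_{\caB_j}$ are mutually orthogonal and that $\Delta_i(O_{\caB_j})=0$ for $i\neq j$; both follow from $|\braket{k_{\caB_i}}{l_{\caB_j}}|^2=1/d$.
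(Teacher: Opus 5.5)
Your proposal is correct and follows essentially the paper's route. In both, the second moment of each phase design (from \lref{lem:URPAVG}) fixes the channel, which is evaluated on $\bbone$, on each $O_{\caB_j}$ (using mutual unbiasedness), and on $O_\perp$, and then inverted eigenspace by eigenspace. Your single-basis formula $\caM_{\caE_{\caB_j}}(O)=\tr(O)\bbone/d+\bigl(O-\Delta_j(O)\bigr)/d$, averaged over $j$, is just a more explicit version of the paper's ``direct calculation'' with $Q_2(\caE_N)$.

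You should clean up the garbled intermediate displays. Your phase-average identity already gives that closed form with the correct normalization, so no separate appeal to $\caM(\bbone)=\bbone$ is needed.
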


\begin{lemma}\label{lem:PI3A}
Suppose $\rho\in \caD(\caH)$, $O\in \caL^{\rmH}_0(\caH)$, and $\caB= \{\ket{k}_\caB\}_{k=0}^{d-1}$ is an orthonormal basis of $\bbC^d$. Define
\begin{align}
    \Pi_\caB(\rho, O) \coloneqq \tr\left[\sum_{[\bmk]\colon|[\bmk]|=3} |[\bmk]|(\proj{[\bmk]})_\caB\left(\rho \otimes O\otimes O\right)\right].
\end{align}
Then
\begin{align}\label{eq:PI3A}
    -\Pi_\caB(\rho, O) \le \tr\left(\rho O_\caB^2\right) + \left\|O_\caB\right\|_2^2.
\end{align}
\end{lemma}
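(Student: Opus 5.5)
The plan is to expand $\Pi_\caB(\rho,O)$ entrywise in the basis $\caB$ and then close the estimate with a $2\times2$ positivity bound on $\rho$ combined with an AM--GM inequality. Write $o_a\coloneqq\bra{a}O\ket{a}_\caB$, so that $O_\caB=\sum_a o_a\proj{a}_\caB$ and $\sum_a o_a=\tr(O)=0$. Let $X\coloneqq O-O_\caB$ denote the off-diagonal part of $O$ in $\caB$, and abbreviate all matrix elements as $\rho_{ab}$, $X_{ab}$ in this basis.

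\emph{Step 1: expand $\Pi_\caB$.} The classes with $|[\bmk]|=3$ are exactly the multisets $\{a,a,b\}$ with $a\neq b$. For such a class,
\[
|[\bmk]|\,(\proj{[\bmk]})_\caB=\sum_{\bml,\bml'\in[\bmk]}\ketbra{\bml}{\bml'}_\caB .
\]
This gives nine terms per class, indexed by pairs of the orderings $aab, aba, baa$. Contracting each term against $\rho\otimes O\otimes O$ and summing over $a\neq b$, I expect
\[
\Pi_\caB(\rho,O)=\sum_{a\neq b}\Bigl[2\rho_{aa}o_ao_b+2\rho_{aa}|X_{ab}|^2+\rho_{bb}o_a^2+2o_a\bigl(\rho_{ab}X_{ba}+\rho_{ba}X_{ab}\bigr)\Bigr].
\]
Two simplifications follow from tracelessness. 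Using $\sum_{b\neq a}o_b=-o_a$, the first sum equals $-2\tr(\rho O_\caB^2)$. The third sum equals $\sum_b\rho_{bb}\bigl(\|O_\caB\|_2^2-o_b^2\bigr)=\|O_\caB\|_2^2-\tr(\rho O_\caB^2)$.

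\emph{Step 2: reduce the claim.} With Step 1, the claimed inequality is equivalent to showing that $\Pi_\caB+\tr(\rho O_\caB^2)+\|O_\caB\|_2^2\ge0$. After substitution this becomes
\[
2\sum_{a\neq b}\bigl[\rho_{bb}o_a^2+\rho_{aa}|X_{ab}|^2\bigr]+2\sum_{a\neq b}o_a\bigl(\rho_{ab}X_{ba}+\rho_{ba}X_{ab}\bigr)\ge 0 .
\]
To obtain the first sum in this form I rewrite $2\|O_\caB\|_2^2-2\tr(\rho O_\caB^2)=2\sum_a(1-\rho_{aa})o_a^2$, and then use $1-\rho_{aa}=\sum_{b\neq a}\rho_{bb}$.

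\emph{Step 3: control the cross term.} This is the only sign-indefinite piece, so it is the main obstacle. Positivity of $\rho$ gives $|\rho_{ab}|\le\sqrt{\rho_{aa}\rho_{bb}}$. Hence each summand of the cross term is bounded by
\[
\bigl|o_a(\rho_{ab}X_{ba}+\rho_{ba}X_{ab})\bigr|\le 2\bigl(\sqrt{\rho_{bb}}\,|o_a|\bigr)\bigl(\sqrt{\rho_{aa}}\,|X_{ab}|\bigr)\le \rho_{bb}o_a^2+\rho_{aa}|X_{ab}|^2 .
\]
The key point is the choice of pairing in AM--GM. Each $o_a$ is paired with $\sqrt{\rho_{bb}}$ and each $X_{ab}$ with $\sqrt{\rho_{aa}}$. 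These are exactly the weights supplied by the positive terms of Step 2, so the leftover from the diagonal part $\sum_a(1-\rho_{aa})o_a^2$ absorbs the $o_a^2$ contributions. The naive alternative is to bound the cross term via Cauchy--Schwarz on $\tr(\rho O_\caB X)$. That introduces $\tr(\rho X^2)$ rather than $\sum_a\rho_{aa}(X^2)_{aa}$, and the latter is what Step 1 produces, so I would avoid it.

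\emph{Conclusion.} Summing the Step 3 bound over $a\neq b$ and multiplying by $2$ shows that the positive terms in Step 2 dominate the absolute value of the cross term. This yields \eref{eq:PI3A}. The only genuinely computational part is the bookkeeping of the nine index pairings in Step 1, which I would double-check carefully.
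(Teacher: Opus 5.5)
Your proposal is correct and is essentially the paper's proof. The paper likewise expands the nine terms entrywise, with your Step 1 matching \eref{eq:PirhoX0} term by term once the repeated index $a$ is identified with the paper's $l$. It then uses $\tr(O)=0$ and $\tr(\rho)=1$ to extract $-\tr(\rho O_\caB^2)-\|O_\caB\|_2^2$, and absorbs the cross term via $\rho_{bb}o_a^2+\rho_{aa}|O_{ab}|^2\ge 2\sqrt{\rho_{aa}\rho_{bb}}\,|o_aO_{ab}|\ge 2|\rho_{ab}o_aO_{ab}|$, relying on positivity of the $2\times 2$ principal minors of $\rho$ exactly as you do.
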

\begin{lemma}\label{lem:NRF}
Suppose $\rho\in \caD(\caH)$ and $O\in \caL^{\rmH}_0(\caH)$. Then
\begin{align}\label{eq:7N}
    d^3\tr\left[Q_3(\caE_N)\left(\rho\otimes O\otimes O\right)\right] \le \left(3+\frac{1}{N}\right)\|O\|_2^2.
\end{align}
\end{lemma}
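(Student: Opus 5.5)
The plan is to expand $Q_3(\caE_N)$ with \pref{pro:AMoment}, evaluate each of the three resulting terms against $\rho\otimes O\otimes O$, and then combine \lref{lem:PI3A} with the mutual unbiasedness of the bases.

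\emph{Step 1: expand.} Multiplying \eref{eq:3ndMA} by $d^3$ gives three contributions. For the last sum, every class with $|[\bmk]|=3$ carries weight $|[\bmk]|=3$, so $3\sum_{|[\bmk]|=3}(\proj{[\bmk]})_{\caB_j}$ is exactly the operator appearing in $\Pi_{\caB_j}$. Hence
\begin{align*}
d^3\tr\left[Q_3(\caE_N)\left(\rho\otimes O\otimes O\right)\right]
&=6\tr\left[P_{[3]}\left(\rho\otimes O\otimes O\right)\right]
-\frac{5}{N}\sum_{j=1}^{N}\sum_{k}\bra{k}\rho\ket{k}_{\caB_j}\left(\bra{k}O\ket{k}_{\caB_j}\right)^2 \\
&\equad-\frac{1}{N}\sum_{j=1}^{N}\Pi_{\caB_j}(\rho,O).
\end{align*}

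\emph{Step 2: evaluate the first two terms.} Write $P_{[3]}$ as the average of the six permutation operators. Because $\tr O=0$, only the transposition of the two $O$ factors and the two $3$-cycles survive. This gives
\[
6\tr\left[P_{[3]}\left(\rho\otimes O\otimes O\right)\right]=\|O\|_2^2+2\tr\left(\rho O^2\right).
\]
For the second term, note that $O_{\caB_j}$ is the diagonal part of $O$ in $\caB_j$; it is automatically traceless since $O$ is. Therefore the inner sum equals $\tr\left(\rho O_{\caB_j}^2\right)$.

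\emph{Step 3: use \lref{lem:PI3A} and the $-5$ term.} Applying \lref{lem:PI3A} to each $\Pi_{\caB_j}$, the whole expression is at most
\[
\|O\|_2^2+2\tr\left(\rho O^2\right)+\frac{1}{N}\sum_{j=1}^{N}\left[-4\tr\left(\rho O_{\caB_j}^2\right)+\left\|O_{\caB_j}\right\|_2^2\right].
\]
Since $\rho\ge0$, each term $-4\tr\left(\rho O_{\caB_j}^2\right)$ is nonpositive and can be dropped. Next, $\tr\left(\rho O^2\right)\le\|O\|^2\le\|O\|_2^2$.

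Finally, the traceless diagonal subspaces of mutually unbiased bases are pairwise Hilbert--Schmidt orthogonal. This is the same fact underlying the decomposition in \pref{pro:caFA}. It yields $\sum_j\|O_{\caB_j}\|_2^2\le\|O\|_2^2$. Putting these together gives the bound $(3+1/N)\|O\|_2^2$.

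\emph{Main obstacle.} The delicate point is Step 3. If the negative $-5$ term is discarded and the bound of \lref{lem:PI3A} is used naively, the estimate is only $3+2/N$. The $-5/N$ term must be kept and paired with the $\tr\left(\rho O_{\caB_j}^2\right)$ produced by \lref{lem:PI3A}, which is what yields the sharp constant $3+1/N$.
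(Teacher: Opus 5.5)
Your proposal is correct and follows the paper's proof essentially step for step. You expand $Q_3(\caE_N)$ via \pref{pro:AMoment} and use $6\tr[P_{[3]}(\rho\otimes O\otimes O)]=2\tr(\rho O^2)+\|O\|_2^2\le 3\|O\|_2^2$. You then apply \lref{lem:PI3A} so that the $-5/N$ term combines into $-\frac{4}{N}\sum_j\tr(\rho O_{\caB_j}^2)\le 0$, and finish with $\sum_j\|O_{\caB_j}\|_2^2\le\|O\|_2^2$.
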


\begin{proof}[Proof of \pref{pro:caFA}]
Suppose the ensemble $\caE_N$ has the form $\caE_N=\{\ket{\phi_i}, w_i\}_i$. 
By \eref{eq:FD1}, for any $O\in \caL^{\rmH}(\caH)$, the measurement channel acts as
\begin{align}
    \caM_{\caE_N}(O) = d \sum_i w_i\tr(\phi_i O)\,\phi_i = d\tr_1\left[Q_2(\caE_N)\left(O\otimes \bbone\right)\right].
\end{align}
Direct calculation based on \eref{eq:2ndMA} in \pref{pro:AMoment} yields
\begin{align}
    \caM_{\caE_N}(\bbone)=\bbone,\qquad \caM_{\caE_N}(O_{\caB_j})=\frac{N-1}{Nd}\,O_{\caB_j},\qquad \caM_{\caE_N}(O_\perp)=\frac{1}{d}\,O_\perp.
\end{align}
By linearity, we have
\begin{align}
\caM_{\caE_N}(O) &= \frac{\tr(O)}{d}\bbone+\frac{1}{d}O_{\perp} + 
    \sum_{j=1}^{N}\frac{N-1}{Nd}O_{\caB_j},
\end{align}
which confirms \eref{eq:caMcaE_N}. Since $\caM_{\caE_N}(O)$ preserves the orthogonal decomposition, by inverting it on each eigenspace, we readily obtain the expression for $\caM_{\caE_N}^{-1}(O)$ shown in \eref{eq:caMcaE_Ninv}. 
\end{proof}

\begin{proof}[Proof of \lref{lem:PI3A}]
Without loss of generality, assume that $\caB$ is the computational basis and $\ket{k}_\caB=\ket{k}$ for $k=0,1,\ldots, d-1$. Then
\begin{align}
    \sum_{[\bmk]\colon|[\bmk]|=3} |[\bmk]|\proj{[\bmk]} &= \sum_{\substack{k,l \\ k \neq l}}\proj{kll} + \sum_{\substack{k,l \\ k \neq l}}\ketbra{kll}{lkl} + \sum_{\substack{k,l \\ k \neq l}}\ketbra{kll}{llk} + \sum_{\substack{k,l \\ k \neq l}}\ketbra{lkl}{kll} + \sum_{\substack{k,l \\ k \neq l}}\ketbra{lkl}{lkl} + \sum_{\substack{k,l \\ k \neq l}}\ketbra{lkl}{llk}\nonumber\\
    &\equad + \sum_{\substack{k,l \\ k \neq l}}\ketbra{llk}{kll} + \sum_{\substack{k,l \\ k \neq l}}\ketbra{llk}{lkl} + \sum_{\substack{k,l \\ k \neq l}}\ketbra{llk}{llk}.
\end{align}
Taking the trace against $\rho \otimes O \otimes O$ gives
\begin{align}
    \Pi_\caB(\rho, O) &= \sum_{\substack{k,l \\ k \neq l}}\rho_{kk}O_{ll}^2 + 2\sum_{\substack{k,l \\ k \neq l}}\rho_{ll}\left|O_{kl}\right|^2 + 4\sum_{\substack{k,l \\ k \neq l}}\Re\left(\rho_{kl}O_{lk}\right)O_{ll} + 2\sum_{\substack{k,l \\ k \neq l}}\rho_{ll}O_{ll}O_{kk}\nonumber\\
    &= 2\sum_{\substack{k,l \\ k \neq l}}\rho_{kk}O_{ll}^2 + 2\sum_{\substack{k,l \\ k \neq l}}\rho_{ll}\left|O_{lk}\right|^2 + 4\sum_{\substack{k,l \\ k \neq l}}\Re\left(\rho_{kl}O_{lk}\right)O_{ll} - \sum_{l}\rho_{ll}O_{ll}^2 - \sum_l O_{ll}^2, \label{eq:PirhoX0}
\end{align}
where the second equality uses the facts $|O_{kl}|=|O_{lk}|$, $\sum_k\rho_{kk}=\tr(\rho)=1$, and $\sum_k O_{kk}=\tr(O)=0$. The sum of the first three terms is nonnegative because 
\begin{equation}
\rho_{kk}O_{ll}^2 + \rho_{ll}\left|O_{lk}\right|^2 \geq 2\sqrt{\rho_{kk}\rho_{ll}}\,|O_{lk}O_{ll}| \geq 2|\rho_{kl}O_{lk}O_{ll}| \geq -2\Re(\rho_{kl}O_{lk})O_{ll}\quad \forall k,l,
\end{equation}
where the second inequality holds because every $2\times 2$ principal submatrix of $\rho$ with respect to $\caB$ is positive semidefinite. Therefore,
\begin{align}
    \Pi_\caB(\rho, O) \ge -\tr\left(\rho O_\caB^2\right) - \left\|O_\caB\right\|_2^2,
\end{align}
which establishes \eref{eq:PI3A}.
\end{proof}

\begin{proof}[Proof of \lref{lem:NRF}]
From \pref{pro:AMoment}, we have
\begin{align}
    Q_3(\caE_N) = \frac{6}{d^3}P_{[3]} - \frac{5}{Nd^3}\sum_{j=1}^{N}\sum_{k=0}^{d-1}(\proj{kkk})_{\caB_j} - \frac{3}{Nd^3}\sum_{j=1}^{N}\sum_{[\bmk]\colon|[\bmk]|=3} (\proj{[\bmk]})_{\caB_j}.
\end{align}
Since $O$ is traceless by assumption, we can apply the following two key bounds:
\begin{align}
    6\tr\left[P_{[3]}\left(\rho\otimes O\otimes O\right)\right] &= 2\tr(\rho O^2) + \|O\|_2^2 \le 3\|O\|_2^2, \quad 
    -\Pi_{\caB_j}(\rho, O) \le \tr\left(\rho O_{\caB_j}^2\right) + \left\|O_{\caB_j}\right\|_2^2, 
\end{align}
where the latter follows from \lref{lem:PI3A}.
Combining the above two equations, we obtain
\begin{align}
    d^3\tr\left[Q_3(\caE_N)\left(\rho\otimes O\otimes O\right)\right] &= 6\tr\left[P_{[3]}\left(\rho\otimes O\otimes O\right)\right] - \frac{5}{N}\sum_{j=1}^N\tr\left(\rho O_{\caB_j}^2\right) - \frac{1}{N}\sum_{j=1}^N\Pi_{\caB_j}(\rho,O)\nonumber\\
    &\le 3\|O\|_2^2 + \frac{1}{N}\sum_{j=1}^N\left\|O_{\caB_j}\right\|_2^2 - \frac{4}{N}\sum_{j=1}^N\tr\left(\rho O_{\caB_j}^2\right)\le \left(3+\frac{1}{N}\right)\|O\|_2^2, \label{eq:P4rhoX}
\end{align}
where the last step holds because $\sum_{j=1}^N\left\|O_{\caB_j}\right\|_2^2 \le \|O\|_2^2$ and  $\tr\bigl(\rho O_{\caB_j}^2\bigr)\geq0$ for each $j$.
\end{proof}

\subsection{Proof of \thref{thm:MMSoptWCMUB}}
\begin{proof}
By \eref{eq:ShadowNormSD} and \lref{lem:NRF}, we bound the  squared shadow norm as follows:
\begin{align}
\left\|O\right\|_{\caE_N}^2=d\max_{\rho\in \caD(\caH)}\tr\left\{Q_3(\caE) \left[\rho \otimes \caM_{\caE_N}^{-1}(O)\otimes \caM_{\caE_N}^{-1}(O)\right]\right\}\le \frac{3N+1}{Nd^2}\left\|\caM_{\caE_N}^{-1}(O)\right\|_2^2. \label{eq:N1}
\end{align}
According to \pref{pro:caFA}, $\caM_{\caE_N}^{-1}(O)$ has the form
\begin{align}
\caM_{\caE_N}^{-1}(O) = dO_{\perp}+\sum_{j=1}^N\frac{Nd}{N-1}\,O_{\caB_j} ,
\end{align}
where the identity component vanishes because $\tr(O)=0$ by assumption. These components are mutually orthogonal with respect to the Hilbert--Schmidt inner product, so
\begin{align}
\left\|\caM_{\caE_N}^{-1}(O)\right\|_2^2 = d^2\left[\left\|O_{\perp}\right\|_2^2+\frac{N^2}{(N-1)^2}\sum_{j=1}^N\left\|O_{\caB_j}\right\|_2^2\right]\le \frac{N^2 d^2}{(N-1)^2}\,\|O\|_2^2, \label{eq:N2_expand}
\end{align}
where the inequality holds because $\sum_{j=1}^N\left\|O_{\caB_j}\right\|_2^2 + \left\|O_{\perp}\right\|_2^2 = \|O\|_2^2$. 
Together with \eref{eq:N1}, this equation means
\begin{align}
\|O\|_{\caE_N}^2 \le \frac{3N+1}{Nd^2} \cdot \frac{N^2 d^2}{(N-1)^2}\,\|O\|_2^2 = \frac{N(3N+1)}{(N-1)^2}\,\|O\|_2^2,
\end{align}
which completes the proof of \thref{thm:MMSoptWCMUB}.
\end{proof}

\section{Fourth moments of Haar-random observables and Clifford orbits}
\label{sup:schur}

\subsection{Fourth moments of Haar-random observables}
In preparation for studying the average shadow norm achieved by 2-design POVMs, we recall some basic results about Schur--Weyl duality for the unitary group.

Let $\rmU(\caH)$ be the unitary group on $\caH$ and $S_t$ the symmetric group of $t$ numbers. 
According to Schur--Weyl duality, the $t$-th tensor power $\caH^{\otimes t}$ decomposes into a multiplicity-free direct sum of irreducible representations of $\rmU(\caH) \times S_t$:
\begin{align}
    \caH^{\otimes t}=\bigoplus_\lambda \caH_\lambda=\bigoplus_\lambda \caW_\lambda\otimes \caS_\lambda,
\end{align}
where each $\lambda$ labels a nonincreasing partition of $t$ into at most $d$ parts, $\caW_\lambda$ is the Weyl module carrying an irreducible representation of $\rmU(\caH)$, and $\caS_\lambda$ is the Specht module carrying an irreducible representation of $S_t$. The dimensions of $\caW_\lambda$ and $\caS_\lambda$ are denoted by $D_\lambda$ and $d_\lambda$, respectively. The projector $P_{\lambda}$ onto $\caH_\lambda$ can be expressed as
\begin{align}
    P_{\lambda}=\frac{d_\lambda}{24}\sum_{\sigma\in S_4}\chi_\lambda(\sigma)R(\sigma), \label{eq:Plambda}
\end{align}
where $\chi_\lambda(\sigma)$ is the character of $\sigma$ in the representation $\lambda$, and $R(\sigma)$ is the unitary operator representing the permutation $\sigma$. By definition, $\tr(P_\lambda)=\rk(P_\lambda)=d_\lambda D_\lambda$.

\begin{lemma}
\label{lem:4th_moment}
Suppose $\caU$ is a unitary 4-design and $O \in \caL(\caH)$. Then
\begin{align}
    \underset{U\sim\caU}{\bbE}\, U^{\otimes 4} O^{\otimes 4} U^{\dagger \otimes 4}=\sum_{\lambda}\tr\left( P_{\lambda} O^{\otimes 4}\right)\frac{P_\lambda}{d_\lambda D_\lambda},
\end{align}
where $\underset{U\sim\caU}{\bbE}$ denotes the average over $\caU$.
\end{lemma}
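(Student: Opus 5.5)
The plan is the standard Haar-twirl computation. Write $X\coloneqq O^{\otimes 4}$ and $T\coloneqq \bbE_{U\sim\caU} U^{\otimes 4} X U^{\dagger\otimes 4}$.

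First, since $\caU$ is a unitary 4-design, $T$ coincides with the Haar average $\int_{\rmU(\caH)} U^{\otimes 4} X U^{\dagger\otimes 4}\,dU$. The 4-fold twirl is a degree-$(4,4)$ polynomial in the entries of $U$ and $U^\dagger$, so this step uses only the definition of the design. Hence $T$ commutes with $V^{\otimes 4}$ for every $V\in\rmU(\caH)$.

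Second, apply Schur--Weyl duality as recalled above: $\caH^{\otimes 4}=\bigoplus_\lambda \caW_\lambda\otimes\caS_\lambda$. The commutant of $\{V^{\otimes 4}\}$ is the algebra spanned by the permutation operators $R(\sigma)$, namely $\bigoplus_\lambda \bbone_{\caW_\lambda}\otimes \caL(\caS_\lambda)$. So $T=\sum_\lambda \bbone_{\caW_\lambda}\otimes A_\lambda$ for some operators $A_\lambda$ on $\caS_\lambda$.

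Third, use the extra symmetry of $X$ to show each $A_\lambda$ is a scalar. The operator $X=O^{\otimes 4}$ commutes with every $R(\sigma)$, and so does $T$, because $R(\sigma)$ commutes with $U^{\otimes 4}$. Thus $A_\lambda$ commutes with the irreducible action of $S_4$ on $\caS_\lambda$, and Schur's lemma gives $A_\lambda=c_\lambda \bbone_{\caS_\lambda}$. Therefore $T=\sum_\lambda c_\lambda P_\lambda$.

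Fourth, fix the coefficients by taking traces against $P_\lambda$. Because $P_\lambda$ commutes with $U^{\otimes 4}$ and the twirl preserves traces, $\tr(P_\lambda T)=\tr(P_\lambda X)$. On the other side, $\tr(P_\lambda T)=c_\lambda\tr(P_\lambda)=c_\lambda d_\lambda D_\lambda$. This gives $c_\lambda=\tr(P_\lambda O^{\otimes 4})/(d_\lambda D_\lambda)$, which is the claimed formula. Partitions with more than $d$ rows contribute nothing, since the corresponding $P_\lambda$ vanishes.

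Nothing here is hard. The only step needing care is the third one: one must check that $T$ commutes with every $R(\sigma)$ before invoking Schur's lemma. This fails for a general operator $X$, whose twirl can have off-diagonal components between copies of the same irrep inside $\caS_\lambda$. It holds here only because $O^{\otimes 4}$ is permutation-invariant. Note also that the formula holds for arbitrary, not necessarily Hermitian, $O$.
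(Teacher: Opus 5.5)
Your proof is correct and takes the same route as the paper. The paper also replaces the 4-design average by the Haar twirl, notes invariance under both $U^{\otimes 4}$ and all permutations $R(\sigma)$, and concludes by Schur--Weyl duality; you simply make explicit the Schur's-lemma step and the trace-matching $c_\lambda=\tr(P_\lambda O^{\otimes 4})/(d_\lambda D_\lambda)$ that the paper leaves implicit.
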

\begin{proof}
By assumption, $\caU$ is a unitary 4-design, which implies that
\begin{align}
\underset{U\sim\caU}{\bbE}\, U^{\otimes 4} O^{\otimes 4} U^{\dagger \otimes 4}=\underset{U\sim \mathrm{Haar}}{\bbE}\, U^{\otimes 4} O^{\otimes 4} U^{\dagger \otimes 4}.
\end{align}
Thus both sides are invariant under permutations and diagonal unitary transformations of the form $U^{\otimes 4}$ for $U\in \rmU(\caH)$. The lemma then follows directly from Schur--Weyl duality.
\end{proof}

\begin{table}[t]
\centering
\setlength{\tabcolsep}{12pt}
\caption{Dimensions of the Specht module $\caS_\lambda$ and Weyl module $\caW_\lambda$.}
\label{tab:s4_dimen}
\begin{tabular}{lc Sc}
\toprule
$\lambda$ & $d_\lambda$ & $D_\lambda$ \\
\midrule
$[4]$       & $1$ & $\dfrac{d(d+1)(d+2)(d+3)}{24}$ \\
$[1,1,1,1]$ & $1$ & $\dfrac{d(d-1)(d-2)(d-3)}{24}$ \\
$[2,2]$     & $2$ & $\dfrac{d^2(d^2-1)}{12}$ \\
$[2,1,1]$   & $3$ & $\dfrac{d(d-2)(d^2-1)}{8}$ \\
$[3,1]$     & $3$ & $\dfrac{d(d+2)(d^2-1)}{8}$ \\
\bottomrule
\end{tabular}
\end{table}

\begin{table}[t]
\centering
\setlength{\tabcolsep}{10pt}
\renewcommand{\arraystretch}{1.2}
\caption{Characters of the symmetric group $S_4$.}
\label{tab:s4_cha}
\begin{tabular}{lccccc}
\toprule
Cycle type & $(1^4)$ & $(2^2)$ & $(2,1^2)$ & $(3,1)$ & $(4)$ \\
\midrule
Order              & $1$  & $2$  & $2$  & $3$  & $4$ \\
No.\ of elements   & $1$  & $3$  & $6$  & $8$  & $6$ \\
\midrule
$\chi_1 = [4]$        & $1$  & $1$  & $1$  & $1$  & $1$ \\
$\chi_2 = [1,1,1,1]$  & $1$  & $1$  & $-1$ & $1$  & $-1$ \\
$\chi_3 = [2,2]$      & $2$  & $2$  & $0$  & $-1$ & $0$ \\
$\chi_4 = [2,1,1]$    & $3$  & $-1$ & $-1$ & $0$  & $1$ \\
$\chi_5 = [3,1]$      & $3$  & $-1$ & $1$  & $0$  & $-1$ \\
\bottomrule
\end{tabular}
\end{table}

In the rest of this section, we focus on the case $t = 4$.  There are five potential partitions: $[4]$, $[2,2]$, $[2,1,1]$, $[3,1]$, and $[1,1,1,1]$, where $[2,1,1]$ is relevant when $d\geq 3$, and $[1,1,1,1]$ is relevant when $d\geq 4$. The dimensions of the Specht and Weyl modules for these partitions are presented in \tref{tab:s4_dimen}, and the corresponding characters in \tref{tab:s4_cha}. Using \tref{tab:s4_cha} and \eref{eq:Plambda}, we obtain the following lemma.

\begin{lemma}
\label{lem:moment_tr}
Suppose $O \in \caL^{\rmH}_0(\caH)$ and $\phi_i, \phi_j \in \caP(\caH)$. Then
    \begin{alignat}{2}
        \tr\left( P_{[4]} O^{\otimes 4}\right)&=\frac{\|O\|_2^4+2\|O\|_4^4}{8},&\quad  \tr\left[P_{[4]} \left(\phi_i^{\otimes 2}\otimes\phi_j^{\otimes 2}\right)\right]&=\frac{\left[\tr(\phi_i\phi_j)\right]^2+4\tr(\phi_i\phi_j)+1}{6}, \nonumber\\
        \tr\left( P_{[2,2]} O^{\otimes 4}\right)&=\frac{\|O\|_2^4}{2},&\quad   \tr\left[ P_{[2,2]} \left(\phi_i^{\otimes 2}\otimes\phi_j^{\otimes 2}\right)\right]&=\frac{\left[\tr(\phi_i\phi_j)\right]^2-2\tr(\phi_i\phi_j)+1}{3}, \nonumber\\
        \tr\left( P_{[1,1,1,1]} O^{\otimes 4}\right)&=\frac{\|O\|_2^4-2\|O\|_4^4}{8},&\quad\tr\left[ P_{[1,1,1,1]} \left(\phi_i^{\otimes 2}\otimes\phi_j^{\otimes 2}\right)\right]&=0,\\
        \tr\left( P_{[2,1,1]} O^{\otimes 4}\right)&=\frac{-3\|O\|_2^4+6\|O\|_4^4}{8},&\quad \tr\left[ P_{[2,1,1]}\left(\phi_i^{\otimes 2}\otimes\phi_j^{\otimes 2}\right)\right]&=0, \nonumber\\
        \tr\left( P_{[3,1]} O^{\otimes 4}\right)&=\frac{-3\|O\|_2^4-6\|O\|_4^4}{8},&\quad\tr\left[ P_{[3,1]}\left(\phi_i^{\otimes 2}\otimes\phi_j^{\otimes 2}\right)\right]&=\frac{-\left[\tr(\phi_i\phi_j)\right]^2+1}{2}.\nonumber
    \end{alignat}
\end{lemma}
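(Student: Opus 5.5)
The plan is to expand each $P_\lambda$ using \eref{eq:Plambda} and the character values in \tref{tab:s4_cha}. This turns every trace into a character-weighted sum over the five conjugacy classes of $S_4$. The key fact is that $\tr[R(\sigma)\,A_1\otimes A_2\otimes A_3\otimes A_4]$ factorizes over the cycles of $\sigma$: each cycle $(a_1\,a_2\,\cdots\,a_m)$ contributes $\tr(A_{a_1}A_{a_2}\cdots A_{a_m})$, up to the order convention, which is irrelevant here because all operators are Hermitian and the traces below are real. We therefore first tabulate $\tr[R(\sigma)X]$ for each cycle type and for the two choices of $X$. Then, for each $\lambda$, we form $\frac{d_\lambda}{24}\sum_{\text{classes}}(\text{class size})\,\chi_\lambda\,(\text{class value})$.

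For $X=O^{\otimes4}$ with $\tr O=0$, every permutation with a fixed point contributes zero. This kills the identity (trace $(\tr O)^4$), the transpositions (type $(2,1^2)$, trace $\tr(O^2)(\tr O)^2$), and the 3-cycles (type $(3,1)$, trace $\tr(O^3)\tr O$). Only two classes survive. Type $(2^2)$ has 3 elements, each giving $\|O\|_2^4$. Type $(4)$ has 6 elements, each giving $\tr O^4=\|O\|_4^4$. Hence
\begin{equation*}
\tr(P_\lambda O^{\otimes4})=\frac{d_\lambda}{24}\bigl[3\chi_\lambda(2^2)\|O\|_2^4+6\chi_\lambda(4)\|O\|_4^4\bigr].
\end{equation*}
Reading off the characters, $[4]$ gives $(3\|O\|_2^4+6\|O\|_4^4)/24$, matching the first entry. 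The other four entries follow the same way, using $d_{[2,2]}=2$, $\chi_{[2,2]}(2^2)=2$, $\chi_{[2,2]}(4)=0$, and so on.

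For $X=\phi_i^{\otimes2}\otimes\phi_j^{\otimes2}$, write $s=\tr(\phi_i\phi_j)$ and use $\phi^2=\phi$, $\tr\phi=1$. A cycle's trace is $s$ if it contains both an index from $\{1,2\}$ and one from $\{3,4\}$, and $1$ otherwise, since a product like $\phi_i\phi_j\phi_i\phi_j$ has trace $s^2$ but any cycle meeting both blocks reduces to a power of $s$ determined by the number of alternations. This needs care, so we count class by class.
\begin{itemize}
\item Identity: value $1$.
\item Type $(2^2)$: of the 3 elements, one ($(12)(34)$) gives $1$ and two give $s^2$.
\item Transpositions: of the 6, two ($(12)$ and $(34)$) give $1$ and four give $s$.
\item 3-cycles: all 8 mix the blocks and give $s$.
\item 4-cycles: of the 6, the two alternating ones ($1\to3\to2\to4$ type) give $\tr(\phi_i\phi_j\phi_i\phi_j)=s^2$, and the other four give $\tr(\phi_i\phi_i\phi_j\phi_j)=s$.
\end{itemize}
Summing with characters gives, for example, $[4]$: $(1+1+2s^2+2+4s+8s+4s+2s^2)/24=(4s^2+16s+4)/24$, matching. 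The remaining $\lambda$ are checked identically. For $[1,1,1,1]$ and $[2,1,1]$ the result vanishes, which is consistent with symmetry in the first two and last two factors.

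The main obstacle is purely bookkeeping. We must correctly classify which elements in each class alternate between the two blocks, especially the 4-cycles and the double transpositions. As a sanity check we will confirm that the five results sum to $\tr(X)$: for $O$ this sum is $(\tr O)^4=0$, and for the projectors it is $1$.
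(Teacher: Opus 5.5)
Your proposal is correct and is essentially the paper's own argument, which likewise obtains the lemma by expanding $P_\lambda$ via \eref{eq:Plambda} with the characters in \tref{tab:s4_cha}. Your class-by-class values and all ten character sums check out. One wording slip: the blanket rule that a block-mixing cycle gives $s=\tr(\phi_i\phi_j)$ should instead say that a cycle gives $s^{m/2}$, with $m$ the number of block changes around it. This is exactly what your count already uses for the two alternating 4-cycles.
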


\subsection{Fourth moments of Clifford orbits}
In this appendix, we focus on an $n$-qubit system, whose Hilbert space $\caH$ has dimension $d = 2^n$.

Let $\{I,X,Y,Z\}^{\otimes n}$ be the set of $n$-qubit Pauli operators. Then $\{W^{\otimes 4}\mid W\in \{I,X,Y,Z\}^{\otimes n}\}$ forms a stabilizer group with stabilizer projector
\begin{equation}
  P_n=\frac{1}{d^2}\sum_{W\in \{I,X,Y,Z\}^{\otimes n}} W^{\otimes 4}.
\end{equation}
For a pure state $\psi \in \caP(\caH)$, the stabilizer 2-R\'enyi entropy \cite{Leone22} is defined as
\begin{align}
   M_2(\psi)\coloneqq-\log_2\sum_{W\in\{I,X,Y,Z\}^{\otimes n}}\left[\tr(W\psi)\right]^4+\log_2 d = -\log_2\tr\left(P_n\psi^{\otimes 4}\right)-\log_2 d.
\end{align}
By definition, $\tr\left(P_n\psi^{\otimes 4}\right)=2^{-M_2(\psi)}/d$. As established in \rcite{Zhu16}, the stabilizer 2-R\'enyi entropy satisfies
\begin{align}
      0\le M_2(\psi)\le \log_2(d+1)-1 \quad \forall \psi\in \caP(\caH), \label{eq:stab_bound}
\end{align}
where the lower bound is saturated if and only if $\psi$ is a stabilizer state. Under the action of the Clifford group, the symmetric subspace of $\caH^{\otimes 4}$ decomposes into two inequivalent irreducible components with projectors
\begin{align}
P_+=P_{[4]}P_n=P_nP_{[4]},\quad P_-=P_{[4]}(\bbone-P_n)=(\bbone-P_n)P_{[4]},
\end{align}
and dimensions
\begin{align}
  D^{+}_{[4]}=\tr(P_+)=\frac{(d+1)(d+2)}{6},\quad  D^{-}_{[4]}=\tr(P_-)=\frac{(d^2-1)(d+2)(d+4)}{24}. \label{eq:D4pm}
\end{align}
By construction, we have
\begin{align}
\tr\left(P_+\psi^{\otimes 4}\right)=\tr\left(P_n\psi^{\otimes 4}\right)=\frac{2^{-M_2(\psi)}}{d},\quad \tr\left(P_-\psi^{\otimes 4}\right)=1-\frac{2^{-M_2(\psi)}}{d}\quad \forall \psi\in \caP(\caH). \label{eq:PpPmProb}
\end{align}

\begin{lemma}
\label{lem:4th_tr_Cliff}
Suppose $\psi, \phi \in \caP(\caH)$. Then
    \begin{align}
\underset{U\sim \Cl(n)}{\bbE}\left(U \psi U^{\dagger}\right)^{\otimes 4}
    &=\frac{2^{-M_{2}(\psi)}}{d\,D^{+}_{[4]}}\,P_++\frac{1}{D^{-}_{[4]}}\left(1-\frac{2^{-M_{2}(\psi)}}{d}\right)P_-,  \label{eq:4thTwirlCliff}\\
\underset{U\sim \Cl(n)}{\bbE}\left[\tr\left(\phi\, U \psi U^{\dagger}\right)\right]^4
    &=\frac{2^{-M_{2}(\psi)-M_{2}(\phi)}}{d^2\,D^{+}_{[4]}}+\frac{1}{D^{-}_{[4]}}\left(1-\frac{2^{-M_{2}(\phi)}}{d}\right)\left(1-\frac{2^{-M_{2}(\psi)}}{d}\right)\leq \frac{5(d+3)}{4(d+4)D_{[4]}}\leq \frac{5}{4D_{[4]}}, \label{eq:4thTwirlClifftr}
    \end{align}
where $D^{+}_{[4]}$ and $D^{-}_{[4]}$ are given in \eref{eq:D4pm}, and $D_{[4]}$ is given in \tref{tab:s4_dimen}.
\end{lemma}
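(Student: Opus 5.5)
The plan is to derive \eref{eq:4thTwirlCliff} from Schur's lemma, and then to obtain \eref{eq:4thTwirlClifftr} by contracting with $\phi^{\otimes 4}$ and maximizing the resulting bilinear expression.

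\textbf{Twirl formula.} Set $T(\psi)\coloneqq \bbE_{U\sim\Cl(n)}(U\psi U^\dagger)^{\otimes 4}$.

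First, $T(\psi)$ is supported on the symmetric subspace, because $\psi^{\otimes 4}=P_{[4]}\psi^{\otimes 4}P_{[4]}$ and $P_{[4]}$ commutes with every $U^{\otimes 4}$.

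Second, by invariance of the average over the group, $T(\psi)$ commutes with $V^{\otimes 4}$ for every Clifford $V$.

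By the stated decomposition, the symmetric subspace splits under the Clifford group into the two inequivalent irreducible components with projectors $P_+$ and $P_-$. Schur's lemma then forces
\[
T(\psi)=a_+P_++a_-P_-.
\]
The coefficients follow by taking traces against $P_\pm$. Since $P_\pm$ commute with $U^{\otimes 4}$,
\[
a_\pm D^{\pm}_{[4]}=\tr\bigl(P_\pm\psi^{\otimes 4}\bigr).
\]
Inserting \eref{eq:PpPmProb} gives exactly \eref{eq:4thTwirlCliff}.

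\textbf{Contraction with $\phi^{\otimes 4}$.} For the equality in \eref{eq:4thTwirlClifftr}, write $[\tr(\phi U\psi U^\dagger)]^4=\tr[\phi^{\otimes 4}(U\psi U^\dagger)^{\otimes 4}]$ and average. This yields $a_+\tr(P_+\phi^{\otimes 4})+a_-\tr(P_-\phi^{\otimes 4})$. Applying \eref{eq:PpPmProb} again, now to $\phi$, gives the stated expression.

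\textbf{Upper bound.} Set $x=2^{-M_2(\psi)}/d$ and $y=2^{-M_2(\phi)}/d$. The expression becomes
\[
f(x,y)=\frac{xy}{D^{+}_{[4]}}+\frac{(1-x)(1-y)}{D^{-}_{[4]}}.
\]
By \eref{eq:stab_bound}, both $x$ and $y$ lie in the interval $[2/(d(d+1)),\,1/d]$. Since $f$ is bilinear, its maximum over this box is attained at a corner, so it suffices to evaluate the corners.

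At the stabilizer corner $x=y=1/d$, use $D^{-}_{[4]}/D^{+}_{[4]}=(d-1)(d+4)/4$. This gives
\[
f=\frac{6}{d^2(d+1)(d+2)}\Bigl[1+\frac{4(d-1)}{d+4}\Bigr]=\frac{30}{d(d+1)(d+2)(d+4)}.
\]
This coincides with $5(d+3)/[4(d+4)D_{[4]}]$, because $D_{[4]}=d(d+1)(d+2)(d+3)/24$.

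To show the other corners give no more, consider the partial derivative at $y=1/d$:
\[
\partial_x f=\frac{y}{D^{+}_{[4]}}-\frac{1-y}{D^{-}_{[4]}}.
\]
Its sign is that of $(d+4)/4-1>0$. So $f$ increases in $x$ there, and by symmetry it also increases in $y$ at $x=1/d$. Hence the mixed corners are smaller than $f(1/d,1/d)$.

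\textbf{Main obstacle.} The remaining comparison is the low corner $x=y=2/(d(d+1))$. There $f$ is approximately $1/D^{-}_{[4]}\approx 24/d^4$, whereas the stabilizer value is approximately $30/d^4$. I expect this to be the only genuinely fiddly step. I would handle it by an explicit rational-function inequality in $d$ for all $d=2^n\ge2$, checking $d=2$ directly if needed.

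The final inequality $5(d+3)/[4(d+4)D_{[4]}]\le 5/(4D_{[4]})$ is immediate, since $(d+3)/(d+4)<1$.
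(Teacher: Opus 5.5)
Your proposal is correct and follows the same route as the paper: Schur's lemma with \eref{eq:PpPmProb} gives the twirl, contracting with $\phi^{\otimes 4}$ gives the equality, and \eref{eq:stab_bound} gives the bound, with your corner analysis supplying details the paper leaves implicit. The low-corner comparison you defer closes immediately: at $x=y=2/[d(d+1)]$ one gets $f=24(d^2+3d+1)/[d(d+1)^3(d+2)(d+4)]$, whose ratio to the stabilizer-corner value $30/[d(d+1)(d+2)(d+4)]$ is $4(d^2+3d+1)/[5(d+1)^2]\le 1$, which is equivalent to $(d-1)^2\ge 0$.
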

\begin{proof}[Proof of \lref{lem:4th_tr_Cliff}]
\Eref{eq:4thTwirlCliff} follows from Schur's lemma and \eref{eq:PpPmProb}; it was essentially proved in \rcite{Zhu16}, though without the concept of stabilizer 2-R\'enyi entropy. The equality in \eref{eq:4thTwirlClifftr} is a direct corollary of \eqsref{eq:PpPmProb}{eq:4thTwirlCliff}. The first inequality in \eref{eq:4thTwirlClifftr} follows from \eref{eq:stab_bound}, and the second is trivial.
\end{proof}

The following lemma is a simple corollary of \lref{lem:4th_tr_Cliff}.
\begin{lemma}
\label{lem:4th_poten_Cliff}
Suppose $\caE=\{\ket{\phi_{i}}, w_i\}_i$ is a state ensemble on $\caH$, and $\psi \in \caP(\caH)$ is a pure state. Then
     \begin{align}
         \underset{U\sim \Cl(n)}{\bbE}\Phi_4\left(\caE,U\psi U^{\dagger}\right)&
    =\frac{\sum_i 2^{-M_{2}(\phi_i)-M_{2}(\psi)}w_i}{d^2\,D^{+}_{[4]}}+\frac{1}{D^{-}_{[4]}}\left(1-\frac{\sum_i 2^{-M_{2}(\phi_i)}w_i}{d}\right)\left(1-\frac{2^{-M_{2}(\psi)}}{d}\right)\leq \frac{5(d+3)}{4(d+4)D_{[4]}}\leq \frac{5}{4D_{[4]}}.
    \end{align}
\end{lemma}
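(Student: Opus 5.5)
The statement is \lref{lem:4th_poten_Cliff}. The plan is to reduce it to \lref{lem:4th_tr_Cliff} by linearity. By definition,
\[
\Phi_4\bigl(\caE,U\psi U^\dagger\bigr)=\sum_i w_i\bigl[\tr\bigl(\phi_i U\psi U^\dagger\bigr)\bigr]^4 ,
\]
so averaging over $U\sim\Cl(n)$ and exchanging the finite sum with the expectation gives
\[
\underset{U\sim \Cl(n)}{\bbE}\Phi_4\bigl(\caE,U\psi U^\dagger\bigr)=\sum_i w_i\,\underset{U\sim \Cl(n)}{\bbE}\bigl[\tr\bigl(\phi_i U\psi U^\dagger\bigr)\bigr]^4 .
\]
I will then apply the equality in \eref{eq:4thTwirlClifftr} with $\phi=\phi_i$ to each term.

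Each term has the form
\[
\frac{2^{-M_2(\psi)-M_2(\phi_i)}}{d^2D^+_{[4]}}+\frac{1}{D^-_{[4]}}\Bigl(1-\frac{2^{-M_2(\phi_i)}}{d}\Bigr)\Bigl(1-\frac{2^{-M_2(\psi)}}{d}\Bigr),
\]
which is affine in $2^{-M_2(\phi_i)}$. Summing against the weights $w_i$ and using $\sum_i w_i=1$ pulls the weighted average $\sum_i w_i 2^{-M_2(\phi_i)}$ inside, which yields exactly the claimed equality.

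For the inequalities, I will bound each summand using the inequality already established in \eref{eq:4thTwirlClifftr}, namely $\bbE_U[\tr(\phi_i U\psi U^\dagger)]^4\le 5(d+3)/[4(d+4)D_{[4]}]$, valid for every pure state $\phi_i$. Taking the convex combination with weights $w_i$ preserves this bound, and the final bound $5/(4D_{[4]})$ follows from $(d+3)/(d+4)<1$.

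If a self-contained check of the first inequality is wanted, set $a=\sum_i w_i2^{-M_2(\phi_i)}/d$ and $b=2^{-M_2(\psi)}/d$. By \eref{eq:stab_bound}, both lie in $[2/(d(d+1)),\,1/d]$. The expression
\[
\frac{ab}{D^+_{[4]}}+\frac{(1-a)(1-b)}{D^-_{[4]}}
\]
is bilinear in $(a,b)$, so its maximum over this box is attained at a corner. Comparing the corners against $5(d+3)/[4(d+4)D_{[4]}]$ is routine algebra with the dimensions in \eref{eq:D4pm} and \tref{tab:s4_dimen}, and the maximum should occur at $a=b=1/d$, i.e., when both states are stabilizer states. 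This corner check is the only step with any real content; the rest is linearity of the average.
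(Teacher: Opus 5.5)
Your proposal is correct and follows the paper's own route: the paper presents \lref{lem:4th_poten_Cliff} as a direct corollary of \lref{lem:4th_tr_Cliff}, i.e., apply that lemma to each $\phi_i$, use $\sum_i w_i=1$ to pull the weighted average of $2^{-M_2(\phi_i)}$ inside the affine expression, and take the convex combination of the per-state bound. Your optional corner check is also sound: at $a=b=1/d$ the expression equals $30/[d(d+1)(d+2)(d+4)]=5(d+3)/[4(d+4)D_{[4]}]$, and the other corners of the box are no larger, since the opposite corner is smaller by the factor $4(d^2+3d+1)/[5(d+1)^2]\le 1$ and the expression is increasing in $a$ along $b=1/d$.
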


\section{Proofs of results on average-case optimal shadow estimation}
\label{sup:avg_norm_up}

In this section, we prove our  main results on optimal shadow estimation in the average-case setting, including  \thref{thm:AvgShNormUB}  and \pref{pro:ShNormFidLD} in the main text  as well as \pref{pro:AShNormFidUB} in the End Matter. 

\subsection{Auxiliary results on shadow norms}
\begin{lemma}
\label{lem:MaxEigUB}
Suppose $A$ is a positive semidefinite operator acting on $\caH$ with $\|A\|_1 = a$ and $\|A\|_2^2 = b$, where $a$ and $b$ are positive numbers satisfying ${a^2}/{d}\le b \le a^2$. Then
\begin{align}
\|A\|\leq\frac{a + \sqrt{(d-1)(db - a^2)}}{d}, \label{eq:MaxEigUB}
\end{align}
and the inequality is saturated when all eigenvalues of $A$, except for the largest one, are equal.
\end{lemma}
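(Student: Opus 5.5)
Let $\lambda_1\ge\lambda_2\ge\cdots\ge\lambda_d\ge0$ be the eigenvalues of $A$, so that $\|A\|=\lambda_1$, $\sum_k\lambda_k=a$, and $\sum_k\lambda_k^2=b$. The plan is to bound $\lambda_1$ by applying Cauchy--Schwarz to the remaining $d-1$ eigenvalues and then solving the resulting quadratic inequality for $\lambda_1$.

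The remaining eigenvalues satisfy $\sum_{k\ge2}\lambda_k=a-\lambda_1$ and $\sum_{k\ge2}\lambda_k^2=b-\lambda_1^2$. Cauchy--Schwarz (equivalently, the power-mean inequality) gives
\begin{align}
(a-\lambda_1)^2\le (d-1)(b-\lambda_1^2),
\end{align}
with equality if and only if $\lambda_2=\cdots=\lambda_d$. Positivity of $A$ is not needed for this step, only Hermiticity. Rearranging yields the quadratic inequality
\begin{align}
d\lambda_1^2-2a\lambda_1+a^2-(d-1)b\le0.
\end{align}
Its discriminant is $4a^2-4d[a^2-(d-1)b]=4(d-1)(db-a^2)$, which is nonnegative by the hypothesis $a^2/d\le b$. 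Hence $\lambda_1$ lies between the two roots, and in particular
\begin{align}
\lambda_1\le\frac{a+\sqrt{(d-1)(db-a^2)}}{d}.
\end{align}
This is \eref{eq:MaxEigUB}, because $\|A\|=\lambda_1$ for a positive semidefinite $A$.

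For the saturation claim, suppose $\lambda_2=\cdots=\lambda_d=\mu$. Then the Cauchy--Schwarz step is an equality, so $\lambda_1$ is a root of the quadratic. We need it to be the larger root, which holds exactly when $\lambda_1\ge a/d$. This is automatic because $\lambda_1$ is the largest eigenvalue and hence at least the mean $a/d$. The hypothesis $b\le a^2$ is consistent with the constraints, and in the extreme case $b=a^2$ the bound gives $\lambda_1=a$, i.e.\ $A$ has rank one. There is no real obstacle; the only care needed is selecting the correct root of the quadratic.
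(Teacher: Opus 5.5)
Your proof is correct and follows essentially the same route as the paper: Cauchy--Schwarz on the $d-1$ smaller eigenvalues, the resulting quadratic inequality $d\lambda_1^2-2a\lambda_1+a^2-(d-1)b\le 0$, and taking its larger root. The one addition is your argument that $\lambda_1\ge a/d$ forces $\lambda_1$ to be the larger root when $\lambda_2=\cdots=\lambda_d$; the paper leaves this step implicit in its saturation claim.
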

\begin{proof}[Proof of \lref{lem:MaxEigUB}]
Let $\mu_1 \ge \mu_2 \ge \dots \ge \mu_d \ge 0$ be the eigenvalues of $A$ in nonincreasing order; then $\|A\|= \mu_1$. By definition,
\begin{align}
\sum_{i=1}^d \mu_i =\|A\|_1= a,\quad  \quad \sum_{i=1}^d \mu_i^2 =\|A\|_2^2= b.
\end{align}
Applying the Cauchy--Schwarz inequality, we deduce that
\begin{align}
(a - \mu_1)^2=\left( \sum_{i=2}^d \mu_i \right)^2 \le (d-1) \sum_{i=2}^d \mu_i^2=(d-1)\left(b - \mu_1^2\right),
\end{align}
which simplifies to
\begin{align}
d\mu_1^2 - 2a\mu_1 + \left[a^2 - (d-1)b\right] &\le 0.
\end{align}
This quadratic inequality implies that
\begin{align}
\|A\|=\mu_1\le \frac{a + \sqrt{(d-1)(db - a^2)}}{d},
\end{align}
which confirms \eref{eq:MaxEigUB}. If $\mu_2 = \mu_3 = \dots = \mu_d$, then all three inequalities above are saturated, which completes the proof of \lref{lem:MaxEigUB}.
\end{proof}

\begin{lemma}\label{lem:4NormO2Norm}
Suppose $O\in \caL^{\rmH}_0(\caH)$. Then
\begin{equation}
\frac{1}{d}\|O\|_2^4 \leq \|O\|_4^4\leq \frac{d-1}{d}\|O\|_2^4.  \label{eq:4NormO2Norm}
\end{equation}
\end{lemma}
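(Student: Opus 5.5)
The statement to prove is \lref{lem:4NormO2Norm}. The plan is to work entirely with the eigenvalues of $O$. Let $\lambda_1,\dots,\lambda_d$ be the real eigenvalues of the traceless Hermitian operator $O$. Then
\begin{equation*}
\sum_i\lambda_i=0,\qquad \|O\|_2^2=\sum_i\lambda_i^2,\qquad \|O\|_4^4=\sum_i\lambda_i^4 .
\end{equation*}
Write $x_i=\lambda_i^2\ge0$ and $s=\sum_i x_i=\|O\|_2^2$. Both inequalities then become statements about $\sum_i x_i^2$ for nonnegative reals $x_i$ with fixed sum $s$. The lower bound needs no tracelessness; the upper bound is where $\sum_i\lambda_i=0$ enters.

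\textbf{Lower bound.} Apply Cauchy--Schwarz in the form $\bigl(\sum_i x_i\bigr)^2\le d\sum_i x_i^2$. This gives $\|O\|_2^4\le d\|O\|_4^4$, which is the left inequality of \eref{eq:4NormO2Norm}.

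\textbf{Upper bound.} The target is $\sum_i x_i^2\le \frac{d-1}{d}s^2$. The key step, and the only genuine obstacle, is a pointwise bound on each $x_j$. Tracelessness gives $\lambda_j=-\sum_{i\ne j}\lambda_i$, so by Cauchy--Schwarz over the remaining $d-1$ indices,
\begin{equation*}
\lambda_j^2\le (d-1)\sum_{i\ne j}\lambda_i^2=(d-1)(s-\lambda_j^2).
\end{equation*}
Rearranging yields $x_j\le \frac{d-1}{d}s$ for every $j$. Consequently
\begin{equation*}
\sum_j x_j^2\le \max_j x_j\cdot\sum_j x_j\le \frac{d-1}{d}s^2,
\end{equation*}
which is the right inequality.

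Alternatively, the pointwise bound can be read off from \lref{lem:MaxEigUB}. However, the direct Cauchy--Schwarz argument above is short and self-contained, so I would use it. One point to double-check is that $\max_j x_j\cdot s$ is sharp enough. It is only tight for $d=2$, which is fine since only the stated inequality is required, not saturation.
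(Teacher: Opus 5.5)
Your proof is correct and follows the same skeleton as the paper's: both bound $\|O\|_4^4$ by the largest squared eigenvalue times $\|O\|_2^2$, and both use tracelessness to cap that largest squared eigenvalue at $\frac{d-1}{d}\|O\|_2^2$. The paper gets this cap less directly, by splitting $O$ into positive and negative parts, normalizing $\|O_+\|_2=1\ge\|O_-\|_2$ and showing $\|O\|_2^2\ge d/(d-1)$; your one-line Cauchy--Schwarz on $\lambda_j=-\sum_{i\ne j}\lambda_i$ gives it directly and needs no normalization.
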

\begin{proof}
The first inequality in \eref{eq:4NormO2Norm} follows from the Cauchy--Schwarz inequality. To prove the second inequality, denote by $O_+$ and $O_-$ the positive and negative parts of $O$, respectively. Then, by assumption, we have
\begin{equation}
  \tr(O_+)=\tr(O_-),\quad  \frac{1}{d-1} \|O_+\|_2^2\leq \|O_-\|_2^2\leq (d-1)\|O_+\|_2^2.
\end{equation}
Without loss of generality, we can assume that $\|O_-\|_2^2\leq \|O_+\|_2^2=1$; then $\|O\|_2^2=\|O_+\|_2^2+\|O_-\|_2^2\geq d/(d-1)$.  In addition, the eigenvalues of $O$ are bounded in absolute value by 1, so that
\begin{equation}
   \|O\|_4^4=\|O_+\|_4^4+\|O_-\|_4^4\leq \|O_+\|_2^2+\|O_-\|_2^2= \|O\|_2^2 \leq \frac{d-1}{d}\|O\|_2^4,
\end{equation}
which confirms the second inequality in \eref{eq:4NormO2Norm} and completes the proof of \lref{lem:4NormO2Norm}.
\end{proof}

\begin{proposition}
\label{pro:AvgShNormTUB}
   Suppose $\caE$ is a state $2$-design, $O \in \caL^{\rmH}_0(\caH)$, and $\caU$ is a unitary $4$-design. Then
   \begin{align}    
   \left\|\Xi(O,\caU)\right\|_\caE^2 \le \left(\frac{d+1}{d}+\sqrt{f(d,r)\bPhi_3(\caE)+g(d,r)}\lsp\right)\|O\|_2^2,    \label{eq:AvgShNormTUB}
   \end{align}
   where
   \begin{equation}
r=\frac{\|O\|_4^4}{\|O\|_2^4}, \quad  f(d,r)=\frac{12(d+1)^2\left[(d^2+3d+3)+(d^2+d)r\right]}{d^2(d+2)^2(d+3)},\quad g(d,r)=\frac{4(d+1)^2\left[(d^2-3d)r-(4d+3)\right]}{d^2(d+2)(d+3)}.     \label{eq:func}
   \end{equation}
\end{proposition}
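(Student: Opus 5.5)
Since $\caE$ is a state 2-design, the squared shadow norm of $O_U\coloneqq UOU^\dagger$ is the operator norm of a positive semidefinite operator,
\begin{align}
\|O_U\|_\caE^2=\|A_U\|,\qquad A_U\coloneqq d(d+1)^2\sum_i w_i\phi_i\left[\tr(\phi_iO_U)\right]^2 .
\end{align}
I will bound $\|A_U\|$ through its first two spectral moments using \lref{lem:MaxEigUB}, namely
\begin{align}
\|A_U\|\le \frac{a_U+\sqrt{(d-1)(d\,b_U-a_U^2)}}{d}\le\frac{a_U}{d}+\sqrt{b_U-\frac{a_U^2}{d}},
\end{align}
with $a_U=\tr A_U$ and $b_U=\tr A_U^2$. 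The trace is exact for every $U$, because the 2-design property gives
\begin{align}
a_U=d(d+1)^2\sum_iw_i[\tr(\phi_iO_U)]^2=d(d+1)^2\tr\left[Q_2(\caE)O_U^{\otimes2}\right]=d(d+1)\|O\|_2^2 .
\end{align}
So $a_U/d=(d+1)\|O\|_2^2$ does not depend on $U$, and only the square-root term needs averaging. By concavity of the square root (Jensen),
\begin{align}
\bbE_U\sqrt{b_U-a_U^2/d}\le\sqrt{\bbE_Ub_U-d(d+1)^2\|O\|_2^4}.
\end{align}
(I would check the prefactor against the stated $(d+1)/d$ leading term. That term suggests the paper normalizes $\|O\|_\caE^2$ with $\bcaM^{-1}$ differently, or folds part of $a/d$ into the root; either way the structure is the same.)

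The core step is computing $\bbE_U b_U$. Write
\begin{align}
b_U=d^2(d+1)^4\sum_{i,j}w_iw_j\tr(\phi_i\phi_j)\,\tr\left[(\phi_i^{\otimes2}\otimes\phi_j^{\otimes2})O_U^{\otimes4}\right].
\end{align}
Averaging over the unitary 4-design with \lref{lem:4th_moment} replaces $O_U^{\otimes4}$ by $\sum_\lambda \tr(P_\lambda O^{\otimes4})P_\lambda/(d_\lambda D_\lambda)$. \Lref{lem:moment_tr} supplies both factors: $\tr(P_\lambda O^{\otimes4})$ in terms of $\|O\|_2^4$ and $\|O\|_4^4=r\|O\|_2^4$, and $\tr[P_\lambda(\phi_i^{\otimes2}\otimes\phi_j^{\otimes2})]$ as quadratics in $x_{ij}=\tr(\phi_i\phi_j)$. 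Only $[4]$, $[2,2]$ and $[3,1]$ survive. Multiplying by the extra factor $x_{ij}$ and summing gives a cubic polynomial in $x_{ij}$ averaged over $w_iw_j$. This is expressible through $\Phi_1=1/d$, $\Phi_2=2/[d(d+1)]$ and $\Phi_3=\bPhi_3(\caE)/D_{[3]}$, so $\bbE_Ub_U$ is affine in $\bPhi_3(\caE)$ with coefficients rational in $d$ and linear in $r$. Subtracting $d(d+1)^2\|O\|_2^4$ and dividing by $\|O\|_2^4$ should produce exactly $f(d,r)\bPhi_3(\caE)+g(d,r)$. The $\bPhi_3$ coefficient comes only from the $x^3$ terms, i.e.\ from $[4]$ ($1/6$), $[2,2]$ ($1/3$) and $[3,1]$ ($-1/2$).

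I expect the main obstacle to be this bookkeeping: combining the three irreps with the dimensions in \tref{tab:s4_dimen} and simplifying to the stated closed forms of $f$ and $g$, including the cancellation that leaves the $(d^2-3d)r-(4d+3)$ structure in $g$. It is routine but error-prone, so I would verify the result against the 3-design case, where $\bPhi_3=1$ and the norm is known exactly, as a consistency check. The assumption that $\caU$ is only a 4-design suffices, since $b_U$ is a degree-4 polynomial in $U$ and $U^\dagger$.
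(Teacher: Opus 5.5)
Your route is the paper's: \lref{lem:MaxEigUB}, Jensen for the $U$-average, the 4-design twirl via \lref{lem:4th_moment} and \lref{lem:moment_tr}, and reduction to $\Phi_1,\Phi_2,\Phi_3$. You also correctly identify the surviving irreps $[4],[2,2],[3,1]$ and their $x^3$ coefficients. However, two steps fail as written.

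First, your trace is off by a factor of $d$. Since $Q_2(\caE)=2P_{[2]}/[d(d+1)]$ and $\tr[P_{[2]}O^{\otimes 2}]=\|O\|_2^2/2$, one gets $a_U=(d+1)\|O\|_2^2$. Hence $a_U/d=\frac{d+1}{d}\|O\|_2^2$, which is $\|O_U\|^2_{\caE,\bbone/d}$, consistent with \eref{eq:SDopt}. This is exactly the stated leading term, so there is no normalization difference to explain away. With your value, the subtracted term $d(d+1)^2\|O\|_2^4$ exceeds $b_U\le(\tr A_U)^2=(d+1)^2\|O\|_2^4$, so your radicand would be negative. The correct subtraction is $(d+1)^2\|O\|_2^4/d$.

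Second, the relaxation $\frac{1}{d}\sqrt{(d-1)(d\,b_U-a_U^2)}\le\sqrt{b_U-a_U^2/d}$ discards a factor $\sqrt{(d-1)/d}$, and the stated $f$ and $g$ depend on it. The twirl produces a factor $1/(d-1)$ through $d_\lambda D_\lambda$ for $[2,2]$ and $[3,1]$, since both $D_\lambda$ contain $d^2-1$. It is precisely the factor $d-1$ retained from \lref{lem:MaxEigUB} that cancels it; one finds
\[
\frac{d-1}{d}\left(\bbE_U b_U-\frac{(d+1)^2}{d}\|O\|_2^4\right)=\left[f(d,r)\bPhi_3(\caE)+g(d,r)\right]\|O\|_2^4 .
\]
Your chain therefore yields $\frac{d}{d-1}\left[f(d,r)\bPhi_3(\caE)+g(d,r)\right]$ under the square root. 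That bound is weaker by $\sqrt{d/(d-1)}$ in the fluctuation term ($\sqrt{2}$ at $d=2$), so the bookkeeping cannot produce exactly $f\bPhi_3+g$ as you expect. The fix is to keep $\|A_U\|\le\frac{a_U}{d}+\sqrt{\frac{d-1}{d}\left(b_U-\frac{a_U^2}{d}\right)}$ and apply Jensen to that form.

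Your proposed 3-design consistency check would not catch this loss, since a looser upper bound still holds there. Comparing $a_U/d$ with $\|O_U\|^2_{\caE,\bbone/d}$ would have caught the first slip.
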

By definition in \eref{eq:func} and \lref{lem:4NormO2Norm}, it is straightforward to verify that
\begin{equation}\label{eq:func2}
\frac{1}{d}\le r\le \frac{d-1}{d},\quad
f(d,r) \le \frac{24}{d} - \frac{26}{d^2},\quad
g(d,r) \le 4-\frac{g_d}{d},\quad
g_d\coloneqq\begin{cases}
18 & d=2,\\
22 & d=3,\\
25 & d=4,5,\\
29 &d\geq 6,\\
30 &d\geq 7.
\end{cases}
\end{equation}

\begin{proof}[Proof of \pref{pro:AvgShNormTUB}]
Suppose the ensemble $\caE$ has the form $\caE=\{\ket{\phi_i},w_i\}_i$ and let $A\coloneqq\sum_i w_i\left[\tr\left(\phi_i O\right)\right]^2 \phi_i$. Then
\begin{align}
    \|A\|_1=\frac{1}{d(d+1)}\|O\|_2^2,\quad \|A\|_2^2=\sum_{i,j}w_iw_j\left[\tr(O\phi_{i}) \right]^2\left[\tr(O\phi_{j})\right]^2\tr(\phi_{i}\phi_{j}).
\end{align}
Using \lref{lem:MaxEigUB}, we deduce that
\begin{align}
     &\|O\|_\caE^2\le (d+1)^2\sqrt{d(d-1) \sum_{i,j}w_iw_j\left[\tr(O\phi_{i}) \right]^2\left[\tr(O\phi_{j})\right]^2\tr(\phi_{i}\phi_{j})-\frac{d-1}{d^2(d+1)^2}\|O\|_2^4}+\frac{d+1}{d}\|O\|_2^2.
\end{align}
Applying Jensen's inequality, we further derive an upper bound for $\underset{U\sim\caU}{\bbE}\left\|UOU^{\dagger}\right\|_\caE^2$:
\begin{align}
     \underset{U\sim\caU}{\bbE}\left\|UOU^{\dagger}\right\|_\caE^2&\le (d+1)^2\sqrt{d(d-1)  \underset{U\sim\caU}{\bbE}\sum_{i,j}w_iw_j\left[\tr(UOU^{\dagger}\phi_{i}) \right]^2\left[\tr(UOU^{\dagger}\phi_{j})\right]^2\tr(\phi_{i}\phi_{j})-\frac{d-1}{d^2(d+1)^2}\|O\|_2^4} \nonumber\\
     &\equad+\frac{d+1}{d}\|O\|_2^2. \label{eq:AvgShNormTUBProof1}
\end{align}
The expectation under the square root can be evaluated as follows:
\begin{align}
    &\underset{U\sim\caU}{\bbE} \sum_{i,j}w_iw_j\left[\tr(UOU^{\dagger}\phi_{i}) \right]^2\left[\tr(UOU^{\dagger}\phi_{j})\right]^2\tr(\phi_{i}\phi_{j})= \sum_{i,j}w_iw_j\tr\left[\underset{U\sim\caU}{\bbE}U^{\otimes 4}O^{\otimes 4}U^{\dagger \otimes 4}\left( \phi_i^{\otimes 2}\otimes\phi_j^{\otimes 2}\right)\right]\tr(\phi_{i}\phi_{j})\nonumber\\
    &=\sum_{i,j}\sum_{\lambda}\frac{w_iw_j}{d_\lambda D_\lambda}\tr\left( P_{\lambda} O^{\otimes 4}\right)\tr\left(P_\lambda\phi_i^{\otimes 2}\otimes\phi_j^{\otimes 2}\right)\tr(\phi_{i}\phi_{j})=h_1(d,O)\Phi_1+h_2(d,O)\Phi_2+h_3(d,O)\Phi_3, \label{eq:AvgNormTUBProof2}
\end{align}
where $\lambda$ represents a nonincreasing partition of 4 into at most $d$ parts (see SM \sref{sup:schur}) and
\begin{equation}
    \begin{aligned}
h_1(d,O)&=\frac{(d^2+3d+6)\left[\tr(O^2)\right]^2-4d\tr(O^4)}{d^2(d-1)(d+1)(d+2)(d+3)},\\
h_2(d,O)&=\frac{-(12d+12)\left[\tr(O^2)\right]^2+(4d^2-4d)\tr(O^4)}{d^2(d-1)(d+1)(d+2)(d+3)},\\
h_3(d,O)&=\frac{(2d^2+6d+6)\left[\tr(O^2)\right]^2+(2d^2+2d)\tr(O^4)}{d^2(d-1)(d+1)(d+2)(d+3)}.
\end{aligned}
\end{equation}
The second equality in \eref{eq:AvgNormTUBProof2} follows from \lref{lem:4th_moment}, and the third equality is derived using \lref{lem:moment_tr}. 

Combining \eqsref{eq:AvgShNormTUBProof1}{eq:AvgNormTUBProof2}, we deduce that
\begin{align}
    \left\|\Xi(O,\caU)\right\|_\caE^2 &\le\frac{d+1}{d}\|O\|_2^2 +\sqrt{d(d+1)^4(d-1)\left[h_1(d,O)\Phi_1+h_2(d,O)\Phi_2+h_3(d,O)\Phi_3\right]-\frac{(d+1)^2(d-1)}{d^2}\|O\|_2^4}\nonumber\\
    &=\left(\frac{d+1}{d}+\sqrt{f(d,r)\bPhi_3(\caE)+g(d,r)}\lsp\right)\|O\|_2^2, \label{eq:avg_G_up}
\end{align}
where $\Phi_j$ is shorthand for the frame potential $\Phi_j(\caE)$ for $j=1, 2, 3$, and the equality holds because $\Phi_1=1/d$ and $\Phi_2=2/[d(d+1)]$ (since $\caE$ is a state 2-design) together with the definition $r=\|O\|_4^4/\|O\|_2^4$. This completes the proof of \pref{pro:AvgShNormTUB}.
\end{proof}

\subsection{Proof of \thref{thm:AvgShNormUB}}
\begin{proof}[Proof of \thref{thm:AvgShNormUB}]
By virtue of \pref{pro:AvgShNormTUB} and the inequalities in \eref{eq:func2}, we deduce that
\begin{align}  \left\|\Xi(O,\caU)\right\|_\caE^2 &\le \left(\frac{d+1}{d} + \sqrt{\left(\frac{24}{d}-\frac{26}{d^2}\right)\bPhi_3(\caE) + 4 - \frac{g_d}{d}}\lsp\right)\|O\|_2^2\le \left(\frac{d+1}{d} + \sqrt{\frac{24}{d}\bPhi_3(\caE) + 4 - \frac{26}{d^2} -\frac{g_d}{d}}\lsp\right)\|O\|_2^2\nonumber\\
    &\le \left(\frac{d+1}{d} + \sqrt{\frac{24}{d}\bPhi_3(\caE) + 4 - \frac{30}{d}}\lsp\right)\|O\|_2^2\leq \left(1 + \sqrt{\frac{24}{d}[\bPhi_3(\caE)-1] + 4 }\lsp\right)\|O\|_2^2\nonumber\\
    &\le \left(2\sqrt{2}+1\right)\|O\|_2^2.
\end{align}
Here the second inequality holds because $\bPhi_3(\caE)\ge 1$, the third holds because $(26/d)+g_d\geq 30$ by \eref{eq:func2}, the last holds because $\bPhi_3(\caE) \le (d + 2)(d^2 + 2d - 1)/(6d^2)$ by \pref{pro:FP3UB}, which implies that $24[\bPhi_3(\caE)-1]/d\leq 4$, and the fourth inequality follows from the concavity of the square root:
\begin{align}
\sqrt{\frac{24}{d}\bPhi_3(\caE) + 4 - \frac{30}{d}}\leq
\sqrt{\frac{24}{d}[\bPhi_3(\caE)-1] + 4 }-\frac{3}{d\sqrt{\frac{24}{d}[\bPhi_3(\caE)-1] + 4 }} \leq
\sqrt{\frac{24}{d}[\bPhi_3(\caE)-1] + 4 }-\frac{1}{d}.
\end{align}
This completes the proof of \thref{thm:AvgShNormUB}.
\end{proof}

\subsection{Auxiliary results on shadow norms for fidelity estimation}\label{app:FidEaux}
Suppose $\caE=\{\ket{\phi_i},w_i\}_i$ is a state 2-design on $\caH$, $\psi \in \caP(\caH)$, and $\psi_0=\psi-{\bbone}/{d}$. To better understand the properties of the squared shadow norm $\|\psi_0\|_\caE^2$ tied to fidelity estimation, we introduce some auxiliary results. Define
\begin{equation} J(\caE,\psi)\coloneqq d(d + 1)^2\sum_iw_i\left[\tr\left(\phi_i\psi\right)\right]^2\phi_{i},\quad \eta(\caE,\psi)\coloneqq\left\|J(\caE,\psi)\right\|.  \label{eq:JN}
\end{equation}

\begin{proposition}
\label{pro:w}
     Suppose $\caE$ forms a state 2-design on $\caH$ and $\psi\in\caP(\caH)$. Then
\begin{align}
   \eta(\caE,\psi)-3-\frac{2}{d}\le\left\|\psi_0\right\|_\caE^2=\left\|J(\caE,\psi)-\frac{2(d+1)}{d}\psi\right\|-1+\frac{1}{d^2}\le \eta(\caE,\psi)-1+\frac{1}{d^2}.\label{eq:w}
\end{align}
\end{proposition}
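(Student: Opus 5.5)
Since $\caE$ is a state 2-design, the reconstruction map acts on traceless operators as $\caM_\caE^{-1}(O_0)=(d+1)O_0$. Hence, by the formula following \eref{eq:norm2},
\begin{equation*}
\|\psi_0\|_\caE^2=\Bigl\|d(d+1)^2\sum_i w_i\phi_i\bigl[\tr(\phi_i\psi_0)\bigr]^2\Bigr\|.
\end{equation*}
The plan is to expand $\bigl[\tr(\phi_i\psi_0)\bigr]^2=\bigl[\tr(\phi_i\psi)-1/d\bigr]^2$ and evaluate the lower-order moments using the 2-design property.

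The expansion yields three terms.
\begin{itemize}
\item The quadratic term is $d(d+1)^2\sum_i w_i[\tr(\phi_i\psi)]^2\phi_i=J(\caE,\psi)$.
\item The cross term is $-\frac{2}{d}\,d(d+1)^2\sum_i w_i\tr(\phi_i\psi)\phi_i$. Since $\sum_i w_i\tr(\phi_i\psi)\phi_i=\tr_1[Q_2(\caE)(\psi\otimes\bbone)]=(\psi+\bbone)/[d(d+1)]$, this term equals $-\frac{2(d+1)}{d}(\psi+\bbone)$.
\item The constant term is $\frac{1}{d^2}\,d(d+1)^2\sum_i w_i\phi_i=\frac{(d+1)^2}{d^2}\bbone$.
\end{itemize}
Collecting the identity contributions gives the coefficient $\frac{(d+1)^2-2d(d+1)}{d^2}=\frac{1-d^2}{d^2}=-1+\frac{1}{d^2}$. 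The operator whose norm we need is therefore $J-\frac{2(d+1)}{d}\psi+(-1+\frac{1}{d^2})\bbone$.

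This operator is positive semidefinite, being $d(d+1)^2\sum_i w_i\phi_i[\tr(\phi_i\psi_0)]^2$. Shifting a PSD operator by a multiple of the identity shifts its largest eigenvalue by the same amount. Consequently $\|\cdot\|$ of it equals $\lambda_{\max}(J-\frac{2(d+1)}{d}\psi)-1+\frac{1}{d^2}$.

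To turn this into the claimed equality with $\|J-\frac{2(d+1)}{d}\psi\|$, one must check that $\lambda_{\max}$ of $J-\frac{2(d+1)}{d}\psi$ dominates the absolute value of its most negative eigenvalue. This follows from positivity: the original operator is PSD, so $J-\frac{2(d+1)}{d}\psi\ge(1-\frac{1}{d^2})\bbone$. Hence that operator is itself PSD and its operator norm equals its largest eigenvalue.

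The two bounds then follow by Weyl-type inequalities for the largest eigenvalue.
\begin{itemize}
\item \textbf{Upper bound.} Since $\psi\ge0$, we have $\lambda_{\max}(J-c\psi)\le\lambda_{\max}(J)=\eta$, giving $\|\psi_0\|_\caE^2\le\eta-1+\frac{1}{d^2}$.
\item \textbf{Lower bound.} Since $\|\psi\|=1$, we have $\lambda_{\max}(J-c\psi)\ge\eta-\frac{2(d+1)}{d}$. This gives $\|\psi_0\|_\caE^2\ge\eta-2-\frac{2}{d}-1+\frac{1}{d^2}\ge\eta-3-\frac{2}{d}$.
\end{itemize}

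The only delicate point is the sign and positivity argument, i.e., justifying that the norm is the top eigenvalue after the shift. The rest is bookkeeping of the second-moment identity $\tr_1[Q_2(\psi\otimes\bbone)]=(\psi+\bbone)/[d(d+1)]$.
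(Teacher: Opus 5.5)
Your proposal is correct and follows essentially the same route as the paper. You expand $[\tr(\phi_i\psi_0)]^2$, use the 2-design identities $\sum_i w_i\phi_i=\bbone/d$ and $\sum_i w_i\tr(\phi_i\psi)\phi_i=(\bbone+\psi)/[d(d+1)]$ to reduce to the operator $J(\caE,\psi)-\frac{2(d+1)}{d}\psi-(1-\frac{1}{d^2})\bbone$, and then bound its top eigenvalue using positivity of $J(\caE,\psi)-\frac{2(d+1)}{d}\psi$ together with the triangle inequality. You also explicitly derive that positivity from the positivity of the original operator $d(d+1)^2\sum_i w_i[\tr(\phi_i\psi_0)]^2\phi_i$, which justifies a step the paper merely asserts.
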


\begin{proposition}
\label{pro:concentration_fidelity}
    Suppose $\caE$ forms a state 2-design on $\caH$ and $\psi \in \caP(\caH)$. Then
    \begin{align}
   \dfrac{6(d+1)}{d+2}\bPhi_3(\caE,\psi)\leq      \eta(\caE,\psi)&\le \lsp \dfrac{d+1}{\sqrt{(d+2)(d+3)}}\sqrt{48\bPhi_4(\caE,\psi)}\leq \sqrt{48\bPhi_4(\caE,\psi)},\label{eq:QnormLUB}\\
         \dfrac{6(d+1)}{d+2}\bPhi_3(\caE,\psi)-3-\frac{2}{d}\leq \left\|\psi_0\right\|_\caE^2&\le \dfrac{d+1}{\sqrt{(d+2)(d+3)}}\sqrt{48\bPhi_4(\caE,\psi)}-1+\frac{1}{d^2}\leq \sqrt{48\bPhi_4(\caE,\psi)}-1.\label{eq:EpsiShNormLUB}
    \end{align}
\end{proposition}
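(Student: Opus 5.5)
The plan is to bound $\eta(\caE,\psi)=\|J(\caE,\psi)\|$ from both sides using only the variational characterization of the operator norm of the positive semidefinite operator $J(\caE,\psi)$. Then \eref{eq:EpsiShNormLUB} will follow by substituting these bounds into \pref{pro:w}. Throughout, write $a_i\coloneqq\tr(\phi_i\psi)\in[0,1]$.

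\emph{Lower bound on $\eta$.} Since $J(\caE,\psi)\ge 0$, we have $\eta(\caE,\psi)\ge \bra{\psi}J(\caE,\psi)\ket{\psi}$. By definition,
\begin{equation*}
\bra{\psi}J(\caE,\psi)\ket{\psi}=d(d+1)^2\sum_i w_i a_i^3=d(d+1)^2\Phi_3(\caE,\psi).
\end{equation*}
Using $D_{[3]}=d(d+1)(d+2)/6$, this equals $\frac{6(d+1)}{d+2}\bPhi_3(\caE,\psi)$, which gives the left inequality in \eref{eq:QnormLUB}.

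\emph{Upper bound on $\eta$.} Write $\eta(\caE,\psi)=\max_{\rho\in\caP(\caH)}\tr[J(\caE,\psi)\rho]=\max_\rho d(d+1)^2\sum_i w_i a_i^2\tr(\phi_i\rho)$. Applying Cauchy--Schwarz with respect to the weights $w_i$ gives
\begin{equation*}
\eta(\caE,\psi)\le d(d+1)^2\Bigl(\sum_i w_i a_i^4\Bigr)^{1/2}\Bigl(\sum_i w_i[\tr(\phi_i\rho)]^2\Bigr)^{1/2}.
\end{equation*}
Because $\caE$ is a 2-design and $\rho$ is pure, the second factor equals $\Phi_2(\caE,\rho)=2/[d(d+1)]$. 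Hence
\begin{equation*}
\eta^2\le 2d(d+1)^3\Phi_4(\caE,\psi)=\frac{48(d+1)^2}{(d+2)(d+3)}\bPhi_4(\caE,\psi),
\end{equation*}
where the last step uses $D_{[4]}=d(d+1)(d+2)(d+3)/24$. This is the middle inequality in \eref{eq:QnormLUB}. The last inequality there is simply $(d+1)^2\le(d+2)(d+3)$.

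\emph{Shadow norm bounds.} Inserting the two bounds on $\eta$ into \pref{pro:w} immediately yields the first two inequalities of \eref{eq:EpsiShNormLUB}. The remaining step, which is the only place needing care, is to absorb the $+1/d^2$ term. We need
\begin{equation*}
\sqrt{48\bPhi_4(\caE,\psi)}\Bigl(1-\frac{d+1}{\sqrt{(d+2)(d+3)}}\Bigr)\ge\frac{1}{d^2}.
\end{equation*}
By \lref{lem:FPrelation}, $\bPhi_4(\caE,\psi)>1/3$, so $\sqrt{48\bPhi_4(\caE,\psi)}>4$. Since $\sqrt{(d+2)(d+3)}\ge d+2$, the bracket is at least $1/(d+2)$. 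Therefore the left-hand side exceeds $4/(d+2)$, and $4/(d+2)\ge 1/d^2$ for all $d\ge2$, which completes the argument.

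I expect the main obstacle to be finding the right form of the upper bound on $\eta$. A naive Hilbert--Schmidt bound $\|J\|\le\|J\|_2$ produces cross terms $a_i^2a_j^2\tr(\phi_i\phi_j)$ that are not controlled by $\bPhi_4$ alone. Pairing one factor $a_i^2$ against the 2-design moment of $\rho$ is the step that removes these cross terms.
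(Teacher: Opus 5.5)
Your proposal is correct and matches the paper's proof essentially step for step. The lower bound comes from testing $J(\caE,\psi)$ against $\rho=\psi$, and the upper bound from Cauchy--Schwarz pairing $[\tr(\phi_i\psi)]^2$ with the 2-design second moment $\Phi_2(\caE,\rho)=2/[d(d+1)]$; the shadow-norm bounds then follow from \pref{pro:w}, with $\bPhi_4(\caE,\psi)>1/3$ and $\sqrt{(d+2)(d+3)}\ge d+2$ absorbing the $1/d^2$ term, where your restriction to pure $\rho$ gives the same value as the paper's $\max_\rho\Phi_2(\caE,\rho)$.
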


\begin{proof}[Proof of \pref{pro:w}]
By assumption, $\caE$ is a state 2-design, so $\caM_\caE^{-1}(O)=(d+1)O$ whenever $O$ is traceless. By virtue of \eqref{eq:norm2} with $O=\psi_0=\psi-\bbone/d$, we deduce that
\begin{align}
   \left\|\psi_0\right\|_\caE^2&=\left\|d(d+1)^2\sum_iw_i\left[\tr\left (  \phi_{i}\psi_0\right )\right]^2\phi_{i}\right\| \nonumber\\&=\left\|d(d+1)^2\sum_iw_i\left[\tr\left (  \phi_i\psi\right )\right]^2\phi_i-2(d+1)^2\sum_iw_i\tr\left (  \phi_i\psi\right )\phi_i+\dfrac{(d+1)^2}{d}\sum_iw_i\phi_i\right\|
  \nonumber\\
  &=\left\|J(\caE,\psi)-\frac{2(d+1)}{d}\psi-\left(1-\frac{1}{d^2}\right)\bbone\right\|=\left\|J(\caE,\psi)-\frac{2(d+1)}{d}\psi\right\|-1+\frac{1}{d^2},
\label{eq:CRFE_norm_fidelity}
\end{align}
which confirms the equality in \eref{eq:w}. Here the third equality holds because $\sum_iw_i\phi_i={\bbone}/{d}$ and $\sum_iw_i\tr\left(\phi_i\psi\right)\phi_i={(\bbone+\psi)}/{[d(d+1)]}$, given that $\caE$ is a state 2-design. In addition, the operator $J(\caE,\psi)-2(d+1)\psi/d$ is positive semidefinite, which implies that 
\begin{equation}
\eta(\caE,\psi)-  \frac{2(d+1)}{d} \leq  \left\|J(\caE,\psi)-\frac{2(d+1)}{d}\psi\right\|\leq \eta(\caE,\psi).
\end{equation}
The above two equations together confirm \eref{eq:w} and complete the proof of \pref{pro:w}.
\end{proof}

\begin{proof}[Proof of \pref{pro:concentration_fidelity}]
The first inequality in \eref{eq:QnormLUB} can be proved as follows:
\begin{align}
    \eta(\caE,\psi)&=\max_{\rho}{d(d+1)^2}\sum_iw_i\tr(\phi_i\rho)\left[\tr(\phi_i \psi)\right]^2\ge {d(d+1)^2}\sum_iw_i\left[\tr(\phi_i \psi)\right]^3\nonumber\\
    &=d(d+1)^2\Phi_3(\caE,\psi)=\dfrac{6(d+1)}{d+2}\bPhi_3(\caE,\psi), \label{eq:w_lower}
\end{align}
where the three equalities hold by definition. By the Cauchy--Schwarz inequality, we further deduce that
\begin{align}
         \eta(\caE,\psi)&=\max_{\rho}{d(d+1)^2}\sum_iw_i\tr(\phi_i\rho)\left[\tr(\phi_i \psi)\right]^2 \le  \max_{\rho}d(d+1)^2\sqrt{\Phi_2(\caE,\rho)}\sqrt{\Phi_4(\caE,\psi)}\nonumber\\
    &=\sqrt{2}\lsp d^{1/2}(d+1)^{3/2}\sqrt{\Phi_4(\caE,\psi)}=\dfrac{(d+1)}{\sqrt{(d+2)(d+3)}}\sqrt{48\bPhi_4(\caE,\psi)}\le\sqrt{48\bPhi_4(\caE,\psi)},     \label{eq:w_upper}
    \end{align}
where the second equality holds because $\max_\rho \Phi_2(\caE,\rho)= 2/[d(d+1)]$, given that $\caE$ forms a state 2-design by assumption, and the third equality holds by definition. The above two equations together confirm \eref{eq:QnormLUB}.

Next, the first two inequalities in \eref{eq:EpsiShNormLUB} follow from \pref{pro:w} and \eref{eq:QnormLUB}. The last inequality in \eref{eq:EpsiShNormLUB} is a simple corollary of the following inequalities:
\begin{align}
   &\left(1- \dfrac{d+1}{\sqrt{(d+2)(d+3)}}\right)\sqrt{48\bPhi_4(\caE,\psi)}\ge \dfrac{1}{d+2}\sqrt{48\bPhi_4(\caE,\psi)}\ge\dfrac{4}{d+2}> \dfrac{1}{d^2},
\end{align}
where the second inequality holds because $\bPhi_4(\caE,\psi)\geq 1/3$ by \lref{lem:FPrelation}. This completes the proof of \pref{pro:concentration_fidelity}.
\end{proof}

\subsection{Proofs of \pref{pro:ShNormFidLD} and \pref{pro:AShNormFidUB}}
\begin{proof}[Proof of \pref{pro:ShNormFidLD}]
\Eref{eq:ShNormFidUB} is a simple corollary of \thref{thm:AvgShNormUB} given that $\|\psi_0\|_2^2=(d-1)/d$. To prove \eref{eq:ShNormFidLD}, note that $\left\|\psi_0\right\|_\caE^2\le\sqrt{48\bPhi_4(\caE,\psi)}-1$ for all $\psi\in \caP(\caH)$ by \pref{pro:concentration_fidelity}.
If $ \left\|\psi_0 \right\|_\caE^2\ge \sqrt{48[1+k\xi(\caE)]}-1$ for some $k>0$, then
\begin{align}
\bPhi_4(\caE,\psi)-\bbE \bPhi_4(\caE,\psi) =  \bPhi_4(\caE,\psi)-1\geq k \xi(\caE)>k \sqrt{(d+6)^3\Var(\bPhi_4(\caE,\psi))},
\end{align}
where the last inequality follows from \lref{lem:4th_poten}. Therefore, by Chebyshev's inequality, we have
\begin{align}
    \Pr \left\{ \left\|\psi_0 \right\|_\caE^2 \ge \sqrt{48[1+k\xi(\caE)]}-1\right\} \le \Pr \left\{\bPhi_4(\caE,\psi)-1\ge k \xi(\caE)\right\}
\le\frac{1}{(d+6)^3k^2}\le \frac{1}{d^3k^2}\quad \forall k>0,
\end{align}
which confirms \eref{eq:ShNormFidLD}. Finally, the inequality in \eref{eq:Xi} holds because $\bPhi_3(\caE) \le (d + 2)(d^2 + 2d - 1)/(6d^2)$ by \pref{pro:FP3UB}, which completes the proof of \pref{pro:ShNormFidLD}.
\end{proof}

\begin{proof}[Proof of \pref{pro:AShNormFidUB}]
To prove \eref{eq:AShNormRP} in \pref{pro:AShNormFidUB}, without loss of generality, we can assume that $\psi\sim\caT$, where $\caT$ is the URP ensemble with respect to the computational basis.
By \pref{pro:concentration_fidelity} and \lref{lem:RP_URP}, we can deduce that
\begin{align}
     \underset{\psi\sim\caT}{\bbE}\left\|\psi_0\right\|_\caE^2\le \underset{\psi\sim\caT}{\bbE}\sqrt{48\bPhi_4(\caE,\psi)}-1\le \sqrt{\underset{\psi\sim\caT}{\bbE}48\bPhi_4(\caE,\psi)}-1\le24\sqrt{\frac{2D_{[4]}}{d^4}}=\sqrt{\frac{48(d+1)(d+2)(d+3)}{d^3}},
\end{align}
which confirms \eref{eq:AShNormRP}.

Next, we turn to  \eref{eq:AShNormFidUBCli}, assuming that $d=2^n$. By \pref{pro:concentration_fidelity} and \lref{lem:4th_poten_Cliff}, we can deduce that
\begin{align}
\underset{U\sim \Cl(n)}{ \bbE}\left\|U\psi_0U^\dagger\right\|_\caE^2&\le \sqrt{2}\lsp d^{1/2}(d+1)^{3/2}\underset{U\sim \Cl(n)}{\bbE}\sqrt{\Phi_4(\caE,U\psi U^{\dagger})}-1+\frac{1}{d^2}\nonumber\\&\le \sqrt{2}\lsp d^{1/2}(d+1)^{3/2}\sqrt{\underset{U\sim \Cl(n)}{\bbE}\Phi_4(\caE,U\psi U^{\dagger})}-1+\frac{1}{d^2} \nonumber\\
&\leq \sqrt{2}\lsp d^{1/2}(d+1)^{3/2}\sqrt{\frac{5(d+3)}{4(d+4)D_{[4]}}}-1+\frac{1}{d^2}\nonumber\\
&=\sqrt{\frac{60(d+1)^2}{(d+2)(d+4)}}-1+\frac{1}{d^2}\le \frac{2\sqrt{15}\lsp(d+1)}{d+2}-1+\frac{1}{d^2}\le 2\sqrt{15}-1,
    \end{align}
which confirms \eref{eq:AShNormFidUBCli} and completes the proof of \pref{pro:ShNormFidLD}.
\end{proof}

\section{Further numerical results on mean squared shadow norms}
\label{sup:numerics}

In this appendix, we present additional numerical results on the mean squared shadow norms for several families of observables and state ensembles.

\subsection{Shadow norms for fidelity estimation of Haar-random states}
\label{subsec:haar_numerics}

\begin{figure}[b]
    \centering
    \includegraphics[width=0.55\textwidth]{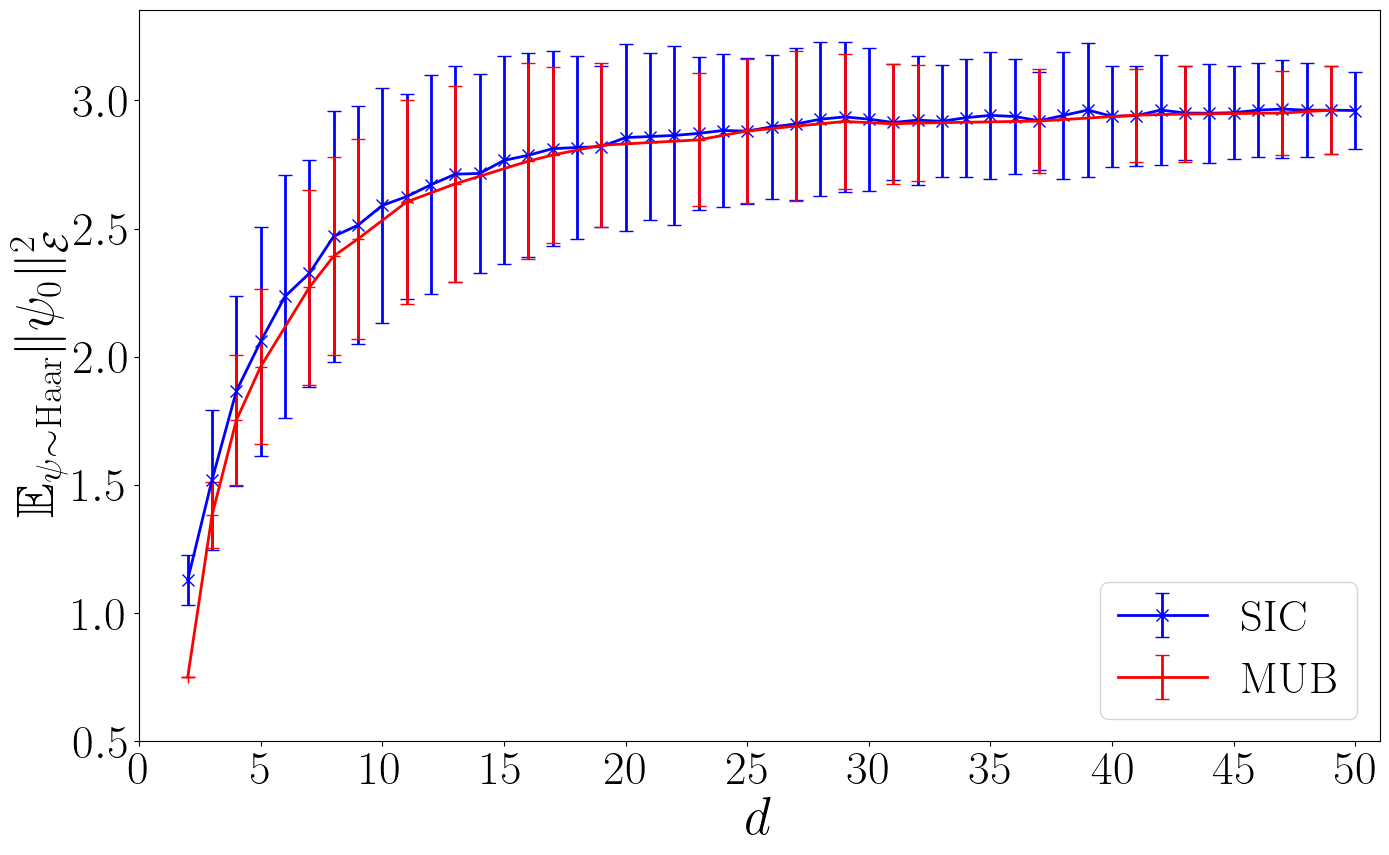}
    \vspace{-8pt}
    \caption{Mean squared shadow norms $\bbE_{\psi\sim\Ha}\|\psi_0\|_{\caE}^2$ for fidelity estimation of Haar-random pure states with SIC-POVMs (blue) and MUBs (red). Error bars denote the standard deviation over 20\,000 random pure states.}
    \label{fig:concentration1}
\end{figure}

Here we provide additional results on shadow norms for fidelity estimation based on SIC-POVMs and complete sets of MUBs (available for prime-power dimensions). As in the main text, SIC-POVMs are generated by the Heisenberg--Weyl group from fiducial states listed in Ref.~\cite{Scott10}; complete sets of MUBs are constructed according to Refs.~\cite{Wootters89,DURT10}. To complement \fref{fig:AShNormFid} in the main text, \fref{fig:concentration1} provides a refined comparison between SIC-POVMs and MUBs for Haar-random pure states, with error bars quantifying statistical fluctuations. Both measurement schemes exhibit convergence of the squared shadow norms toward their ensemble averages with increasing $d$, indicating suppressed variability in higher dimensions. MUBs yield slightly narrower error bars at small $d$, while the dispersion becomes comparable to that of SIC-POVMs at large $d$, implying asymptotically similar performance.

\subsection{Shadow norms for fidelity estimation of states in Clifford orbits}
\label{subsec:clifford_numerics}

Next, we consider fidelity estimation of states in orbits of the $n$-qubit Clifford group using SIC-POVMs. \Fref{fig:CliffSIC} displays the ensemble averages $\bbE_{U\sim\Cl(n)}\bigl\|U\psi_0 U^{\dagger}\bigr\|_{\caE}^2$ for the stabilizer orbit and a random Clifford orbit. The mean values for the two orbits nearly coincide, especially for $n\geq 3$. SIC-POVMs are generated by the Heisenberg--Weyl group from fiducial states listed in \rcite{Scott10}; for the special case $n=7$ ($d=128$), in which no fiducial is tabulated in that reference, we adopt the online database linked in Ref.~\cite{Fuchs17}.

\begin{figure}[b]
    \centering
    \includegraphics[width=0.55\textwidth]{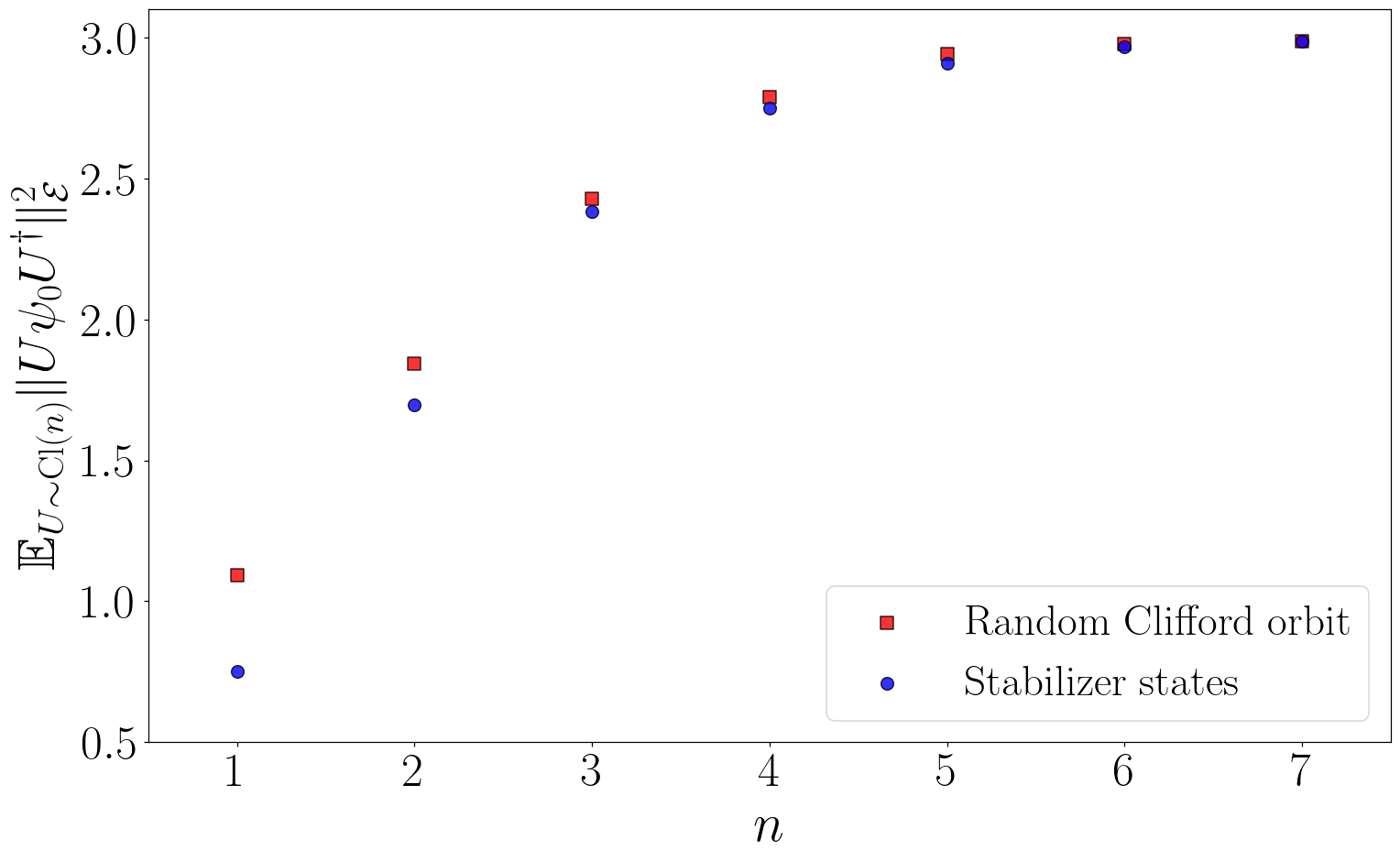}
    \vspace{-6pt}
    \caption{Mean squared shadow norm $\bbE_{U\sim\Cl(n)}\bigl\|U\psi_0 U^{\dagger}\bigr\|_{\caE}^2$ for fidelity estimation with SIC-POVMs. Results on the stabilizer orbit (blue circles) and a random Clifford orbit (red squares) nearly coincide, especially for $n\geq 5$. For each $n$, 2000 random Clifford unitaries are sampled.}
    \label{fig:CliffSIC}
\vspace{4ex}

\includegraphics[width=0.55\textwidth]{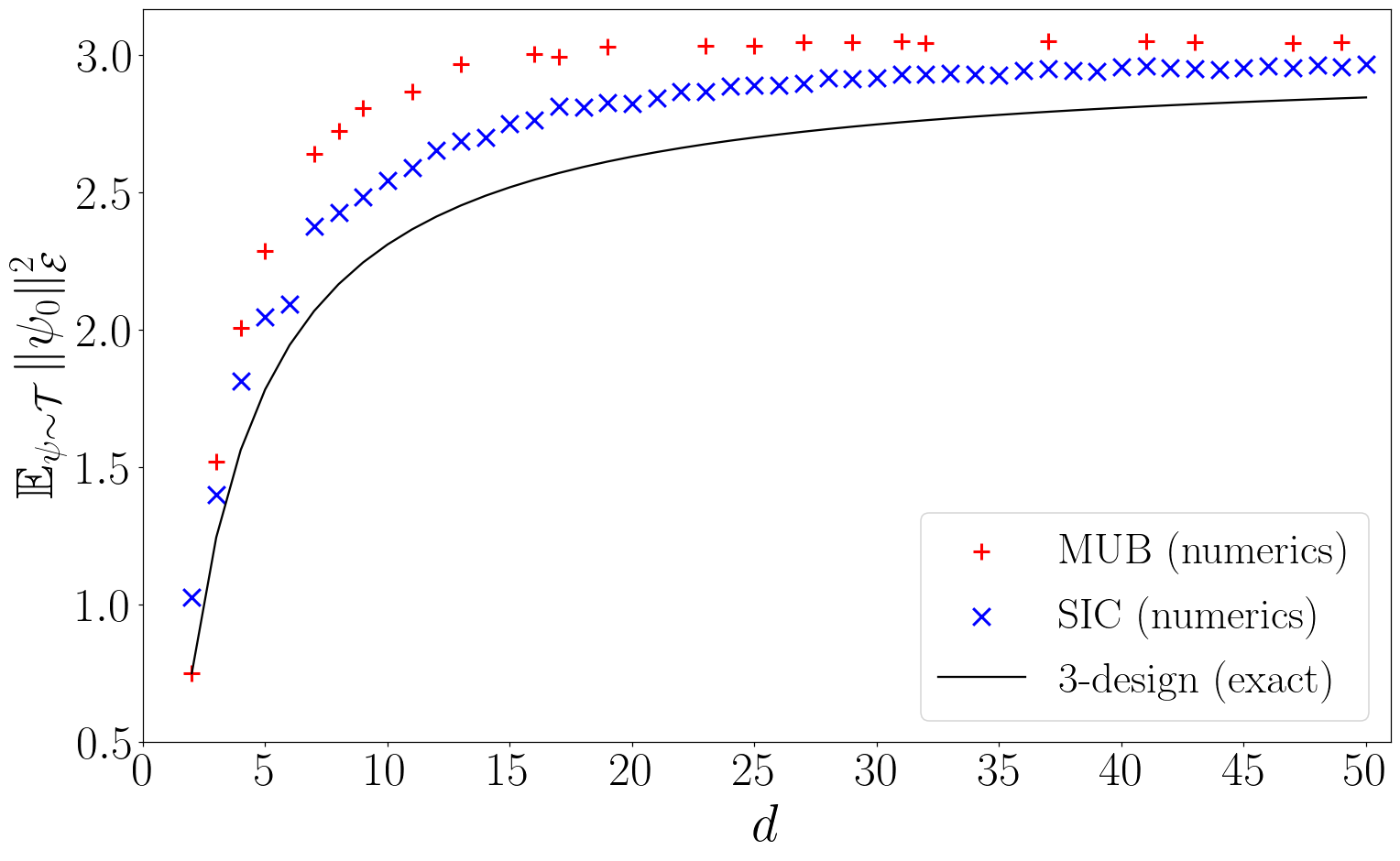}
    \vspace{-6pt}
    \caption{Mean squared shadow norm $\bbE_{\psi\sim\caT}\bigl\|\psi_0\bigr\|_{\caE}^2$ over the URP ensemble for fidelity estimation with SIC-POVMs and MUBs. The result achieved by an exact 3-design is also shown as a benchmark.}
    \label{fig:avg_URP}
\end{figure}

\subsection{Shadow norms for fidelity estimation of uniform random phase states}
\label{subsec:urp_numerics}

Next, we consider the ensemble $\caT$ of uniform random phase (URP) states defined over the computational basis. \Fref{fig:avg_URP} displays the mean squared shadow norms $\bbE_{\psi\sim\caT}\bigl\|\psi_0\bigr\|_{\caE}^2$ for both SIC-POVMs and MUBs (one of the bases coincides with the computational basis). Both schemes exhibit similar scaling behavior, with SIC-POVMs maintaining a slight advantage across all dimensions shown except for $d=2$.

\subsection{Shadow norms of observables from the Gaussian unitary ensemble}
\label{subsec:gue_numerics}

\begin{figure}[b]
    \centering
    \includegraphics[width=0.55\textwidth]{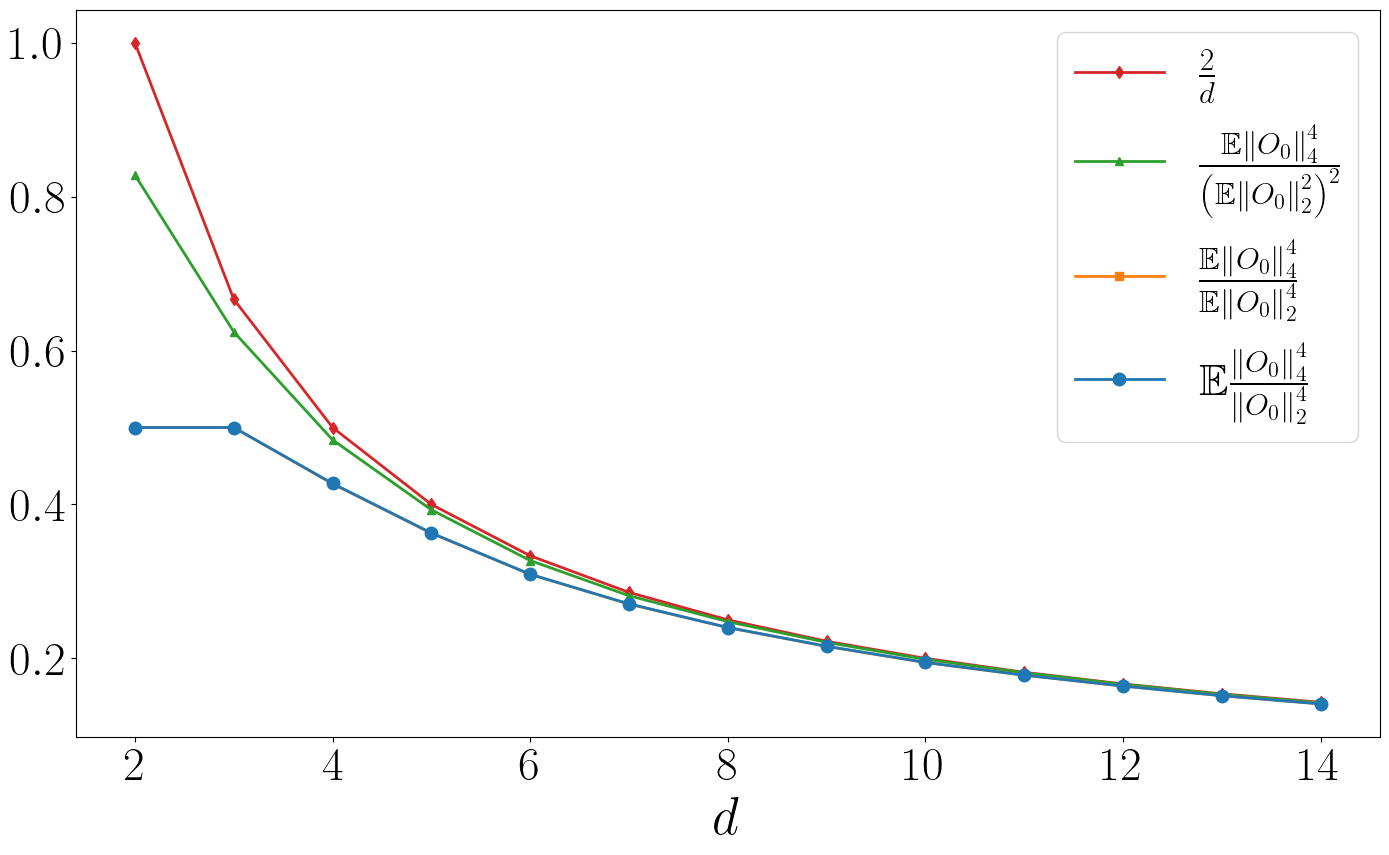}
    \vspace{-4pt}
    \caption{Comparison of $\bbE\bigl[\|O_0\|_4^4/\|O_0\|_2^4\bigr]$, $\bbE\lsp\|O_0\|_4^4\big/\bigl(\bbE\lsp\|O_0\|_2^2\bigr)^2$, and $\bbE\lsp\|O_0\|_4^4\big/\bbE\lsp\|O_0\|_2^4$ for GUE observables. For each dimension, 2000 observables are sampled. The curve $2/d$ is shown as a benchmark.}
    \label{fig:m_0}
    \vspace{4ex}
    \includegraphics[width=0.55\textwidth]{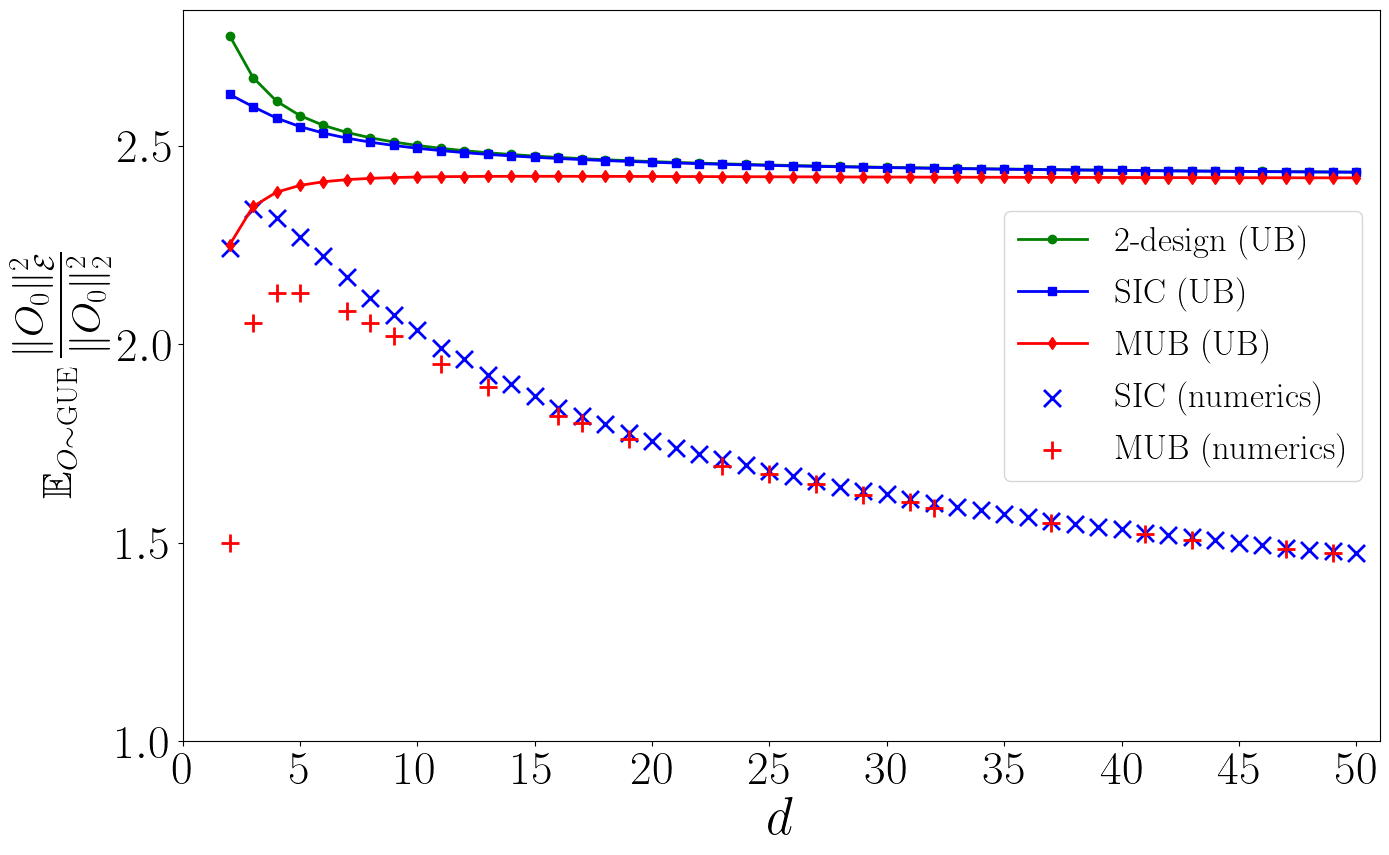}
    \vspace{-7pt}
    \caption{Mean squared shadow norms over normalized traceless GUE observables versus dimension $d$. Scatter points show numerical results over 2000 GUE samples for SIC-POVMs and MUBs (the latter restricted to prime-power dimensions). Lines with markers show analytical upper bounds from \pref{pro:avg_norm_Gauss} with $r=2/d$  for the worst-case 2-design (\pref{pro:FP3UB}), SIC-POVMs, and MUBs.}
    \label{fig:avg_fi_Gauss}
\end{figure}

Finally, we extend the analysis beyond pure-state observables to Hermitian operators sampled from the Gaussian unitary ensemble (GUE). For a state 2-design $\caE$, we introduce the normalized mean squared shadow norm $\bbE_{O\sim\mathrm{GUE}}\bigl[\|O_0\|_{\caE}^2/\|O_0\|_2^2\bigr]$, where  $O_0 = O - \tr(O)\,\bbone/d$ and the expectation is taken over the GUE. The following result parallels \pref{pro:AvgShNormTUB}.

\begin{proposition}\label{pro:avg_norm_Gauss}
Suppose $\caE$ is a state $2$-design on $\caH$. Then
\begin{align}\label{eq:avg_norm_Gauss_bound}
    \bbE_{O\sim\mathrm{GUE}}\frac{\|O_0\|_{\caE}^2}{\|O_0\|_2^2} \le \frac{d+1}{d} + \sqrt{f(d,r)\,\bPhi_3(\caE) + g(d,r)}\,,
\end{align}
where the expectation is well defined since $\|O_0\|_2 > 0$ almost surely under the GUE,
\begin{align}\label{eq:r_def}
    r \coloneqq \bbE_{O\sim\mathrm{GUE}}\frac{\|O_0\|_4^{4}}{\|O_0\|_2^{4}}\,,
\end{align}
and $f(d,r)$, $g(d,r)$ are defined in \eref{eq:func} of \pref{pro:AvgShNormTUB}.
\end{proposition}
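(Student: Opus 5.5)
The plan is to reduce the GUE average to the unitary-orbit average of \pref{pro:AvgShNormTUB}, using the unitary invariance of the GUE. Write $O_0 = U D U^\dagger$, where $D$ is diagonal with traceless real spectrum. Under the GUE the eigenvector matrix $U$ is Haar distributed and independent of the spectrum. Both $\|O_0\|_\caE^2/\|O_0\|_2^2$ and $r(O_0)=\|O_0\|_4^4/\|O_0\|_2^4$ are invariant under rescaling of $O_0$. Moreover, $\|O_0\|_2$ and $\|O_0\|_4$ depend only on $D$.

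\emph{Step 1: condition on the spectrum.} Apply the Haar average over $U$ first, and invoke the intermediate bound in the proof of \pref{pro:AvgShNormTUB}, namely \eref{eq:avg_G_up}. Since the Haar measure is a unitary 4-design, this gives for each fixed $D$
\begin{equation*}
\bbE_{U}\frac{\|UDU^\dagger\|_\caE^2}{\|D\|_2^2}\le \frac{d+1}{d}+\sqrt{f(d,r_D)\bPhi_3(\caE)+g(d,r_D)},
\end{equation*}
with $r_D=\|D\|_4^4/\|D\|_2^4$.

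\emph{Step 2: average over the spectrum.} Take the expectation over the GUE spectrum. The functions $f(d,r)$ and $g(d,r)$ are affine in $r$, and $\bPhi_3(\caE)\ge 0$, so the radicand $F(r)=f(d,r)\bPhi_3(\caE)+g(d,r)$ is affine in $r$. The map $r\mapsto \sqrt{F(r)}$ is therefore concave wherever $F\ge 0$. Jensen's inequality then gives $\bbE\sqrt{F(r_D)}\le\sqrt{F(\bbE r_D)}=\sqrt{F(r)}$, where $r$ is defined in \eref{eq:r_def}. Combining this with Step 1 yields \eref{eq:avg_norm_Gauss_bound}.

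\emph{Main obstacle: nonnegativity of the radicand.} Concavity of the square root of an affine function needs $F\ge0$ on the range of $r_D$, which is the interval $[1/d,(d-1)/d]$ by \lref{lem:4NormO2Norm}. I would argue this directly. The quantity inside the square root in \eref{eq:AvgShNormTUBProof1} equals $d(d-1)\,\bbE_U\|A\|_2^2$ minus a term. By the Cauchy--Schwarz bound $\|A\|_2^2\ge\|A\|_1^2/d$, this quantity is nonnegative for every $D$. Hence $F(r_D)\ge0$ for every realizable spectrum, and $F$ is also nonnegative at the average $\bbE r_D$, since that average lies in the convex hull of the realizable values and $F$ is affine.

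Finally, the expectations are well defined because $\|O_0\|_2>0$ almost surely and the ratios are bounded; the ratio is at most $d+1$ by the 2-design worst-case bound.
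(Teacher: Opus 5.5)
Your proof is correct. You condition on the spectrum (equivalently, you use unitary invariance of the traceless part $O_0$), apply \pref{pro:AvgShNormTUB} with the Haar measure, and then use Jensen's inequality; this works because $f(d,r)\bPhi_3(\caE)+g(d,r)$ is affine in $r$ and is nonnegative for every realizable spectrum, as you rightly derive from $\|A\|_2^2\ge\|A\|_1^2/d$. The result is exactly \eref{eq:avg_norm_Gauss_bound}, with $r$ the GUE mean of the ratio. The paper gives no separate proof and only says the result parallels \pref{pro:AvgShNormTUB}; your reduction is that parallel argument, except that you apply Jensen in two stages (over $U$, then over the spectrum) instead of once over the whole GUE, which gives the same bound.
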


Numerical results in \fref{fig:m_0} suggest that the mean moment ratio $r$ converges to $2/d$ as $d$ increases, with negligible deviation for $d\geq 10$.  \Fref{fig:avg_fi_Gauss} presents mean squared shadow norms over normalized traceless GUE observables for SIC-POVMs and MUBs. As benchmarks, the figure also displays the analytical upper bounds from \pref{pro:avg_norm_Gauss} evaluated at $r=2/d$ for the worse 2-design (\pref{pro:FP3UB}) as well as SIC-POVMs and MUBs.

\end{document}